






\documentclass[twocolumn]{autart}    

\usepackage{graphicx}
\usepackage{amsmath,amsfonts,amssymb,color}
\usepackage{tikz}
\usepackage{pmat}
\usepackage{empheq}

\usepackage{epsfig} 
\usepackage{algorithm}
\usepackage{algorithmic}
\usepackage{epsfig}
\usepackage{array}
\usepackage{multirow}
\usepackage{epstopdf}
\usepackage{tikz}
\usetikzlibrary{shapes,snakes,arrows}
\usepackage{cite}
\usepackage[hidelinks]{hyperref}
\makeatletter
\DeclareRobustCommand{\qed}{%
  \ifmmode 
  \else \leavevmode\unskip\penalty9999 \hbox{}\nobreak\hfill
  \fi
  \quad\hbox{\qedsymbol}}
\newcommand{\openbox}{\leavevmode
  \hbox to.77778em{%
  \hfil\vrule
  \vbox to.675em{\hrule width.6em\vfil\hrule}%
  \vrule\hfil}}
\newcommand{\qedsymbol}{\openbox}
\newenvironment{proof}[1][\proofname]{\par
  \normalfont
  \topsep6\p@\@plus6\p@ \trivlist
  \item[\hskip\labelsep\itshape
    #1.]\ignorespaces
}{%
  \qed\endtrivlist
}
\newcommand{\proofname}{Proof}
\makeatother

\begin{document}
\newtheorem{theorem}{Theorem}
\newtheorem{problem}{Problem}
\newtheorem{condition}{Condition}
\newtheorem{definition}{Definition}
\newtheorem{lemma}{Lemma}
\newtheorem{proposition}{Proposition}
\newtheorem{corollary}{Corollary}
\newtheorem{remark}{Remark}
\newtheorem{assumption}{Assumption}
\newtheorem{example}{Example}

\begin{frontmatter}

\title{Byzantine-Resilient Distributed Observers for LTI Systems \thanksref{footnoteinfo}} 

\thanks[footnoteinfo]{This paper was not presented at any IFAC 
meeting. This research was supported in part by NSF CAREER award 1653648. Corresponding author: Aritra Mitra. Telephone of the corresponding author: 1-(765)-496-0406.}

\author[auth]{Aritra~Mitra and Shreyas Sundaram} \ead{mitra14@purdue.edu} \ead{sundara2@purdue.edu}  
\address[auth]{School of Electrical and Computer Engineering at Purdue University, West Lafayette, IN 47907, USA}
         
\begin{keyword}                           Resilient state estimation; distributed estimation; Byzantine attacks.               
\end{keyword}                             

\begin{abstract}                         
Consider a linear time-invariant (LTI) dynamical system monitored by a network of sensors, modeled as nodes of an underlying directed communication graph. We study the problem of collaboratively estimating the state of the system when certain nodes are compromised by adversaries. Specifically, we consider a Byzantine adversary model, where a
compromised node possesses complete knowledge of the system dynamics and the network, and can deviate arbitrarily
from the rules of any prescribed algorithm. We first characterize certain fundamental limitations of any distributed state estimation algorithm in terms of the measurement and communication structure of the nodes. We then develop an attack-resilient, provably correct state estimation algorithm that admits a fully distributed implementation. To characterize feasible network topologies that guarantee success of our proposed technique, we introduce a notion of `strong-robustness' that captures both measurement and communication redundancy. Finally, by drawing connections to bootstrap percolation theory, we argue that given an LTI system and an associated sensor network, the `strong-robustness' property can be checked in polynomial time.
\end{abstract}

\end{frontmatter}

 \section{Introduction}
The control of large-scale complex networked systems such as power grids, transportation networks, and multi-agent robotic systems requires precise estimation of the state of the underlying dynamical process. Typically, in these applications, sensors (nodes) collecting information about the process are scattered over a geographical region. As the diameters of such networks increase, routing  information from all the sensors to a central computational resource induces large delays and creates communication bottlenecks. To bypass these difficulties, it thus becomes important to consider distributed algorithms where individual sensors communicate only with sensors within a given distance. However, the potential merits \cite{survey1} of such a distributed approach are matched by various challenges. In particular, a key challenge is to design networks and distributed algorithms that guarantee reliable operation of the system in the face of faults or sophisticated adversarial attacks on certain sensors. This leads to the motivation behind our present work. 

In a classical distributed state estimation setup, each node receives partial measurements of the state of an LTI process, and seeks to asymptotically estimate the entire state by exchanging information with its neighbors in the network. Given that our primary focus will be on security related issues associated with this problem, we direct the interested reader to recent work on single-time-scale distributed observers in {\cite{martins3,allerton,mitraarchive
,wang,han,kim,Millan,ren}}. For literature on distributed Kalman filtering, see {\cite{infinite1,olfati1,Baras,batti1
,batti2,kamal}}. However, none of these papers address the challenges associated with tolerating unreliable components in the network. Accordingly, we now provide a survey of the cyber-security literature that is most relevant to our present cause.

\textbf{Related Work:} Over the last decade, a significant amount of research has focused on security in networked control systems. In particular, for noiseless dynamical systems, it has been established that zero-dynamics play a key role in characterizing the stealth of an attack \cite{fabio,sundaramtac}. For networked control systems affected by noise, the authors in \cite{bai1} recently introduced an information-theoretic metric that quantifies the detectability of an attack. A unifying feature of \cite{fabio,bai1}, and the ones in {\cite{teixeira,pajic,mishra,mo}}, is that they involve systems where all the sensor measurements are available at a single location. In the sequel, we shall refer to such systems as centralized control systems. Our problem formulation and subsequent analysis differs from the above literature by constraining each sensor to exchange information with only its neighbors in the communication graph. {Some recent related work on resilient distributed parameter estimation and resilient decentralized hypothesis testing are reported in \cite{yuan} and \cite{varshney}, respectively. The authors in \cite{forti} consider the problem of joint attack detection and state estimation. However, the attack model, the system model, and the assumptions on the communication graph in \cite{forti} differ considerably from the ones considered in this paper.}

While the study of security in centralized control systems is now mature, there lacks a comprehensive theoretical understanding of analogous questions in a distributed setting. Preliminary attempts to counter adversarial behavior in a distributed state estimation context are reported in \cite{sec1},\cite{sec3}. However, unlike our results, these papers neither provide any theoretical guarantees of success, nor allude to graph-theoretic conditions that are necessary for their respective algorithms to work. Recently, in \cite{deghat}, the authors employ an $H_{\infty}$ based approach for detecting biasing attacks in distributed estimation networks. Our present work deviates from \cite{deghat} in several aspects, namely (i) while the analysis in \cite{deghat} is limited to a certain class of attack inputs, our attack model allows compromised nodes to behave \textit{arbitrarily}, i.e., no restrictions are placed on the inputs that can be injected by an adversary, (ii) unlike \cite{deghat}, we develop a filtering algorithm that allows each uncompromised node to asymptotically recover the state of the plant \textit{without} explicitly detecting the nodes under attack, and (iii) the existence of the attack detection filter proposed in \cite{deghat} relies on solving an LMI; however, the authors neither provide graph-theoretic insights regarding the solvability of such an LMI nor discuss whether the LMI can be solved in a distributed manner.  In contrast, we detail graph-theoretic conditions that allow each step of our approach to have a resilient, distributed implementation. {Summing up}, this paper attempts to bridge the gap between centralized and distributed resilient state estimation. Our main contributions are discussed below.

\textbf{Contributions:} Our contributions are threefold. First, in Section \ref{sec:fundamental}, we characterize certain necessary conditions that need to be  satisfied by the sensor measurements and the communication graph for the distributed state estimation problem to be solvable in the presence of arbitrary adversarial behavior. Our results hold for \textit{any} algorithm and hence identify fundamental limitations that are of both theoretical and practical importance in the design of attack-resilient robust networks. We also argue that our impossibility results in the distributed setting generalize those existing for centralized control systems subject to sensor attacks \cite{fawzi,joao2}. 

For the problem under consideration, it is imperative to understand which (potentially adversarial) neighbors a given node should listen to, and subsequently, how it should process the information received from neighbors it chooses to listen to. Consequently, our second contribution is to develop a distributed filtering algorithm in Section \ref{section:estimation} that enables each uncompromised node to recover the entire state dynamics, provided certain graph conditions are met. A thorough analysis of the proposed filtering scheme is then presented in Section \ref{sec:analysis}. 

As our third contribution, in Section \ref{sec:feasible}, we introduce a topological property called `strong-robustness' to characterize feasible systems and networks that guarantee applicability of our approach. By drawing connections to bootstrap percolation theory, we show that the `strong-robustness' property can be checked in polynomial time (in the size of the system and the network). 

{\textbf{Comparison with prior work by the authors:} We reported certain preliminary results in \cite{mitraCDC}. In this paper, we significantly expand upon our prior work in the following ways. (i) While the analysis in \cite{mitraCDC} was limited to system matrices with real, distinct eigenvalues, our present framework allows the system matrix to have arbitrary spectrum. This generalization (accounting for complex, possibly repeated eigenvalues) requires various appropriate modifications to the algorithm developed in \cite{mitraCDC}, along with a more detailed technical analysis. (ii) Section \ref{sec:fundamental} is a new addition entirely, and discusses fundamental limitations of any distributed state estimation algorithm in the face of arbitrary adversarial attacks. (iii) Section \ref{sec:feasible} contains additional details about properties of feasible network topologies, and establishes the key result that the topological condition (namely `strong-robustness') needed for implementing our proposed algorithm can be checked in polynomial time. The latter result (missing in \cite{mitraCDC}) is particularly important since it highlights the applicability of our overall approach.}

 \textbf{Notation:} A directed graph is denoted by $\mathcal{G} =(\mathcal{V},\mathcal{E})$, where $\mathcal{V} =\{1, \cdots, N\}$ is the set of nodes and $\mathcal{E} \subseteq \mathcal{V} \times \mathcal{V} $ represents the edges. An edge from node $j$ to node $i$, denoted by (${j,i}$), implies that node $j$ can transmit information to node $i$. The neighborhood of the $i$-th node is defined as $\mathcal{N}_i \triangleq \{j\,|\,(j,i) \in \mathcal{E} \}.$ A node $j$ is said to be an {out-neighbor} of node $i$ if $(i,j)\in\mathcal{E}$. By an \textit{induced} subgraph of $\mathcal{G}$ obtained by removing certain nodes $\mathcal{C} \subset \mathcal{V}$, we refer to the subgraph that has $\mathcal{V}\setminus\mathcal{C}$ as its node set and contains only those edges of $\mathcal{E}$ with both end points in $\mathcal{V}\setminus\mathcal{C}$. The notation $|\mathcal{V}|$ is used to denote the cardinality of a set $\mathcal{V}$. The set of all eigenvalues (or modes) of a matrix $\mathbf{A}$ is denoted by $sp(\mathbf{A}) = \{\lambda \in \mathbb{C}\,|\,det(\mathbf{A}-\lambda\mathbf{I}) = 0\}$, and the set of all unstable eigenvalues by $\Lambda_{U}(\mathbf{A}) = \{\lambda \in sp(\mathbf{A})\,|\, |\lambda| \geq 1 \}$. We use $a_\mathbf{A}(\lambda)$ and $g_\mathbf{A}(\lambda)$ to denote the algebraic and geometric multiplicities, respectively, of an eigenvalue $\lambda \in sp(\mathbf{A})$. An eigenvalue $\lambda$ is said to be simple if $a_\mathbf{A}(\lambda)=g_\mathbf{A}(\lambda)=1$. Given a set of matrices $\{\mathbf{A}_1, \cdots, \mathbf{A}_n\}$, we use $diag(\mathbf{A}_1, \cdots, \mathbf{A}_n)$ to refer to a block diagonal matrix with $\mathbf{A}_i$ as its $i$-th block entry. For a set $\mathcal{J}=\{m_1, \cdots, m_{|\mathcal{J}|}\} \subseteq \{1, \cdots, N\}$, and a matrix $\mathbf{C}={\begin{bmatrix}\mathbf{C}^T_{1} \hspace{1.5mm} & \cdots & \hspace{1.5mm} \mathbf{C}^T_{N}\end{bmatrix}}^{T}$, we define $\mathbf{C}_{\mathcal{J}} \triangleq {\begin{bmatrix}\mathbf{C}^T_{m_1} \hspace{1mm} & \cdots & \hspace{1mm}\mathbf{C}^T_{m_{|\mathcal{J}|}}\end{bmatrix}}^{T}$. The identity matrix of dimension $r$ is denoted $\mathbf{I}_r$, and $\mathbb{N}_{+}$ is used to refer to the set of all positive integers. The terms `communication graph' and `network' are used interchangeably, and the  term `resilient' is used in the same context as that used traditionally in the computer science literature to deal with worst-case adversarial attack models \cite{Byz}.
 
\section{System and Attack Model}
\textbf{System Model:} Consider the LTI dynamical system
\begin{equation}
\mathbf{x}[k+1] = \mathbf{Ax}[k],
\label{eqn:plant}
\end{equation}
where $k \in \mathbb{N}$ is the discrete-time index, $\mathbf{x}[k] \in {\mathbb{R}}^n$ is the state vector and  $\mathbf{A} \in {\mathbb{R}}^{ n \times n} $ is the system matrix. The system is monitored by a network $\mathcal{G}=(\mathcal{V,E})$ consisting of $N$ nodes. The $i$-th node receives a measurement of the state, given by
\begin{equation}
\mathbf{y}_{i}[k]=\mathbf{C}_i\mathbf{x}[k],
\label{eqn:Obsmodel}
\end{equation}
where $\mathbf{y}_{i}[k] \in {\mathbb{R}}^{r_i}$ and $\mathbf{C}_i \in {\mathbb{R}}^{r_i \times n}$. We use $\mathbf{C}={\begin{bmatrix}\mathbf{C}^T_{1} \hspace{1.5mm} & \cdots & \hspace{1.5mm} \mathbf{C}^T_{N}\end{bmatrix}}^{T}$ to represent the collection of the individual node observation matrices; accordingly, $\mathbf{y}[k]={\begin{bmatrix}\mathbf{y}^T_{1}[k] \hspace{1.5mm} & \cdots & \hspace{1.5mm} \mathbf{y}^T_{N}[k]\end{bmatrix}}^T$ represents the {collective} measurement vector, i.e., $\mathbf{y}[k]=\mathbf{C}\mathbf{x}[k]$.

Each node is tasked with estimating the entire system state $\mathbf{x}[k]$ based on information received from its neighbors and its local measurements (if any). As such, we assume that the pair $(\mathbf{A},\mathbf{C})$ is detectable (this is a necessary condition for solving the distributed state estimation problem even in the absence of adversaries); however, we do not assume that the pair $(\mathbf{A,C}_i)$ is detectable for any $i \in \mathcal{V}$. Two immediate challenges are as follows: (i) As the pair $(\mathbf{A},\mathbf{C}_i)$ may not be detectable for some (or all) $i\in \{1, \cdots, N\}$, information exchange is necessary; and (ii) information exchange is restricted by the underlying communication graph $\mathcal{G}$. 
In addition to the above challenges, in this paper, we allow for the possibility that certain nodes in the network are compromised by an adversary, and \textit{do not} follow their prescribed state estimate update rule. We will use the following adversary model in this paper.

\textbf{Adversary Model:} We consider a subset $\mathcal{A} \subset \mathcal{V}$ of the nodes in the network to be adversarial. We assume that the adversarial nodes are completely aware of the network topology, the system dynamics and the algorithm employed by the non-adversarial nodes. Such an assumption of omniscient adversarial behavior is standard in the literature on resilient distributed algorithms \cite{vaidyacons,rescons,
Sundaramopt,su,flocal1,flocal2}, and allows us to provide guarantees against ``worst-case" adversarial behavior. In terms of capabilities, an adversarial node can leverage the aforementioned information to arbitrarily deviate from the rules of any prescribed algorithm, while colluding with other adversaries in the process. Furthermore, following the Byzantine fault model\cite{Byz}, adversaries are allowed to send differing state estimates to different neighbors at the same instant of time. To characterize the threat model in terms of the number of adversaries in the network, we will use the following definitions from \cite{flocal1},\cite{flocal2}.

\begin{definition} (\textbf{$f$-total set})
A set $\mathcal{C} \subset \mathcal{V}$ is \textit{$f$-total} if it contains at most $f$ nodes in the network, i.e., $| \mathcal{C}| \leq f$.
\end{definition}
\begin{definition} (\textbf{$f$-local set})
A set $\mathcal{C} \subset \mathcal{V}$ is \textit{$f$-local} if it contains at most $f$ nodes in the neighborhood of the other nodes, i.e., $|\mathcal{N}_i \cap \mathcal{C}| \leq f,$ $\forall i \in \mathcal{V}\setminus \mathcal{C}$.
\end{definition}
\begin{definition} (\textbf{$f$-local and $f$-total adversarial models}) A set $\mathcal{A}$ of adversarial nodes is \textit{$f$-locally bounded} (resp., $f$-totally bounded) if $\mathcal{A}$ is an $f$-local (resp., $f$-total) set.
\end{definition}
In the literature dealing with distributed fault-tolerant algorithms, it is a common assumption to consider an {$f$-total} adversarial model. However, to allow for a large number of adversaries in large scale networks, we will allow the adversarial set to be $f$-local. Summarily, the adversary model considered throughout this paper will be referred to as an $f$-locally bounded Byzantine adversary model. The non-adversarial nodes will be referred to as regular nodes and be represented by the set $\mathcal{R}=\mathcal{V}\setminus\mathcal{A}$. Note that the actual number and identities of the adversarial nodes are not known to the regular nodes.  As is standard, any reliable system is designed to provide a desired level of resilience against a maximum number of component failures or attacks. We share the same philosophy. Specifically, we assume that each node in the network is programmed to tolerate upto a maximum of $f$ adversaries in the entire network (in an $f$-total model) or in its own neighborhood (in an $f$-local model). Such an assumption is typical in the design of distributed protocols (for varied applications, such as consensus \cite{rescons,vaidyacons}, optimization \cite{su,Sundaramopt}, and broadcasting \cite{flocal1,flocal2}) that are resilient to worst-case Byzantine attack models like the one considered in this paper.

Throughout this paper, we shall only consider causal (i.e., nodes act only on past and present information), synchronous (i.e., all nodes share a common clock w.r.t. their iterates), and deterministic algorithms (i.e., given the same input, such algorithms generate the same output); note however that the notions of causal and deterministic behavior apply only to the regular nodes.  We shall also assume that all quantities being updated iteratively by the regular nodes are initialized identically in each execution. With $\hat{\mathbf{x}}_i[k]$ representing the estimate of $\mathbf{x}[k]$ maintained by node $i$, the problem studied in this paper can be formally stated as follows.
\newpage
\begin{problem} 
\textbf{(Resilient Distributed State Estimation)} Given an LTI system (\ref{eqn:plant}), a linear measurement model (\ref{eqn:Obsmodel}), and a time-invariant directed communication graph $\mathcal{G}$, design a set of state estimate update and information exchange rules such that $\lim_{k\to\infty} ||\hat{\mathbf{x}}_i[k]-\mathbf{x}[k]||=0$, $\forall i \in \mathcal{R}$, \textit{regardless} of the actions of any $f$-locally bounded set of Byzantine adversaries.
\end{problem}

The interplay between the measurement structure of the nodes and the underlying communication graph results in certain conditions being necessary for solving Problem 1, irrespective of the choice of algorithms. We provide such conditions in the following section.

\section{Fundamental Limitations of any Distributed State Estimation Algorithm}
\label{sec:fundamental}
Intuitively, the network must possess a certain degree of measurement redundancy as well as redundancy in its communication structure so as to counteract the effects of adversarial behavior. More specifically, the measurements of the regular nodes must ensure collective detectability of the state, and the network structure should prevent the malicious nodes from acting as bottlenecks between correctly functioning nodes. To identify necessary conditions for resilient distributed state estimation that capture the above notions of redundancy, we first introduce some terminology. 

\begin{figure}[t]
\begin{center}
\begin{tikzpicture}
[->,shorten >=1pt,scale=.75,inner sep=1pt, minimum size=12pt, auto=center, node distance=3cm,
  thick, node/.style={circle, draw=black, thick},]
\node [circle, draw, fill=white](n5) at (0,0.75)  (5)  {5};
\node [circle, draw, fill=white](n4) at (0,2.25)   (4)  {4};
\node [circle, draw, fill=white](n6) at (0,-0.75)   (6)  {6};
\node [circle, draw, fill=white](n7) at (0,-2.25)  (7)  {7};
\node [circle, draw, fill=white](n2) at (-2.5,0)   (2)  {2};
\node [circle, draw, fill=white] (n1) at (-2.5,1.5)   (1)  {1};
\node [circle, draw, fill=white](n3) at (-2.5,-1.5)   (3)  {3};
\node [circle, draw, fill=white](n8) at (2.5,1.5)   (8)  {8};
\node [circle, draw, fill=white](n9) at (2.5,0)   (9)  {9};
\node [circle, draw, fill=white](n10) at (2.5,-1.5)   (10)  {10};
\node [circle, draw, fill=white](ns1) at (-5,0)   (s1)  {$s_1$};
\node [circle, draw, fill=white](ns2) at (5,0)   (s2)  {$s_2$};

\draw[->,blue!50,dashed,very thick]
  (s1) -- (1); 
\draw[->,blue!50,dashed,very thick]
  (s1) -- (2);
\draw[->,blue!50,dashed,very thick]
  (s1) -- (3); 
\draw[->,blue!50,dashed,very thick]
  (s2) -- (8); 
\draw[->,blue!50,dashed,very thick]
  (s2) -- (9);
\draw[->,blue!50,dashed,very thick]
  (s2) -- (10); 
\draw[->,blue!50,dashed,very thick]
  (s1) -- (1); 
\draw[->,blue!50,dashed,very thick]
  (s1) -- (2);
\draw[->,black,very thick]
  (1) -- (2); 
\draw[->,black,very thick]
  (1) -- (4);
\draw[->,black,very thick]
  (1) -- (6);
\draw[->,black,very thick]
  (2) -- (1); 
\draw[->,black,very thick]
  (2) -- (3);
\draw[->,black,very thick]
  (2) -- (4); 
\draw[->,black,very thick]
  (2) -- (5);
\draw[->,black,very thick]
  (2) -- (6);
\draw[->,black,very thick]
  (2) -- (7); 
\draw[->,black,very thick]
  (3) -- (2);
\draw[->,black,very thick]
  (3) -- (5);
\draw[->,black,very thick]
  (3) -- (7);
\draw[->,black,very thick]
  (4) -- (1); 
\draw[->,black,very thick]
  (4) -- (8);
\draw[->,black,very thick]
  (4) -- (5);
\draw[->,black,very thick]
  (5) -- (2);
\draw[->,black,very thick]
  (5) -- (9);
\draw[->,black,very thick]
  (5) -- (4);
\draw[->,black,very thick]
  (5) -- (6);
\draw[->,black,very thick]
  (5) -- (3);
\draw[->,black,very thick]
  (5) -- (10);
\draw[->,black,very thick]
  (6) -- (2); 
\draw[->,black,very thick]
  (6) -- (9);
\draw[->,black,very thick]
  (6) -- (1);
\draw[->,black,very thick]
  (6) -- (8);
\draw[->,black,very thick]
  (6) -- (5);
\draw[->,black,very thick]
  (6) -- (7);
\draw[->,black,very thick]
  (7) -- (3);
\draw[->,black,very thick]
  (7) -- (10);
\draw[->,black,very thick]
  (7) -- (6);
\draw[->,black,very thick]
  (8) -- (4); 
\draw[->,black,very thick]
  (8) -- (6);
\draw[->,black,very thick]
  (8) -- (9);
\draw[->,black,very thick]
  (9) -- (4); 
\draw[->,black,very thick]
  (9) -- (5);
\draw[->,black,very thick]
  (9) -- (6);
\draw[->,black,very thick]
  (9) -- (7); 
\draw[->,black,very thick]
  (9) -- (8);
\draw[->,black,very thick]
  (9) -- (10);
\draw[->,black,very thick]
  (10) -- (5);
\draw[->,black,very thick]
  (10) -- (7);
\draw[->,black,very thick]
  (10) -- (9);
\node (rect) at (0,1.5) (c1) [draw, dashed, red!50, rounded corners, minimum width=0.8cm, minimum height=2cm] {};
\node (rect) at (0,-1.5) (c1) [draw, dashed, red!50, rounded corners, minimum width=0.8cm, minimum height=2cm] {};
\node [ ] () at (0,3.2)  () {$\mathcal{H}_1$};
\node [ ] () at (0,-3.2)  () {$\mathcal{H}_2$};
\end{tikzpicture}
\end{center}
\caption{A 2-dimensional LTI system with two distinct, real, unstable eigenvalues (modes) $\lambda_1,\lambda_2$ is monitored by a network $\mathcal{G}$ of 10 nodes as shown above. Nodes 1-3 can detect $\lambda_1$, while nodes 8-10 can detect $\lambda_2$. Thus, the two minimal critical sets associated with the above system and network are $\mathcal{F}_1=\{1,2,3\}$ and $\mathcal{F}_2=\{8,9,10\}$. An example of a set that is critical, but not minimal, is $\{1,2,3,8\}$. The virtual source nodes associated with $\mathcal{F}_1$ and $\mathcal{F}_2$ are $s_1$ and $s_2$, respectively. There are no   $1$-total pair cuts w.r.t. $s_1$ or $s_2$. The set $\mathcal{H}=\{4,5,6,7\}$ is a $1$-local pair cut w.r.t. both $s_1$ and $s_2$ since $\mathcal{H}$ can be partitioned into $\mathcal{H}_1=\{4,5\}$ and $\mathcal{H}_2=\{6,7\}$, each of which are $1$-local sets. Since $\mathcal{H}_1$ and $\mathcal{H}_2$ are each $2$-total sets, $\mathcal{H}$ is also a $2$-total pair cut w.r.t. both $s_1$ and $s_2$.}
\label{fig:Illus}
\end{figure}
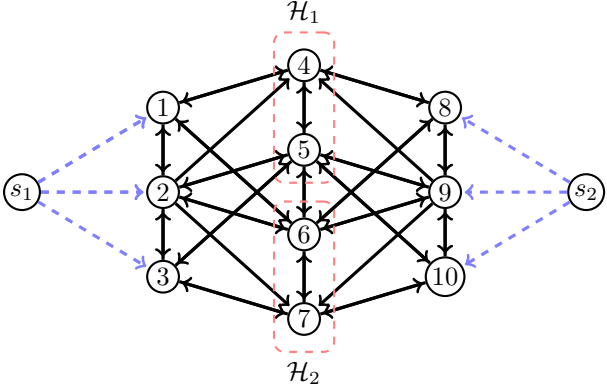

\begin{definition} (\textbf{Critical Set}) A set of nodes $\mathcal{F}\subset\mathcal{V}$ is said to be a critical set if the pair $(\mathbf{A},\mathbf{C}_{\mathcal{V}\setminus\mathcal{F}})$ is not detectable.
\end{definition}

Note that detectability of $(\mathbf{A},\mathbf{C})$ implies that a critical set must necessarily be non-empty.

\begin{definition} (\textbf{Minimal Critical Set}) A set $\mathcal{F} \subset \mathcal{V}$ is said to be a minimal critical set if $\mathcal{F}$ is a critical set and no subset of $\mathcal{F}$ is a critical set.
\end{definition}

Let $\mathcal{M}=\{\mathcal{F}_1, \cdots, \mathcal{F}_{|\mathcal{M}|}\}$ denote the set of all minimal critical sets. With each set $\mathcal{F}_i \in \mathcal{M}$, we associate a virtual node $s_i$ as follows. Directed edges are added from $s_i$ to each node in $\mathcal{F}_i$ and the resulting network is denoted by $\mathcal{G}^{'}_i=(\mathcal{V}\cup s_i, \mathcal{E}\cup\mathcal{E}_i)$, where $\mathcal{E}_i$ represents the set of edges from $s_i$ to $\mathcal{F}_i$. 

\begin{definition}($f$-\textbf{local pair and $f$-\textbf{total} pair cuts w.r.t.} ${s}_i$) Consider a minimal critical set $\mathcal{F}_i \in \mathcal{M}$. A set $\mathcal{H} \subset \mathcal{V}$ is called a cut w.r.t. $s_i$ if removal of $\mathcal{H}$ from $\mathcal{G}^{'}_i$ results in an induced subgraph of $\mathcal{G}^{'}_i$ whose node set can be partitioned into two non-empty sets $\mathcal{X}$ and $\mathcal{Y}$ with $s_i \in \mathcal{X}$, and no directed paths from $\mathcal{X}$ to $\mathcal{Y}$ in the induced subgraph. A cut $\mathcal{H}$ w.r.t. $s_i$ is called an $f$-local pair cut (resp., $f$-total pair cut) w.r.t. $s_i$ if it can be partitioned as $\mathcal{H}= \mathcal{H}_1 \cup \mathcal{H}_2$ such that both $\mathcal{H}_1$ and $\mathcal{H}_2$ are $f$-local (resp., $f$-total) in $\mathcal{G}$.
\label{defnt:cut}
\end{definition}

For an illustration of the above definitions, see Figure \ref{fig:Illus}. The following result identifies a fundamental limitation for $f$-local adversarial models.

\begin{thm} Suppose there exists an $f$-local pair cut w.r.t. $s_i$ in $\mathcal{G}^{'}_i$ for some minimal critical set $\mathcal{F}_i\in\mathcal{M}$. Then, it is impossible for any causal, synchronous and deterministic algorithm to solve Problem 1.
\label{thm:flocal}
\end{thm}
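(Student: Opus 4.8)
The plan is to establish impossibility through an indistinguishability (``two worlds'') argument: I will construct two system trajectories and two admissible executions of an arbitrary algorithm---one per trajectory---whose adversary sets are $f$-local, and then show that every regular node in the set $\mathcal{Y}$ that the cut separates from $s_i$ cannot tell the two executions apart. Since the two trajectories diverge asymptotically, no single estimate sequence can track both, forcing any algorithm to fail in at least one of the two legitimate executions.

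First I would extract an unobservable unstable mode from the criticality of $\mathcal{F}_i$. Because $(\mathbf{A},\mathbf{C}_{\mathcal{V}\setminus\mathcal{F}_i})$ is not detectable, the PBH test yields $\lambda\in sp(\mathbf{A})$ with $|\lambda|\ge 1$ and a nonzero $\mathbf{v}$ satisfying $\mathbf{A}\mathbf{v}=\lambda\mathbf{v}$ and $\mathbf{C}_j\mathbf{v}=\mathbf{0}$ for every $j\in\mathcal{V}\setminus\mathcal{F}_i$; for complex $\lambda$ I would pass to the associated two-dimensional real $\mathbf{A}$-invariant subspace so that all states remain real, the argument being otherwise unchanged. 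Fixing an arbitrary $\mathbf{z}\in\mathbb{R}^n$, I define trajectory $a$ by $\mathbf{x}^a[0]=\mathbf{z}$ and trajectory $b$ by $\mathbf{x}^b[0]=\mathbf{z}+\mathbf{v}$, so that $\mathbf{x}^b[k]-\mathbf{x}^a[k]=\lambda^k\mathbf{v}$, whose norm stays bounded away from $0$ since $|\lambda|\ge 1$. Crucially, $\mathbf{C}_j\mathbf{x}^a[k]=\mathbf{C}_j\mathbf{x}^b[k]$ for all $j\notin\mathcal{F}_i$ and all $k$.

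Next I record two consequences of the cut structure. Since the only out-edges of $s_i$ in $\mathcal{G}^{'}_i$ lead to $\mathcal{F}_i$ and $s_i\in\mathcal{X}$, any node of $\mathcal{F}_i$ lying in $\mathcal{Y}$ would yield a directed edge $s_i\to\mathcal{Y}$ that survives removal of $\mathcal{H}$, contradicting the absence of $\mathcal{X}$-to-$\mathcal{Y}$ paths; hence $\mathcal{Y}\cap\mathcal{F}_i=\emptyset$ and every node in $\mathcal{Y}$ sees identical measurements under $a$ and $b$. For the same reason there are no edges from $\mathcal{X}$ into $\mathcal{Y}$, so every in-neighbor of a node in $\mathcal{Y}$ lies in $\mathcal{Y}\cup\mathcal{H}_1\cup\mathcal{H}_2$. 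I then define execution $E_1$ with true state $a$ and adversarial set $\mathcal{H}_2$, and execution $E_2$ with true state $b$ and adversarial set $\mathcal{H}_1$; both sets are $f$-local in $\mathcal{G}$ by hypothesis, so both executions are admissible. The adversary in $E_1$ makes each node of $\mathcal{H}_2$ emit exactly the messages an honest copy of that node would send in $E_2$, and symmetrically the adversary in $E_2$ makes $\mathcal{H}_1$ replay the honest messages of $E_1$; this is feasible because the adversary is omniscient and the Byzantine model permits distinct messages on distinct out-edges.

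The heart of the proof, and the step I expect to be most delicate, is a simultaneous induction on the time index $k$ proving that the view of every $w\in\mathcal{Y}$ is identical in $E_1$ and $E_2$. The apparent circularity---$E_1$'s adversary references $E_2$ and vice versa---is resolved by causality: each time-$k$ message depends only on views through time $k-1$, so the two executions can be defined in lockstep. Granting the inductive hypothesis through $k-1$, the messages entering any $w\in\mathcal{Y}$ at time $k$ coincide across the executions: those from $\mathcal{Y}$ match because its honest nodes are deterministic with identical views, those from $\mathcal{H}_1$ equal a common value $\mu_1[k]$ (honest in $E_1$, replayed in $E_2$), and those from $\mathcal{H}_2$ equal a common value $\nu_2[k]$ (honest in $E_2$, replayed in $E_1$); together with the identical local measurements, $w$'s view at time $k$ agrees. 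Consequently each $w\in\mathcal{Y}$ produces the same estimate sequence $\hat{\mathbf{x}}_w[k]$ in both executions, yet the true states differ by $\lambda^k\mathbf{v}\not\to\mathbf{0}$, so $\hat{\mathbf{x}}_w[k]$ cannot converge to both $\mathbf{x}^a[k]$ and $\mathbf{x}^b[k]$. Hence the algorithm fails in at least one admissible execution, establishing the impossibility. The main obstacle is making this two-execution construction fully rigorous: ensuring the replayed adversarial strategies are well defined through the causal induction, and verifying that both adversary sets genuinely respect the $f$-local bound.
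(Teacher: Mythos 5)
Your proposal is correct and follows essentially the same route as the paper's own proof: a two-execution indistinguishability argument in which the $f$-local halves $\mathcal{H}_1,\mathcal{H}_2$ of the pair cut alternately play the adversary while replaying the honest behavior of the other execution, well-definedness is secured by a causal lockstep induction, and the undetectable unstable mode hidden from $\mathcal{Y}$ (which your cut argument correctly shows is disjoint from $\mathcal{F}_i$) makes the two diverging trajectories produce identical views for every node in $\mathcal{Y}$. Your version is, if anything, slightly more explicit than the paper's (PBH extraction of the eigenvector, the real invariant subspace for complex $\lambda$, and the classification of in-neighbors of $\mathcal{Y}$), but these are elaborations of the same argument rather than a different approach.
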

\vspace{-2mm}
\begin{proof} Suppose there exists an $f$-local pair cut $\mathcal{H}=\mathcal{H}_1\cup\mathcal{H}_2$ w.r.t. $s_i$ for some minimal critical set $\mathcal{F}_i \in \mathcal{M}$. For the sake of contradiction, suppose there exists a causal, synchronous and deterministic algorithm $\mathcal{T}$ that solves Problem 1 for the given network $\mathcal{G}$. From the definition of $\mathcal{H}$, we see that $\mathcal{Y}$ contains no elements of $\mathcal{F}_i$. Since $\mathcal{F}_i$ is a critical set, it then follows that the pair $(\mathbf{A},\mathbf{C}_{\mathcal{Y}})$ is not detectable. Thus, there exists an initial condition $\mathbf{x}[0]=\boldsymbol{\eta}$ that causes the measurement set $\mathbf{y}_{\mathcal{Y}}[k]$ corresponding to $\mathcal{Y}$ to be identically zero for all time, while the state $\mathbf{x}[k]$ remains bounded away from zero. The idea of the proof will be to demonstrate that the nodes in $\mathcal{Y}$ cannot distinguish between the zero initial condition and the initial condition $\boldsymbol{\eta}$ under an appropriately constructed attack. To this end, noting that each of the sets $\mathcal{H}_1$ and $\mathcal{H}_2$ are $f$-local and can hence act as valid adversarial sets, we consider the following executions $\sigma$ and $\sigma'$ of $\mathcal{T}$. 

\textbf{Execution} $\sigma$: The initial condition is $\mathbf{x}[0]=\mathbf{0}$. The nodes in $\mathcal{H}_1$ are regular while the nodes in $\mathcal{H}_2$ are adversarial. The nodes in $\mathcal{H}_2$ pretend that their state estimates are $\hat{\mathbf{x}}_{\mathcal{H}_2}[k]$ and that their measurements are $\mathbf{C}_{\mathcal{H}_2}\mathbf{A}^{k}\boldsymbol{\eta}$, where $\hat{\mathbf{x}}_{\mathcal{H}_2}[k]$ represents the collection of the state estimates maintained by the nodes in $\mathcal{H}_2$ during the execution $\sigma'$ of $\mathcal{T}$. Additionally, at each time-step, the nodes in $\mathcal{H}_2$ perform the exact same actions that they perform during the execution $\sigma'$.
 
\textbf{Execution} $\sigma'$: The initial condition is $\mathbf{x}[0]=\boldsymbol{\eta}$. The nodes in $\mathcal{H}_2$ are regular while the nodes in $\mathcal{H}_1$ are adversarial. The nodes in $\mathcal{H}_1$ pretend that their state estimates are $\hat{\mathbf{x}}_{\mathcal{H}_1}[k]$ and that their measurements are zero, where $\hat{\mathbf{x}}_{\mathcal{H}_1}[k]$ represents the collection of the state estimates maintained by the nodes in $\mathcal{H}_1$ during the execution $\sigma$ of $\mathcal{T}$. Additionally, at each time-step, the nodes in $\mathcal{H}_1$ perform the exact same actions that they perform during the execution $\sigma$.

Since the actions of the adversaries in the two executions described above are coupled, it becomes important to establish that such actions are in fact well-defined. To do so, we argue as follows. Consider the actions of the adversarial set $\mathcal{H}_2$ at time $k=0$ of execution $\sigma$. Due to their omniscient nature, these adversaries can anticipate the information that a regular set $\mathcal{H}_2$ is supposed to transmit at time $k=0$ of execution $\sigma'$ based on algorithm $\mathcal{T}$.  Thus, their actions are well-defined at time $k=0$. Note that the last two statements rely on the deterministic nature of $\mathcal{T}$. Specifically, under a deterministic algorithm $\mathcal{T}$, the actions of the regular nodes are also deterministic, and hence, can be predicted in advance by an omniscient adversary who is aware of the information set available to such regular nodes. An identical argument defines the actions of the adversarial set $\mathcal{H}_1$ at time $k=0$ of execution $\sigma'$. Since the actions of both the sets $\mathcal{H}_1$ and $\mathcal{H}_2$ at time $k=0$ are well-defined in each of the executions $\sigma$ and $\sigma'$, the response of the regular nodes to such actions (in the respective executions) at time $k=1$ can be anticipated by any adversarial set. Specifically, to generate their actions at time $k=1$ of execution $\sigma$ (resp., execution $\sigma'$), the adversarial set $\mathcal{H}_2$ (resp., $\mathcal{H}_1$) simply simulates execution $\sigma'$ (resp., execution $\sigma$) for time $k=0$ to figure out how a regular set $\mathcal{H}_2$ (resp., $\mathcal{H}_1$) would act at time $k=1$ of execution $\sigma'$ (resp., execution $\sigma$). Repeating the above argument reveals that the actions of the respective adversarial sets in each of the executions $\sigma$ and $\sigma'$ are well-defined at every time step.

Based on the attack described above, it is clear that the nodes in $\mathcal{Y}$ receive the same state estimate and measurement information from the nodes in $\mathcal{H}$ in each of the two executions. Further, their own measurements are identically zero for all time in each of the two executions. Hence, based on such identical information, it is impossible for the nodes in $\mathcal{Y}$ to resolve the difference in the underlying initial conditions via algorithm $\mathcal{T}$. This leads to the desired contradiction and completes the proof.
\end{proof}

\begin{remark} Interestingly, the necessary condition presented in the above theorem bears close resemblance to the necessary condition in \cite{flocal2} for resilient broadcasting subject to the same $f$-local Byzantine adversary model that we consider here. This similarity can be attributed to the following analogy: viewing the virtual nodes as originators of messages in a broadcasting context, Problem 1 can be interpreted as a version of the resilient broadcasting problem where the regular nodes are required to agree (asymptotically) on a time-varying message that captures the state evolution of the system. 
\end{remark}
 
Our next result provides a necessary condition for an $f$-total (and hence, also an $f$-local) adversarial model.

\begin{thm} Suppose there exists a {causal, synchronous and deterministic} algorithm that solves the variant of Problem $1$ corresponding to an $f$-total Byzantine adversary model. Then, the following equivalent statements are true.

\begin{itemize}
\item[(i)] Consider any minimal critical set $\mathcal{F}_i \in \mathcal{M}$. There exists no $f$-total pair cut w.r.t. $s_i$.
\item[(ii)] Consider a node $i \in \mathcal{V}$ such that $(\mathbf{A},\mathbf{C}_i)$ is not detectable. Let $\mathcal{X}_i$ denote the set of all nodes in $\mathcal{G}$ that have directed paths to node $i$, and consider a set $\mathcal{D}_i \subseteq \mathcal{X}_i$ such that $|\mathcal{D}_i| \leq 2f$. Let $\mathcal{P}_i \subseteq \mathcal{X}_i$ represent the set of nodes that have directed paths to node $i$ in the induced subgraph obtained by removing $\mathcal{D}_i$ from $\mathcal{G}$. Then, $(\mathbf{A},\mathbf{C}_{i \cup \mathcal{P}_i})$ is detectable.
\end{itemize}
\label{thm:ftotal}
\end{thm}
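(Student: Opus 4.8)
The plan has two parts: first argue that statement (i) is necessary, and then establish the equivalence of (i) and (ii), from which (ii) follows. For the necessity of (i), I would replay the construction from the proof of Theorem \ref{thm:flocal} almost verbatim. The only place where that argument sees the adversary model is in requiring each of the two blocks $\mathcal{H}_1,\mathcal{H}_2$ of the pair cut to be an admissible adversarial set. For an $f$-total pair cut each block is $f$-total and hence admissible in the $f$-total model, so given such a cut w.r.t. some $s_i$ I would build the same two executions $\sigma,\sigma'$ (with $\mathcal{H}_2$ adversarial in $\sigma$ and $\mathcal{H}_1$ adversarial in $\sigma'$), note that the nodes of $\mathcal{Y}$ receive identical information and carry identically zero measurements in both executions, and conclude they cannot distinguish the two initial conditions, contradicting solvability of the $f$-total variant.

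The substance is the equivalence (i) $\iff$ (ii), which is purely graph-theoretic and detectability-based. The first fact I would record is that an $f$-total pair cut w.r.t. $s_i$ exists if and only if there is a cut $\mathcal{H}$ w.r.t. $s_i$ with $|\mathcal{H}|\le 2f$: a cut of size at most $2f$ splits into two sets of size at most $f$ each, and conversely the union of two $f$-total sets has size at most $2f$. I would also use two elementary detectability facts: non-detectability of $(\mathbf{A},\mathbf{C}_{\mathcal{S}})$ is inherited when $\mathcal{S}$ shrinks, and (by the PBH test) is always witnessed by a single unstable eigenvector $\mathbf{v}$ with $\mathbf{C}_j\mathbf{v}=0$ for all $j\in\mathcal{S}$; the \emph{observer set} $\mathcal{O}_{\mathbf{v}}=\{j:\mathbf{C}_j\mathbf{v}\ne 0\}$ of such a $\mathbf{v}$ is then itself a critical set and therefore contains some minimal critical set $\mathcal{F}_j\in\mathcal{M}$.

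I would prove both implications by contraposition. For $\neg$(ii) $\Rightarrow \neg$(i): given a node $i$ and $\mathcal{D}_i\subseteq\mathcal{X}_i$ with $|\mathcal{D}_i|\le 2f$ and $(\mathbf{A},\mathbf{C}_{i\cup\mathcal{P}_i})$ non-detectable, I extract a witnessing eigenvector $\mathbf{v}$ with $\mathcal{O}_{\mathbf{v}}\cap(i\cup\mathcal{P}_i)=\emptyset$, pick a minimal critical set $\mathcal{F}_j\subseteq\mathcal{O}_{\mathbf{v}}$, and show that $\mathcal{D}_i$ is a cut w.r.t. $s_j$ isolating $i$: since no node of $\mathcal{F}_j\subseteq\mathcal{O}_{\mathbf{v}}$ lies in $\mathcal{P}_i$, none reaches $i$ after $\mathcal{D}_i$ is removed, so $s_j$ cannot reach $i$ and $i$ falls on the sink side; as $|\mathcal{D}_i|\le 2f$ this is an $f$-total pair cut. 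For $\neg$(i) $\Rightarrow \neg$(ii): given a minimal critical set $\mathcal{F}_j$ and a cut $\mathcal{H}$ w.r.t. $s_j$ with $|\mathcal{H}|\le 2f$ separating $s_j$ from a nonempty sink $\mathcal{Y}$, I pick any $i\in\mathcal{Y}$ (so $i\notin\mathcal{F}_j$, whence $(\mathbf{A},\mathbf{C}_i)$ is non-detectable) and set $\mathcal{D}_i=\mathcal{H}\cap\mathcal{X}_i$. The key step is $i\cup\mathcal{P}_i\subseteq\mathcal{V}\setminus\mathcal{F}_j$: every internal vertex of a path into $i$ is an ancestor of $i$, so a path from $\mathcal{F}_j$ to $i$ avoiding $\mathcal{D}_i=\mathcal{H}\cap\mathcal{X}_i$ would in fact avoid all of $\mathcal{H}$, contradicting the cut; hence no node of $\mathcal{F}_j$ reaches $i$ after removing $\mathcal{D}_i$. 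The shrinking property applied to the non-detectable pair $(\mathbf{A},\mathbf{C}_{\mathcal{V}\setminus\mathcal{F}_j})$ then forces $(\mathbf{A},\mathbf{C}_{i\cup\mathcal{P}_i})$ to be non-detectable, violating (ii).

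The hard part will be the bookkeeping in this last reachability argument — specifically, justifying that restricting the cut to the ancestors $\mathcal{X}_i$ of $i$ loses nothing, so that a set $\mathcal{D}_i$ of size at most $2f$ already suffices, and cleanly relating minimal critical sets to single-eigenvector observer sets when $\mathbf{A}$ has complex or repeated eigenvalues. A minor point I would dispatch is the convention $i\notin\mathcal{X}_i$ (equivalently $i\notin\mathcal{D}_i$), so that $i$ genuinely lies on the sink side of the constructed cut; everything else rests on the elementary observation that every vertex on a directed path terminating at $i$ is an ancestor of $i$.
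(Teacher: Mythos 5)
Your proposal is correct and follows essentially the same route as the paper: necessity is inherited from the indistinguishability construction of Theorem \ref{thm:flocal} (with the two $f$-total blocks of the cut serving as the admissible adversarial sets), and the equivalence is proved by contraposition in both directions using the paper's own constructions --- taking $\mathcal{H}=\mathcal{D}_i$ as the cut with sink side $\mathcal{Y}=\{i\}\cup\mathcal{P}_i$ in one direction, and $\mathcal{D}_i=\mathcal{H}\cap\mathcal{X}_i$ (the cut restricted to the ancestors of $i$) in the other. The only cosmetic deviation is your detour through a PBH eigenvector and its observer set to obtain a minimal critical set disjoint from $\{i\}\cup\mathcal{P}_i$: the paper gets this more directly by noting that $\mathcal{F}=\mathcal{V}\setminus\{i\cup\mathcal{P}_i\}$ is critical by the very failure of (ii) and then passing to a minimal critical subset, but your version is equally valid.
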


The proof of necessity mimics the proof of Theorem \ref{thm:flocal}, while the equivalence between the two conditions stated in Theorem \ref{thm:ftotal} is established in Appendix \ref{app:proofthmftotal}.

\begin{remark} 
In \cite{fawzi,joao2}, the authors showed that for centralized systems subject to $f$ sensor attacks, a necessary condition for estimating the state asymptotically is that the system should remain detectable after the removal of any $2f$ sensors. In our present distributed setting, the maximum information about the state that any given node $i$ can hope to obtain is from the set $\{i\cup\mathcal{X}_i\}$, where $\mathcal{X}_i$ is defined as in Theorem \ref{thm:ftotal}. Thus, the second part of Theorem \ref{thm:ftotal} generalizes the necessary conditions in \cite{fawzi,joao2}. In \cite{sundaramtac,fabio2}, the authors established that the graph-connectivity metric plays a pivotal role in the analysis of fault-tolerant and resilient distributed consensus algorithms for settings where there are no underlying state dynamics that need to be estimated. The results stated in Theorems \ref{thm:flocal} and \ref{thm:ftotal} differ from those in \cite{sundaramtac,fabio2} since they blend both graph-theoretic and system-theoretic requirements. Finally, it can be easily shown that when there are no adversaries, i.e., when $f=0$, the conditions identified in Theorem \ref{thm:ftotal} reduce to the necessary and sufficient condition for distributed state estimation, namely every source component (strong components with no incoming edges) of the graph should be detectable \cite{martins3,allerton,mitraarchive,wang,han}.
\end{remark} 

We now discuss certain implications of Theorem \ref{thm:ftotal}.
Given an LTI system \eqref{eqn:plant}, a measurement model specified by \eqref{eqn:Obsmodel}, and a communication graph $\mathcal{G}$, it is of both theoretical and practical interest to know the maximum number of adversaries that can be tolerated when one seeks to solve Problem 1. Leveraging Theorem \ref{thm:ftotal}, we can provide an upper bound on this number, as follows.

\begin{corollary}
Let $k$ denote the smallest positive integer such that there exists a $k$-total pair cut w.r.t. $s_i$ for some $\mathcal{F}_i\in\mathcal{M}$. Then, the total number of adversaries $f$ must satisfy the inequality $f < k$ for Problem 1 to have a solution.\footnote{Similar bounds for static power system models subject to attacks were obtained in \cite{Kosut}.}
\label{corr:upperbound}
\end{corollary}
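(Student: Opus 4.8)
The plan is to obtain Corollary \ref{corr:upperbound} as a direct consequence of the necessary condition recorded in part (i) of Theorem \ref{thm:ftotal}, via a proof by contradiction. Specifically, I would assume that Problem 1 admits a solution while $f \geq k$, and then exhibit a forbidden $f$-total pair cut, contradicting Theorem \ref{thm:ftotal}. All the work reduces to two elementary monotonicity observations, so I expect the argument to be short.

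First I would record the monotonicity of the ``total'' notions in the threat parameter. By definition, a set is $f$-total precisely when its cardinality is at most $f$; hence any $k$-total set is automatically $f$-total for every $f \geq k$. Applying this block-by-block to the partition $\mathcal{H} = \mathcal{H}_1 \cup \mathcal{H}_2$ of a $k$-total pair cut, the very same partition certifies that $\mathcal{H}$ remains an $f$-total pair cut w.r.t. $s_i$ whenever $f \geq k$, since each $\mathcal{H}_j$ still has cardinality at most $f$.

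Next I would bridge the $f$-local model of Problem 1 with the $f$-total variant to which Theorem \ref{thm:ftotal} applies: since every $f$-total set is in particular $f$-local, any algorithm that solves Problem 1 (and therefore tolerates \emph{every} $f$-local adversarial set) necessarily tolerates every $f$-total adversarial set, i.e.\ it solves the $f$-total variant. The contradiction is then immediate. By the definition of $k$, there exists a $k$-total pair cut $\mathcal{H}$ w.r.t.\ $s_i$ for some minimal critical set $\mathcal{F}_i \in \mathcal{M}$. Suppose $f \geq k$ and that Problem 1 is solvable; by the bridging observation the $f$-total variant is then solvable, so Theorem \ref{thm:ftotal}(i) forbids the existence of any $f$-total pair cut w.r.t.\ $s_i$. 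But the monotonicity observation shows that $\mathcal{H}$ is itself such an $f$-total pair cut, a contradiction. Hence $f < k$ is necessary for Problem 1 to have a solution.

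The individual steps are all routine bookkeeping about cardinalities, so there is no genuine computational obstacle. The one point that requires care — and the only place a reader could slip — is keeping the two adversary models straight and checking that the set inclusion runs in the correct direction, namely that $f$-total $\subseteq$ $f$-local, so that the $f$-total impossibility result of Theorem \ref{thm:ftotal} is strong enough to constrain the (more permissive) $f$-local formulation of Problem 1. Once that direction is fixed, the monotonicity in $f$ closes the argument.
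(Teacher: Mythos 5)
Your proof is correct and matches the paper's intended route: the paper derives Corollary~\ref{corr:upperbound} directly from Theorem~\ref{thm:ftotal}(i) (the proof itself is omitted as immediate), and your two supporting observations---that any $k$-total pair cut remains an $f$-total pair cut for $f \geq k$ via the same partition, and that every $f$-total set is $f$-local so solvability of the $f$-local Problem~1 implies solvability of the $f$-total variant---are exactly the bookkeeping needed to make that derivation rigorous.
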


\begin{corollary} 
The condition $|\mathcal{F}_i| \geq (2f+1)\ \forall \mathcal{F}_i \in \mathcal{M}$ is necessary for resilient distributed state estimation subject to the $f$-local or $f$-total adversarial model.
\label{corr:card}
\end{corollary}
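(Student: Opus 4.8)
The plan is to prove the contrapositive. I would assume that some minimal critical set $\mathcal{F}_i \in \mathcal{M}$ satisfies $|\mathcal{F}_i| \leq 2f$, and then exhibit a pair cut that activates the impossibility results already established in Theorems \ref{thm:flocal} and \ref{thm:ftotal}. The crucial structural observation driving the whole argument is that, in the augmented graph $\mathcal{G}^{'}_i$, the virtual source $s_i$ has out-edges directed \emph{only} into the nodes of $\mathcal{F}_i$. Consequently, $\mathcal{F}_i$ itself acts as a cut separating $s_i$ from the remainder of the network, and since $|\mathcal{F}_i|$ is small, this cut can be split into two pieces that each respect the adversary budget.

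The first step is to verify that $\mathcal{F}_i$ is a cut w.r.t. $s_i$ in the sense of Definition \ref{defnt:cut}. Removing $\mathcal{F}_i$ from $\mathcal{G}^{'}_i$ deletes every outgoing edge of $s_i$, so in the resulting induced subgraph $s_i$ has no out-neighbors. Taking $\mathcal{X}=\{s_i\}$ and $\mathcal{Y}=\mathcal{V}\setminus\mathcal{F}_i$ then yields a partition of the surviving node set with $s_i \in \mathcal{X}$ and no directed path from $\mathcal{X}$ to $\mathcal{Y}$. Here $\mathcal{Y}$ is non-empty because detectability of $(\mathbf{A},\mathbf{C})$ forces every critical set to be a proper subset of $\mathcal{V}$, hence $\mathcal{F}_i \subsetneq \mathcal{V}$.

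The second step upgrades this cut to a pair cut covering both adversary models simultaneously. Since $|\mathcal{F}_i| \leq 2f$, I would partition $\mathcal{F}_i = \mathcal{H}_1 \cup \mathcal{H}_2$ with $|\mathcal{H}_1| = \lceil |\mathcal{F}_i|/2 \rceil \leq f$ and $|\mathcal{H}_2| = \lfloor |\mathcal{F}_i|/2 \rfloor \leq f$. Any set of cardinality at most $f$ is trivially $f$-total, and is also $f$-local because $|\mathcal{N}_j \cap \mathcal{H}_\ell| \leq |\mathcal{H}_\ell| \leq f$ for every $j$ and $\ell \in \{1,2\}$. Therefore $\mathcal{F}_i$ is at once an $f$-total and an $f$-local pair cut w.r.t. $s_i$. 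Invoking part (i) of Theorem \ref{thm:ftotal} (in contrapositive form) rules out any causal, synchronous, deterministic algorithm under the $f$-total model, while Theorem \ref{thm:flocal} does the same under the $f$-local model. Both contradict solvability of Problem 1, establishing that $|\mathcal{F}_i| \geq 2f+1$ is necessary for every $\mathcal{F}_i \in \mathcal{M}$.

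This is essentially a short reduction rather than a substantive proof, so I do not expect a genuinely hard step. The only points that warrant care are confirming that the trivial partition $(\{s_i\},\,\mathcal{V}\setminus\mathcal{F}_i)$ genuinely meets the cut definition — in particular the non-emptiness of $\mathcal{Y}$, which relies on minimal critical sets being proper subsets — and verifying that sub-$f$ sets qualify as both $f$-local and $f$-total, so that a single construction discharges both adversary models at once.
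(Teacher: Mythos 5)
Your proof is correct and is exactly the ``straightforward'' reduction the paper omits: take $\mathcal{F}_i$ itself as the cut separating $s_i$ from $\mathcal{Y}=\mathcal{V}\setminus\mathcal{F}_i$, split it into two halves of size at most $f$ (each trivially $f$-total and $f$-local), and invoke Theorem~\ref{thm:flocal} and part~(i) of Theorem~\ref{thm:ftotal} in contrapositive. One hairline correction: the properness $\mathcal{F}_i\subsetneq\mathcal{V}$ (hence $\mathcal{Y}\neq\emptyset$) comes from Definition~4's requirement $\mathcal{F}\subset\mathcal{V}$ rather than from detectability of $(\mathbf{A},\mathbf{C})$ --- what detectability actually buys you is non-emptiness of $\mathcal{F}_i$, which is what makes it a genuine cut.
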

The proof of the above result is straightforward and is hence omitted here.
With the above corollary in hand, one can gain insights regarding the distribution of certain specific critical sets in the network. To do so, given an eigenvalue $\lambda_j \in \Lambda_{U}(\mathbf{A})$, let $\{\boldsymbol{\rho}^{(j)}_1, \cdots, \boldsymbol{\rho}^{(j)}_{g_{\mathbf{A}}(\lambda_j)}\}$ represent a basis for the null space of $(\mathbf{A}-\lambda_j\mathbf{I}_n)$, and let $\phi^{(j)}_i=span\{\boldsymbol{\rho}^{(j)}_i\}, i\in \{1,\cdots,g_{\mathbf{A}}(\lambda_j)\}$. We say that node $i$ can detect the subspace $\phi^{(j)}_i$  if $\mathbf{C}_{i}\boldsymbol{\rho}^{(j)}_i \neq \mathbf{0}$.\footnote{Throughout the paper, for the sake of conciseness, we use the terminology ``node $i$ can detect eigenvalue $\lambda_j$" to imply that $rank\left[\begin{smallmatrix} \mathbf{A}-\lambda_j\mathbf{I}_n \\ \mathbf{C}_i \end{smallmatrix}\right] = n$. Each stable eigenvalue is considered detectable w.r.t. the measurements of every node.} Let $\mathcal{W}^{(j)}_i \subseteq \mathcal{V}$ denote the set of all nodes that can detect $\phi^{(j)}_i$. The next result then readily follows from Corollary \ref{corr:card} and the classical PBH test \cite{hespanha}.

\begin{proposition}
For each $\lambda_j \in \Lambda_{U}(\mathbf{A})$, if $\mathcal{W}^{(j)}_i \subset \mathcal{V}$, where $\ 1\leq i \leq g_{\mathbf{A}}(\lambda_j)$, then  $|\mathcal{W}^{(j)}_i| \geq (2f+1)$ is a necessary condition for resilient distributed state estimation subject to the f-local or f-total adversarial model.
\label{prop:fund}
\end{proposition}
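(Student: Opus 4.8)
The plan is to exhibit $\mathcal{W}^{(j)}_i$ as a critical set and then invoke Corollary~\ref{corr:card}, bridging through the elementary fact that every critical set contains a minimal critical set. Throughout, I fix an unstable mode $\lambda_j\in\Lambda_U(\mathbf{A})$ and an index $i\in\{1,\ldots,g_{\mathbf{A}}(\lambda_j)\}$ with $\mathcal{W}^{(j)}_i\subset\mathcal{V}$.

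First I would analyze the complement $\mathcal{V}\setminus\mathcal{W}^{(j)}_i$, which is nonempty precisely because of the hypothesis $\mathcal{W}^{(j)}_i\subset\mathcal{V}$. By the definition of $\mathcal{W}^{(j)}_i$, every node $w\in\mathcal{V}\setminus\mathcal{W}^{(j)}_i$ fails to detect $\phi^{(j)}_i$, i.e., $\mathbf{C}_w\boldsymbol{\rho}^{(j)}_i=\mathbf{0}$. Stacking these measurement rows yields $\mathbf{C}_{\mathcal{V}\setminus\mathcal{W}^{(j)}_i}\boldsymbol{\rho}^{(j)}_i=\mathbf{0}$.

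Next I would apply the PBH test. Since $\boldsymbol{\rho}^{(j)}_i$ spans a one-dimensional subspace of the null space of $(\mathbf{A}-\lambda_j\mathbf{I}_n)$, it is an eigenvector of $\mathbf{A}$ associated with the unstable eigenvalue $\lambda_j$. Combined with the previous step, this gives $\left[\begin{smallmatrix}\mathbf{A}-\lambda_j\mathbf{I}_n\\ \mathbf{C}_{\mathcal{V}\setminus\mathcal{W}^{(j)}_i}\end{smallmatrix}\right]\boldsymbol{\rho}^{(j)}_i=\mathbf{0}$, so the stacked matrix loses rank at the unstable mode $\lambda_j$. Hence $(\mathbf{A},\mathbf{C}_{\mathcal{V}\setminus\mathcal{W}^{(j)}_i})$ is not detectable, and by the definition of a critical set, $\mathcal{W}^{(j)}_i$ is a critical set.

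Finally, since any critical set contains a minimal critical set (iteratively discard nodes while preserving non-detectability until no further removal is possible), there exists some $\mathcal{F}\in\mathcal{M}$ with $\mathcal{F}\subseteq\mathcal{W}^{(j)}_i$. Corollary~\ref{corr:card} then forces $|\mathcal{F}|\geq 2f+1$, whence $|\mathcal{W}^{(j)}_i|\geq|\mathcal{F}|\geq 2f+1$, as claimed. The computations here are routine; the only genuine subtleties are (i) recognizing that the proper-subset hypothesis $\mathcal{W}^{(j)}_i\subset\mathcal{V}$ is exactly what certifies $\mathcal{W}^{(j)}_i$ as a legitimate critical set with nonempty complement, and (ii) bridging from ``critical'' to ``minimal critical'' so that Corollary~\ref{corr:card}, which is stated for minimal critical sets, can be applied verbatim.
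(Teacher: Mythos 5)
Your proposal is correct and follows exactly the route the paper intends: the paper proves this proposition with the one-line remark that it ``readily follows from Corollary~\ref{corr:card} and the classical PBH test,'' and your argument fleshes out precisely that chain --- use $\mathbf{C}_w\boldsymbol{\rho}^{(j)}_i=\mathbf{0}$ for all $w\notin\mathcal{W}^{(j)}_i$ together with the PBH test at the unstable eigenvalue $\lambda_j$ to certify $\mathcal{W}^{(j)}_i$ as a critical set, then pass to a minimal critical set contained in it and apply Corollary~\ref{corr:card}. Your two flagged subtleties (the proper-subset hypothesis guaranteeing a nonempty complement, and the bridge from critical to minimal critical, which works because criticality is upward-monotone under inclusion) are handled correctly.
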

{For systems with distinct eigenvalues, a direct consequence of the above result is the requirement of at least $(2f+1)$ nodes that can detect each unstable eigenvalue of the system}. The preceding analysis builds up to the distributed estimation strategy adopted in  this paper. In particular, our approach involves identifying the locally detectable and undetectable eigenvalues associated with a given node, and subsequently devising separate estimation strategies for the subspaces associated with such eigenvalues. We formalize this idea in the next section.

\begin{remark} Two important directions of future investigation are (i) finding an efficient algorithm (if one exists) for computing $k$ in Corollary \ref{corr:upperbound} either exactly or approximately, and (ii) determining whether the conditions stated in Theorem 1 (resp., Theorem 2) are sufficient for achieving resilient distributed state estimation subject to an $f$-local (resp., $f$-total) Byzantine adversary model. Note that the main source of computational complexity associated with the first point lies in finding all the minimal critical sets associated with the given system.
\label{rem:directions}
\end{remark}
\section{Resilient Distributed State Estimation}
\label{section:estimation}
\subsection{Preliminaries}
\label{sec:prelim}
For each eigenvalue $\lambda\in sp(\mathbf{A})$, let $\mathbf{V}(\lambda)$ represent a block diagonal matrix with the Jordan blocks corresponding to $\lambda$ (in the standard Jordan canonical representation of $\mathbf{A}$) along the main block diagonal. We begin by recalling certain properties of the real Jordan canonical form of a square matrix that will be useful for our subsequent development \cite{horn}. We first note that if $\lambda$ represents a non-real eigenvalue of $\mathbf{A}$ and $\bar{\lambda}$ represents its complex-conjugate, then \cite[Lemma 3.1.18]{horn} ensures that $\lambda$ and $\bar{\lambda}$ have the same Jordan structure. Next, let $\lambda=a+ib$ where $a,b \in \mathbb{R}$, and $i=\sqrt{-1}$. Let $\mathbf{D}(a,b)$ be defined as $\mathbf{D}(a,b)\triangleq \begin{bmatrix}a&b\\-b&a\end{bmatrix}$. Then, the matrix $diag(\mathbf{V}(\lambda),\mathbf{V}(\bar{\lambda}))$ is similar to a real block upper triangular matrix $\mathbf{W}(\lambda)\in{\mathbb{R}}^{2a_{\mathbf{A}}(\lambda)\times 2a_{\mathbf{A}}(\lambda)}$ which has $a_{\mathbf{A}}(\lambda)$ 2-by-2 blocks $\mathbf{D}(a,b)$ on the main block diagonal and $(a_{\mathbf{A}}(\lambda)-1)$ blocks $\mathbf{I}_2$ on the block superdiagonal.  Henceforth, for a non-real eigenvalue $\lambda \in sp(\mathbf{A})$, $\mathbf{W}(\lambda)$ will have the meaning  discussed above. Let $sp(\mathbf{A})=\{\{\lambda_1,\bar{\lambda}_1\}, \cdots, \{\lambda_p,\bar{\lambda}_p\}, \lambda_{p+1}, \cdots, \lambda_{\gamma}\}$ with the first $p$ pairs representing the non-real eigenvalues, and $\lambda_{p+1}$ to $\lambda_{\gamma}$ representing the real eigenvalues of $\mathbf{A}$. Then, the \textit{real Jordan canonical form theorem} \cite[Theorem 3.4.1.5]{horn} can be stated as follows.
\begin{thm} There exists a real similarity transformation matrix $\mathbf{T}$ that transforms the state transition matrix $\mathbf{A}$ in \eqref{eqn:plant} to a real block diagonal matrix $\mathbf{M}$ given by $\mathbf{M}=diag(\mathbf{W}(\lambda_1), \cdots, \mathbf{W}(\lambda_p), \mathbf{V}(\lambda_{p+1}), \cdots, \mathbf{V}(\lambda_{\gamma}))$.
\label{thm:RealJordan}
\end{thm}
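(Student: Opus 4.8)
The plan is to assemble the statement from the complex Jordan canonical form together with the block-wise real reduction already recorded in the preliminaries, and then to argue that the resulting similarity can be realized over $\mathbb{R}$. First I would bring $\mathbf{A}$ into its complex Jordan canonical form. Since $\mathbf{A}$ is real, its characteristic polynomial has real coefficients, so the non-real eigenvalues occur in complex-conjugate pairs; moreover, by the cited \cite[Lemma 3.1.18]{horn}, $\lambda$ and $\bar{\lambda}$ share the same Jordan structure, so the Jordan blocks for $\bar{\lambda}$ are precisely the conjugates of those for $\lambda$. Reordering the blocks so that each conjugate pair is adjacent, I obtain that $\mathbf{A}$ is similar over $\mathbb{C}$ to
\[ \mathbf{J}=diag\bigl(\mathbf{V}(\lambda_1),\mathbf{V}(\bar{\lambda}_1),\ldots,\mathbf{V}(\lambda_p),\mathbf{V}(\bar{\lambda}_p),\mathbf{V}(\lambda_{p+1}),\ldots,\mathbf{V}(\lambda_{\gamma})\bigr), \]
where the last $\gamma-p$ blocks, corresponding to the real eigenvalues, are already real.

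Next I would treat each conjugate pair. For a non-real $\lambda=a+ib$, the preliminaries already record that $diag(\mathbf{V}(\lambda),\mathbf{V}(\bar{\lambda}))$ is similar to the real block $\mathbf{W}(\lambda)$, with $\mathbf{D}(a,b)$ on the main block diagonal and $\mathbf{I}_2$ on the superdiagonal. Concretely I would make this explicit by pairing each generalized eigenvector $\mathbf{v}$ in a Jordan chain for $\lambda$ with its conjugate $\bar{\mathbf{v}}$ (a corresponding chain for $\bar{\lambda}$), and passing to real and imaginary parts: writing $\mathbf{v}=\mathbf{p}+i\mathbf{q}$ with $\mathbf{p},\mathbf{q}$ real, the vectors $\mathbf{p},\mathbf{q}$ span the same conjugation-invariant (hence genuinely real) invariant subspace, and in this suitably ordered real basis the action of $\mathbf{A}$ is exactly $\mathbf{W}(\lambda)$. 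Stacking these block transformations together with the identity on the real-eigenvalue blocks produces a single invertible matrix carrying $\mathbf{J}$ to the real block-diagonal matrix $\mathbf{M}=diag(\mathbf{W}(\lambda_1),\ldots,\mathbf{W}(\lambda_p),\mathbf{V}(\lambda_{p+1}),\ldots,\mathbf{V}(\lambda_{\gamma}))$. Composed with the Jordan similarity, this shows $\mathbf{A}$ and $\mathbf{M}$ are similar over $\mathbb{C}$.

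The step I expect to be the crux is guaranteeing that the transformation $\mathbf{T}$ can be taken \emph{real}, since the generalized eigenvectors are complex and the composite similarity is a priori complex. I would close this gap with the standard fact that two real matrices similar over $\mathbb{C}$ are also similar over $\mathbb{R}$: if $\mathbf{S}=\mathbf{P}+i\mathbf{Q}$ is invertible with $\mathbf{A}\mathbf{S}=\mathbf{S}\mathbf{M}$ and $\mathbf{A},\mathbf{M}$ real, then matching real and imaginary parts gives $\mathbf{A}\mathbf{P}=\mathbf{P}\mathbf{M}$ and $\mathbf{A}\mathbf{Q}=\mathbf{Q}\mathbf{M}$, whence $\mathbf{A}(\mathbf{P}+t\mathbf{Q})=(\mathbf{P}+t\mathbf{Q})\mathbf{M}$ for every scalar $t$. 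The polynomial $t\mapsto\det(\mathbf{P}+t\mathbf{Q})$ is not identically zero, because its value at $t=i$ equals $\det\mathbf{S}\neq 0$; hence some real $t_0$ makes $\mathbf{T}=\mathbf{P}+t_0\mathbf{Q}$ real and invertible, and then $\mathbf{T}^{-1}\mathbf{A}\mathbf{T}=\mathbf{M}$, as required.

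I note that the direct real-basis construction sketched in the second paragraph in fact already yields a real $\mathbf{T}$, so it bypasses this last argument entirely; the real-versus-complex lemma is the clean alternative if one prefers to invoke the block similarities as black boxes. The only genuinely computational part is the bookkeeping of chain lengths and the verification that the ordered $\mathbf{p},\mathbf{q}$ basis reproduces the stated $\mathbf{D}(a,b)$ diagonal and $\mathbf{I}_2$ superdiagonal pattern of $\mathbf{W}(\lambda)$, which is routine given the structure already recorded in the preliminaries and in \cite[Theorem 3.4.1.5]{horn}.
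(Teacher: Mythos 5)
Your proposal is correct, but note that the paper itself contains no proof of this statement: it is quoted verbatim as the \emph{real Jordan canonical form theorem} from \cite[Theorem 3.4.1.5]{horn}, with the conjugate-pair block structure justified by \cite[Lemma 3.1.18]{horn}, and the authors simply invoke it. What you have written is essentially the standard textbook derivation of that cited result: reduction to the complex Jordan form, pairing each chain for $\lambda=a+ib$ with its conjugate chain for $\bar{\lambda}$ and passing to the real basis $\{\mathbf{p}_j,\mathbf{q}_j\}$ of real and imaginary parts (your computation that $\mathbf{A}\mathbf{p}_1=a\mathbf{p}_1-b\mathbf{q}_1$ and $\mathbf{A}\mathbf{q}_1=b\mathbf{p}_1+a\mathbf{q}_1$ indeed reproduces $\mathbf{D}(a,b)$ in the ordered basis $(\mathbf{p}_1,\mathbf{q}_1)$, and the chain relations give the $\mathbf{I}_2$ superdiagonal blocks within each chain), with linear independence of the real basis following from the independence of $\mathbf{v}_j$ and $\bar{\mathbf{v}}_j$ over $\mathbb{C}$. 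Your fallback lemma---that real matrices similar over $\mathbb{C}$ are similar over $\mathbb{R}$, via the polynomial $t\mapsto\det(\mathbf{P}+t\mathbf{Q})$, which is nonzero at $t=i$ and hence at some real $t_0$---is also correct and is the clean way to launder a complex similarity into a real one, though as you observe the direct real-basis construction already yields a real $\mathbf{T}$ and makes the lemma unnecessary. One small point in your favor: your chain-by-chain construction places $\mathbf{I}_2$ blocks only \emph{within} chains, which is the correct general structure when $g_{\mathbf{A}}(\lambda)>1$; the paper's informal description of $\mathbf{W}(\lambda)$ as having $(a_{\mathbf{A}}(\lambda)-1)$ superdiagonal $\mathbf{I}_2$ blocks is accurate only for a single Jordan block, so your bookkeeping is actually slightly more careful than the paper's paraphrase of the reference.
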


With $\mathbf{T}$ as in the above theorem, and $\mathbf{z}[k]=\mathbf{T}^{-1}\mathbf{x}[k]$, the dynamics \eqref{eqn:plant} are  transformed into the form
\begin{equation}
\begin{split}
\mathbf{z}[k+1] &= \mathbf{Mz}[k] \\
\mathbf{y}_i[k] &= \bar{\mathbf{C}}_i\mathbf{z}[k], \quad \forall i \in \{1, \cdots, N\} \enspace 
\end{split}
\label{eqn:plant_tr}
\end{equation}
where $\mathbf{M}=\mathbf{{T}^{-1}AT}$ and $\bar{\mathbf{C}}_i=\mathbf{C}_i\mathbf{T}$.
For a non-real eigenvalue pair $\{\lambda_j,\bar{\lambda}_j\} \in sp(\mathbf{A})$, let $\mathbf{z}^{(j)}[k] \in \mathbb{R}^{2a_{\mathbf{A}}(\lambda_j)}$ represent the portion of the state $\mathbf{z}[k]$ associated with the matrix $\mathbf{W}(\lambda_j)$. Similarly, for a real eigenvalue $\lambda_j\in sp(\mathbf{A})$, $\mathbf{z}^{(j)}[k] \in \mathbb{R}^{a_{\mathbf{A}}(\lambda_j)}$ is the portion of the state $\mathbf{z}[k]$ associated with the matrix $\mathbf{V}(\lambda_j)$. For each node $i$, we denote the detectable and undetectable eigenvalues by the sets $\mathcal{O}_{i}$ and {$\overline{\mathcal{O}}_i$}, respectively. Next, we introduce the notion of \textit{source nodes}.

\begin{definition}
(\textbf{Source nodes})
For each $\lambda_j \in \Lambda_{U}(\mathbf{A})$, the set of nodes that can detect $\lambda_j$ is denoted by $\mathcal{S}_j$, and called the set of \textit{source nodes} for $\lambda_j$.
\end{definition}

We now proceed to develop an estimation scheme that enables each regular node to estimate $\mathbf{z}[k]$ (from which they can obtain $\mathbf{x}[k]=\mathbf{Tz}[k]$). Accordingly, let $\hat{\mathbf{z}}^{(j)}_i[k]$ denote the estimate of $\mathbf{z}^{(j)}[k]$ (the portion of $\mathbf{z}[k]$ corresponding to the eigenvalue $\lambda_j$\footnote{Throughout the rest of the paper, the terminology ``$\mathbf{z}^{(j)}[k]$ corresponds to the eigenvalue $\lambda_j$" should be interpreted as $\mathbf{z}^{(j)}[k]$ corresponds to the eigenvalue pair $\{\lambda_j,\bar{\lambda}_j\}$ for a non-real eigenvalue $\lambda_j\in sp(\mathbf{A}).$}) maintained by node $i\in\mathcal{R}$. For each $\lambda_j\in \Lambda_{U}(\mathbf{A})$, our estimation scheme relies on separate strategies for nodes in $\mathcal{S}_j$ and $\mathcal{V}\setminus\mathcal{S}_j$. In particular, each node in $\mathcal{S}_j$ employs a Luenberger observer for estimating $\mathbf{z}^{(j)}[k]$. The nodes in $\mathcal{V}\setminus\mathcal{S}_j$, on the other hand, cannot detect the eigenvalue $\lambda_j$, and thus rely on a resilient consensus algorithm to estimate $\mathbf{z}^{(j)}[k]$. In what follows, we discuss these ideas in detail.

The first step in the estimation process involves the above common coordinate transformation given by $\mathbf{z}[k]=\mathbf{T}^{-1}\mathbf{x}[k]$, to be performed by each regular node of the graph. As this only requires knowledge of the system matrix $\mathbf{A}$ (which is assumed to be known by all the nodes), all of the nodes can do this in a distributed manner (e.g., by using an agreed-upon convention for ordering the eigenvalues and corresponding eigenvectors). Building on the general theme in \cite{allerton}, we first present a method that allows a regular node $i\in\mathcal{R}$ to estimate the locally detectable portion of the state $\mathbf{z}[k]$ \textit{without} communicating with neighbors. To this end, consider the following result. 

\begin{lemma}
Let $\lambda_j \in \mathcal{O}_i$ be a non-real eigenvalue. Let $\bar{\mathbf{C}}^{(j)}_i$ denote the columns of $\bar{\mathbf{C}}_i$ corresponding to $\mathbf{W}(\lambda_j)$ in \eqref{eqn:plant_tr}. Then, the pair $(\mathbf{W}(\lambda_j),\bar{\mathbf{C}}^{(j)}_i)$ is detectable.
\label{lemma:compconj}
\end{lemma}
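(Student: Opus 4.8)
The plan is to verify detectability of $(\mathbf{W}(\lambda_j),\bar{\mathbf{C}}^{(j)}_i)$ directly through the PBH test, exploiting the block-diagonal structure of $\mathbf{M}$ together with the fact that every matrix in the reduced system is real. Recall that detectability requires the PBH rank condition to hold at each eigenvalue of modulus at least $1$. The eigenvalues of $\mathbf{W}(\lambda_j)$ are exactly $\lambda_j$ and $\bar{\lambda}_j$, each of algebraic multiplicity $a_{\mathbf{A}}(\lambda_j)$; since $|\lambda_j|=|\bar{\lambda}_j|$, either both are stable (and detectability is immediate) or both are unstable. Hence the only nontrivial case is $\lambda_j \in \Lambda_{U}(\mathbf{A})$, and it suffices to verify the PBH condition at the two conjugate modes $\lambda_j$ and $\bar{\lambda}_j$.

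First I would transfer the hypothesis into the transformed coordinates. Since $\mathbf{M}=\mathbf{T}^{-1}\mathbf{A}\mathbf{T}$ and $\bar{\mathbf{C}}_i=\mathbf{C}_i\mathbf{T}$, for any scalar $\mu$ one has
\[
\begin{bmatrix} \mathbf{M}-\mu\mathbf{I} \\ \bar{\mathbf{C}}_i \end{bmatrix}
=\begin{bmatrix} \mathbf{T}^{-1} & \mathbf{0} \\ \mathbf{0} & \mathbf{I} \end{bmatrix}
\begin{bmatrix} \mathbf{A}-\mu\mathbf{I} \\ \mathbf{C}_i \end{bmatrix}\mathbf{T},
\]
so the PBH rank is invariant under the similarity transformation, and $\lambda_j\in\mathcal{O}_i$ gives $rank\left[\begin{smallmatrix} \mathbf{M}-\lambda_j\mathbf{I} \\ \bar{\mathbf{C}}_i \end{smallmatrix}\right]=n$. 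Next I would invoke the block structure: because the eigenvalue pairs are distinct, $\lambda_j$ is an eigenvalue of $\mathbf{M}$ only through the block $\mathbf{W}(\lambda_j)$, so every other diagonal block $\mathbf{M}_l-\lambda_j\mathbf{I}$ is nonsingular. Partitioning the columns of $\bar{\mathbf{C}}_i$ conformably with the blocks of $\mathbf{M}$ and using the invertible off-blocks to clear the associated column groups via elementary row and column operations, the rank splits as
\[
rank\begin{bmatrix} \mathbf{M}-\lambda_j\mathbf{I} \\ \bar{\mathbf{C}}_i \end{bmatrix}
=\bigl(n-2a_{\mathbf{A}}(\lambda_j)\bigr)
+rank\begin{bmatrix} \mathbf{W}(\lambda_j)-\lambda_j\mathbf{I} \\ \bar{\mathbf{C}}^{(j)}_i \end{bmatrix}.
\]
Combining with the left-hand side being $n$ yields $rank\left[\begin{smallmatrix} \mathbf{W}(\lambda_j)-\lambda_j\mathbf{I} \\ \bar{\mathbf{C}}^{(j)}_i \end{smallmatrix}\right]=2a_{\mathbf{A}}(\lambda_j)$, i.e.\ the PBH condition for the reduced pair holds at $\lambda_j$.

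Finally, to secure the condition at the conjugate mode $\bar{\lambda}_j$, I would use that $\mathbf{W}(\lambda_j)$ and $\bar{\mathbf{C}}^{(j)}_i$ are \emph{real}: entrywise complex conjugation fixes both matrices and carries $\left[\begin{smallmatrix} \mathbf{W}(\lambda_j)-\lambda_j\mathbf{I} \\ \bar{\mathbf{C}}^{(j)}_i \end{smallmatrix}\right]$ to $\left[\begin{smallmatrix} \mathbf{W}(\lambda_j)-\bar{\lambda}_j\mathbf{I} \\ \bar{\mathbf{C}}^{(j)}_i \end{smallmatrix}\right]$ while preserving rank, so the PBH condition holds automatically at $\bar{\lambda}_j$ as well. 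As $\lambda_j$ and $\bar{\lambda}_j$ exhaust the unstable modes of $\mathbf{W}(\lambda_j)$, detectability follows. I expect the rank-splitting step to be the only delicate point: one must argue carefully that the nonsingular off-diagonal blocks decouple cleanly and contribute exactly $n-2a_{\mathbf{A}}(\lambda_j)$, leaving precisely the reduced PBH matrix of the $j$-th block. The conjugation argument that bundles both conjugate modes of a single real Jordan block is the conceptual key that makes detecting $\lambda_j$ alone sufficient.
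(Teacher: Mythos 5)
Your proof is correct and follows essentially the same route as the paper's: both rest on PBH, similarity invariance of the PBH rank, the block structure of $\mathbf{M}$ isolating $\mathbf{W}(\lambda_j)$, and complex conjugation of real matrices to couple the modes $\lambda_j$ and $\bar{\lambda}_j$. The only (cosmetic) difference is ordering — the paper conjugates first to show $\bar{\lambda}_j\in\mathcal{O}_i$ for the original pair $(\mathbf{A},\mathbf{C}_i)$ and then invokes the block structure, whereas you reduce to the block $\mathbf{W}(\lambda_j)$ first (making explicit the rank-splitting the paper dismisses as ``readily'' following) and conjugate the reduced real pair afterward.
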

\begin{proof} We claim that $\lambda_j \in \mathcal{O}_i$ if and only if $\bar{\lambda}_j \in \mathcal{O}_i$. It suffices to prove necessity since the proof for sufficiency will follow an identical argument. We prove necessity by contradiction. Suppose node $i$ can detect $\lambda_j$ (i.e., $\lambda_j \in \mathcal{O}_i$), but cannot detect $\bar{\lambda}_j$. Then, there exists $\mathbf{v} \neq \mathbf{0}$ such that $\mathbf{Av}=\bar{\lambda}_j\mathbf{v}$ and $\mathbf{C}_i\mathbf{v}=\mathbf{0}$. Taking complex conjugates on both sides of these equations reveals that node $i$ cannot detect $\lambda_j$, leading to the desired contradiction. Given that a similarity transformation maps $(\mathbf{A,C}_i)$ to $(\mathbf{M,\bar{C}}_i)$, it then follows that $\{\lambda_j,\bar{\lambda}_j\}$ are detectable eigenvalues w.r.t. $(\mathbf{M,\bar{C}}_i)$. Detectability of the pair $(\mathbf{W}(\lambda_j),\bar{\mathbf{C}}^{(j)}_i)$ then follows readily from the PBH test by noting the structure of the matrix $\mathbf{M}$.
\end{proof}

Let $\mathcal{O}_i=\{\{\lambda_{n_1},\bar{\lambda}_{n_1}\}, \cdots, \{\lambda_{n_{p_i}},\bar{\lambda}_{n_{p_i}}\}, \lambda_{n_{p_i+1}}, \cdots, \lambda_{n_{\gamma_i}}\}$, where the first $p_i$ pairs represent the non-real eigenvalues, and $\lambda_{n_{p_i+1}}$ to $\lambda_{n_{\gamma_i}}$ represent the real eigenvalues of $\mathbf{A}$ that are detectable w.r.t. the measurements of node $i$. Let $\mathbf{M}_{\mathcal{O}_i} = diag(\mathbf{W}(\lambda_{n_1}), \cdots, \mathbf{W}(\lambda_{n_{p_i}}), \mathbf{V}(\lambda_{n_{p_i+1}}),$ $\cdots,
 \mathbf{V}(\lambda_{n_{\gamma_i}}))$. Let $\mathbf{C}_{\mathcal{O}_i}$ represent the columns of $\bar{\mathbf{C}}_{i}$ corresponding to the matrix $\mathbf{M}_{\mathcal{O}_i}$, and $\mathbf{z}_{\mathcal{O}_i}[k]$ denote the portion of the state $\mathbf{z}[k]$ corresponding to the detectable eigenvalues of node $i$, i.e., corresponding to $\mathcal{O}_i$. Based on Lemma \ref{lemma:compconj}, it is easy to see that the pair $(\mathbf{M}_{\mathcal{O}_i}, \mathbf{C}_{\mathcal{O}_i})$ is detectable. Thus, a standard Luenberger observer can be locally constructed by node $i$ for estimating $\mathbf{z}_{\mathcal{O}_i}[k]$. The {details} of such a construction are straightforward, and are similar to those in Section VI-A of \cite{mitraarchive}. We thus skip minor details and state the following result which will be useful later on.
 
\begin{lemma}
For each regular node $i \in \mathcal{R}$ and each $\lambda_j \in \mathcal{O}_i$, a Luenberger observer can be locally constructed by node $i$ such that $\lim_{k\to\infty}||\hat{\mathbf{z}}^{(j)}_{i}[k]-\mathbf{z}^{(j)}[k]||=0$.
\label{lemma:luen}
\end{lemma}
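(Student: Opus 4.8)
The plan is to have node $i$ run a Luenberger observer driven only by its own measurement $\mathbf{y}_i[k]=\bar{\mathbf{C}}_i\mathbf{z}[k]$, and to convert the already-established detectability of $(\mathbf{M}_{\mathcal{O}_i},\mathbf{C}_{\mathcal{O}_i})$ into geometric convergence of the estimation error. The natural first move is to pick an observer gain $\mathbf{L}_i$ so that $\mathbf{M}_{\mathcal{O}_i}-\mathbf{L}_i\mathbf{C}_{\mathcal{O}_i}$ is Schur stable, run the internal model $\hat{\mathbf{z}}_{\mathcal{O}_i}[k+1]=\mathbf{M}_{\mathcal{O}_i}\hat{\mathbf{z}}_{\mathcal{O}_i}[k]+\mathbf{L}_i(\mathbf{y}_i[k]-\mathbf{C}_{\mathcal{O}_i}\hat{\mathbf{z}}_{\mathcal{O}_i}[k])$, and read off $\hat{\mathbf{z}}^{(j)}_i[k]$ as the sub-block of $\hat{\mathbf{z}}_{\mathcal{O}_i}[k]$ associated with $\lambda_j\in\mathcal{O}_i$. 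Since this estimator uses no information from neighbors, the regular/adversarial distinction is irrelevant and correctness reduces to an entirely local argument.

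The main obstacle is that $\mathbf{y}_i[k]$ does not equal $\mathbf{C}_{\mathcal{O}_i}\mathbf{z}_{\mathcal{O}_i}[k]$ but also carries the term $\mathbf{C}_{\overline{\mathcal{O}}_i}\mathbf{z}_{\overline{\mathcal{O}}_i}[k]$ coming from the undetectable modes; because an undetectable eigenvalue only forces its eigenvector (not the whole Jordan chain) into $\ker\mathbf{C}_i$, the block $\mathbf{C}_{\overline{\mathcal{O}}_i}$ need not vanish once non-trivial Jordan blocks are present, and as the modes in $\overline{\mathcal{O}}_i$ are unstable this stray term could wreck convergence. To neutralize it I would pass to the Kalman observability decomposition of $(\mathbf{M},\bar{\mathbf{C}}_i)$: the unobservable subspace is $\mathbf{M}$-invariant, so in suitable coordinates the observable sub-state $\boldsymbol{\zeta}_o$ evolves autonomously under an observable pair $(\mathbf{M}_{oo},\mathbf{C}_o)$ while the measurement reduces exactly to $\mathbf{y}_i[k]=\mathbf{C}_o\boldsymbol{\zeta}_o[k]$, with no contamination from the unobservable directions. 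Building the observer on $\boldsymbol{\zeta}_o$ then yields a clean error recursion $\mathbf{e}[k+1]=(\mathbf{M}_{oo}-\bar{\mathbf{L}}_i\mathbf{C}_o)\mathbf{e}[k]$ and hence $\hat{\boldsymbol{\zeta}}_o[k]\to\boldsymbol{\zeta}_o[k]$ geometrically.

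It remains to recover each $\mathbf{z}^{(j)}[k]$, $\lambda_j\in\mathcal{O}_i$, from $\hat{\boldsymbol{\zeta}}_o[k]$, and here I would split $\mathcal{O}_i$ by stability. For an \emph{unstable} $\lambda_j\in\mathcal{O}_i$ detectability forces full observability, so (using the $\mathbf{M}$-invariance of the unobservable subspace together with the PBH test, which shows that any overlap of that subspace with a generalized eigenspace would contain an eigenvector and hence make $\lambda_j$ undetectable) the entire eigenspace lies inside the observable subspace and $\mathbf{z}^{(j)}[k]$ is a fixed linear image of $\boldsymbol{\zeta}_o[k]$, recovered from $\hat{\boldsymbol{\zeta}}_o[k]$. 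For a \emph{stable} $\lambda_j\in\mathcal{O}_i$ (which is counted as detectable regardless of observability) any unobservable component of $\mathbf{z}^{(j)}[k]$ is governed by the Schur block $\mathbf{V}(\lambda_j)$ and therefore decays to zero, so running the internal model with an arbitrary initialization makes the corresponding estimation error vanish on its own. Combining the two cases gives $\lim_{k\to\infty}\|\hat{\mathbf{z}}^{(j)}_i[k]-\mathbf{z}^{(j)}[k]\|=0$ for every $\lambda_j\in\mathcal{O}_i$, which is the claim; the only genuinely new difficulty relative to the distinct-eigenvalue construction of \cite{mitraarchive} is the possible non-vanishing of $\mathbf{C}_{\overline{\mathcal{O}}_i}$, which the observability-decomposition step is designed to absorb.
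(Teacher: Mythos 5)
Your proof is correct, but it takes a genuinely different route from the paper, which in fact offers no explicit proof of Lemma \ref{lemma:luen}: it observes via Lemma \ref{lemma:compconj} that the pair $(\mathbf{M}_{\mathcal{O}_i},\mathbf{C}_{\mathcal{O}_i})$ is detectable and defers the ``standard'' construction to Section VI-A of \cite{mitraarchive}. The construction alluded to there stays in the shared real-Jordan coordinates and drives the observer with the \emph{full} innovation $\mathbf{y}_i[k]-\mathbf{C}_{\mathcal{O}_i}\hat{\mathbf{z}}_{\mathcal{O}_i}[k]-\mathbf{C}_{\overline{\mathcal{O}}_i}\hat{\mathbf{z}}_{\overline{\mathcal{O}}_i}[k]$, i.e., it cancels precisely the contamination term you identified by substituting the consensus-based estimates of the undetectable sub-states; the Luenberger error is then driven by the (vanishing) consensus error, so convergence of the observer and the LFRE layer must be argued jointly as a cascade. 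Your route instead passes to node $i$'s own Kalman observability decomposition of $(\mathbf{M},\bar{\mathbf{C}}_i)$, builds the observer on the uncontaminated observable sub-state, and recovers $\mathbf{z}^{(j)}[k]$ by a static linear map. What this buys is a version of Lemma \ref{lemma:luen} that is literally unconditional---independent of neighbors, of the LFRE dynamics, and of any adversarial behavior---which makes the base case of the induction in the proof of Theorem \ref{thm:consensus} cleaner than with the cited construction (where the level-$0$ estimates are a priori entangled with consensus errors); the cost is a node-specific second coordinate change, which is harmless since the recovered estimates still live in the shared Jordan coordinates used by the LFRE exchange. Your diagnosis of the obstacle is also sound and worth having on record: with nontrivial Jordan blocks $\mathbf{C}_{\overline{\mathcal{O}}_i}$ need not vanish---e.g., $\mathbf{M}=diag\left(\left[\begin{smallmatrix}2&1\\0&2\end{smallmatrix}\right],3\right)$ with $\bar{\mathbf{C}}_i=\left[\begin{smallmatrix}0&1&1\end{smallmatrix}\right]$ makes $\lambda=3$ detectable yet renders the naive observer on $(\mathbf{M}_{\mathcal{O}_i},\mathbf{C}_{\mathcal{O}_i})$ divergent---so the ``straightforward'' construction cannot simply drop that term. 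One step you should make explicit: to conclude that $\mathbf{z}^{(j)}[k]$ is a fixed linear image of $\boldsymbol{\zeta}_o[k]$, the trivial intersection $\mathcal{N}\cap E_j=\{0\}$ (unobservable subspace versus real generalized eigenspace of $\lambda_j$) is not by itself enough; you need $\mathcal{N}\subseteq\bigoplus_{l\neq j}E_l$, which follows from the standard fact that the $\mathbf{M}$-invariant subspace $\mathcal{N}$ decomposes as $\mathcal{N}=\bigoplus_l(\mathcal{N}\cap E_l)$---your invariance-plus-PBH argument supplies the ingredients but does not state this. Finally, a citation slip: the distinct-eigenvalue simplification you mention is the setting of \cite{mitraCDC}; \cite{mitraarchive} already treats arbitrary spectra.
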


Based on the previous result, we see that a regular node $i$ can estimate certain portions of the state space without having to exchange information with neighbors. The challenge, however, lies in estimating the locally undetectable portion of the state in the presence of adversaries. The following section presents a resilient consensus based strategy to address this issue.

\subsection{Local-Filtering Based Resilient Estimation}
\label{sec:LFSE}
 For any $\lambda_j \in sp(\mathbf{A})$, let $z^{(jm)}[k]$ denote the $m$-th component of the vector $\mathbf{z}^{(j)}[k]$, and let $\hat{z}^{(jm)}_i[k]$ denote the estimate of that component maintained by node $i\in\mathcal{V}$. Consider an unstable eigenvalue $\lambda_j \in $ {$\overline{\mathcal{O}}_i$}. For such an eigenvalue, node $i$ has to rely on the information received from its neighbors, {some of which} might be adversarial, in order to estimate $\mathbf{z}^{(j)}[k]$. To this end, we propose a  resilient consensus algorithm that requires each regular node $i \in \mathcal{V} \setminus \mathcal{S}_j$ to update its estimate of $\mathbf{z}^{(j)}[k]$ using the following two stage filtering strategy:
\begin{itemize}
\item[1)] At each time-step $k$, each regular node $i$ collects the state estimates of $\mathbf{z}^{(j)}[k]$ received from \textit{only} those neighbors that belong to a certain subset $\mathcal{N}^{(j)}_i \subseteq \mathcal{N}_i$ (to be defined later). For every component $m$ of $\mathbf{z}^{(j)}[k]$, the estimates of $z^{(jm)}[k]$ received from nodes in $\mathcal{N}^{(j)}_i$ are sorted from largest to smallest.
\item[2)] For each component $m$ of $\mathbf{z}^{(j)}[k]$, node $i$ removes the largest and smallest $f$ estimates (i.e., removes $2f$ estimates in all) of $z^{(jm)}[k]$ received from nodes in $\mathcal{N}^{(j)}_i$, and computes the quantity:
\begin{equation}
\bar{z}^{(jm)}_i[k]=\sum_{l \in \mathcal{M}^{(jm)}_i[k]}w^{(jm)}_{il}[k]\hat{z}^{(jm)}_{l}[k],
\label{eqn:update rule1}
\end{equation}
where $\mathcal{M}^{(jm)}_i[k] \subset \mathcal{N}^{(j)}_i (\subseteq \mathcal{N}_i)$ is the set of nodes from which node $i$ chooses to accept estimates of $z^{(jm)}[k]$ at time-step $k$, after removing the $f$ largest and $f$ smallest estimates of ${z}^{(jm)}[k]$ from $\mathcal{N}^{(j)}_i$. Node $i$ assigns the weight $w^{(jm)}_{il}[k]$ to the $l$-th node at the $k$-th time-step for estimating the $m$-th component of $\mathbf{z}^{(j)}[k]$. The weights are nonnegative and chosen to satisfy $\sum_{l \in \mathcal{M}^{(jm)}_i[k]}{w^{(jm)}_{il}}[k]=1, \forall \lambda_j \in$ {$\overline{\mathcal{O}}_i$} and for each component $m$ of $\mathbf{z}^{(j)}[k]$. With the quantities $\bar{z}^{(jm)}_i[k]$ in hand, node $i$ updates $\hat{\mathbf{z}}^{(j)}_i[k]$ as follows:
\begin{equation}
\hat{\mathbf{z}}^{(j)}_i[k+1]=
    \begin{cases}
    \mathbf{V}(\lambda_j)\bar{\mathbf{z}}^{(j)}_i[k], & \text{if}\ \lambda_j \in {\overline{\mathcal{O}}_i} \color{black}{\ \text{is real}} \\
      \mathbf{W}(\lambda_j)\bar{\mathbf{z}}^{(j)}_i[k], & \text{if}\ \lambda_j \in {\overline{\mathcal{O}}_i} \color{black}{\ \text{is not real}},
    \end{cases}
\label{eq:update2}
\end{equation}
where $\bar{\mathbf{z}}^{(j)}_i[k]={\begin{bmatrix} \bar{z}^{(j1)}_{i}[k], \cdots, \bar{z}^{(j\sigma_j)}_i[k]\end{bmatrix}}^{T}$, $\sigma_j=a_{\mathbf{A}}(\lambda_j)$ if $\lambda_j\in \overline{\mathcal{O}}_i$ is real, and $\sigma_j=2a_{\mathbf{A}}(\lambda_j)$ if $\lambda_j\in \overline{\mathcal{O}}_i$ is not real.
\end{itemize}
We refer to the above algorithm as the Local-Filtering based Resilient Estimation (LFRE) algorithm. For implementing this algorithm, a regular node $i$ needs to construct the set $\mathcal{N}^{(j)}_i$, $\forall\lambda_j \in$ {$\overline{\mathcal{O}}_i$}, based on the relative positions of its neighbors (with respect to its own position) in  $\mathcal{G}$. We will provide the exact definition of $\mathcal{N}^{(j)}_i$, and a distributed algorithm for constructing such a set in the following sections where we analyze the convergence of the LFRE algorithm. We conclude this section by commenting on certain features of the LFRE algorithm.

\begin{remark}
The rationale behind performing a real Jordan canonical decomposition at every node (as opposed to a standard Jordan transformation) is to ensure that the state estimates featuring in equations \eqref{eqn:update rule1} and \eqref{eq:update2} are real at every time-step, thereby making the sorting operation performed in Step 1 of the algorithm meaningful. At any time-step, if a regular node $i$ either receives a non-real estimate of $z^{(jm)}[k]$ from some node $l \in \mathcal{N}^{(j)}_i$ or does not receive an estimate at all, it would immediately identify node $l$ as an adversarial node, and simply assign a $0$ value to node $l$'s estimate of $z^{(jm)}[k]$. Note that every regular node in $\mathcal{N}^{(j)}_i$ will always transmit a real estimate to node $i$ at every time-step.
\end{remark}

\begin{remark}
The strategy of disregarding the most extreme values in one's neighborhood, and using a convex combination of the rest for performing linear scalar updates, has been used for designing resilient distributed algorithms for consensus \cite{WMSR,rescons,vaidyacons} and optimization \cite{Sundaramopt,su} problems. In this paper, we show that such algorithms can also be used for resilient distributed state estimation, with certain substantial differences arising from the fact that the nodes are trying to track the state of an external dynamical system.
\end{remark}

{\begin{remark} The consensus weights $w^{(jm)}_{il}$ appearing in equation \eqref{eqn:update rule1} can be chosen arbitrarily to achieve an exponential rate of convergence, as long as the weights meet the rules specified by the LFRE algorithm. Since our primary focus is on resilience against worst-case adversarial behavior, the problem of optimizing such weights (or exploiting sensor memory) for achieving improved performance against noise is not considered in this paper. In a non-adversarial setting (i.e., when $f=0$), the proposed LFRE algorithm will continue to guarantee exponential convergence in the absence of noise, and bounded mean square error in the presence of i.i.d. noise with bounded second moments (provided the topological conditions outlined in Section \ref{sec:feasible} are met). However, disregarding the estimates of certain neighbors in the absence of attacks may potentially degrade performance against noise; we do not delve deeper into this topic here.
\end{remark}}

{It should be noted that the algorithmic development in this section can be considerably simplified if more structure is imposed on the system matrix $\mathbf{A}$ (for instance, the assumption made in \cite{mitraCDC} that $\mathbf{A}$ has only real, distinct eigenvalues).}
\section{Analysis of the Resilient Distributed Estimation Strategy}
\label{sec:analysis}
In this section, we provide our main result concerning the convergence of the LFRE algorithm. Let $\Omega_{U}(\mathbf{A}) \triangleq \{\lambda_j \in \Lambda_{U}(\mathbf{A})| \mathcal{V}\setminus\mathcal{S}_j \, \textrm{is non-empty} \}$. By this definition, all nodes are source nodes for each eigenvalue in $\Lambda_{U}(\mathbf{A}) \setminus \Omega_{U}(\mathbf{A})$, and are hence capable of recovering the corresponding portions of the state based on locally constructed Luenberger observers (as discussed in Section \ref{sec:prelim}). Consequently, the LFRE algorithm specifically applies to only those eigenvalues that belong to $\Omega_{U}(\mathbf{A})$. Consider the following definition.

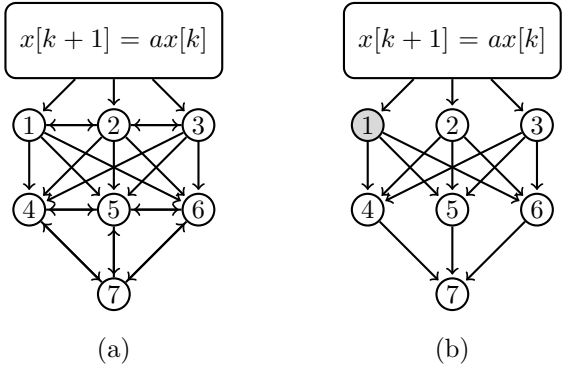
\begin{figure}[t]
\begin{center}
\begin{tikzpicture}
[->,shorten >=1pt,scale=.75,inner sep=1pt, minimum size=12pt, auto=center, node distance=3cm,
  thick, node/.style={circle, draw=black, thick},]
\tikzstyle{block} = [rectangle, draw, fill=white, 
    text width=5em, text centered, rounded corners, minimum height=4em];
\tikzstyle{block} = [rectangle, draw, fill=white, 
    text width=8em, text centered, rounded corners, minimum height=1cm, minimum width=1cm];
 \node [block]  at (0,9) (plant) {$x[k+1]=ax[k]$};
\node [circle, draw, fill=white](n1) at (-1.5,7.5)  (1)  {1};
\node [circle, draw, fill=white](n2) at (0,7.5)   (2)  {2};
\node [circle, draw, fill=white](n3) at (1.5,7.5)   (3)  {3};
\node [circle, draw, fill=white](n4) at (-1.5,6)  (4)  {4};
\node [circle, draw, fill=white](n5) at (0,6)   (5)  {5};
\node [circle, draw, fill=white] (n6) at (1.5,6)   (6)  {6};
\node [circle, draw, fill=white](n7) at (0,4.5)   (7)  {7};
\node [ ] (n8) at (0,3.5)  (8) {(a)};

\path[every node/.style={font=\sffamily\small}]

     (plant)
         edge [right] node [] {}(1)
         edge [right] node [] {}(2)
         edge [right] node [] {}(3)

     (1)
         edge [right] node [] {}(2)
     
         edge [right] node [] {}(4)
         edge [right] node [] {}(5)
         edge [right] node [] {}(6)
         
     (2) 
         edge [right] node [] {}(1)
         edge [right] node [] {}(3)
         edge [right] node [] {}(4)
         edge [right] node [] {}(5)
         edge [right] node [] {}(6)
         
     (3) 
     edge [right] node [] {}(2)
         edge [right] node [] {}(4)
         edge [right] node [] {}(5)
         edge [right] node [] {}(6)
         
 (4)     edge [right] node [] {}(5)
         edge [right] node [] {}(7)
         
 (5)     edge [right] node [] {}(4)
         edge [right] node [] {}(6)
         edge [right] node [] {}(7)
         
 (6)     
         edge [right] node [] {}(5)
         edge [right] node [] {}(7)
         
(7)
         
       edge [right] node [] {}(4)
       edge [right] node [] {}(5)
       edge [right] node [] {}(6);        
            
\node [block]  at (6,9) (plant) {$x[k+1]=ax[k]$};          
\node [circle, draw, fill=gray!30](n1) at (4.5,7.5)  (1)  {1};
\node [circle, draw, fill=white](n2) at (6,7.5)   (2)  {2};
\node [circle, draw, fill=white](n3) at (7.5,7.5)   (3)  {3};
\node [circle, draw, fill=white](n4) at (4.5,6)  (4)  {4};
\node [circle, draw, fill=white](n5) at (6,6)   (5)  {5};
\node [circle, draw, fill=white] (n6) at (7.5,6)   (6)  {6};
\node [circle, draw, fill=white](n7) at (6,4.5)   (7)  {7};
\node [ ] (n8) at (6,3.5)  (10) {(b)};
\path[every node/.style={font=\sffamily\small}]
     (plant)
         edge [right] node [] {}(1)
         edge [right] node [] {}(2)
         edge [right] node [] {}(3)

     (1)
         edge [right] node [] {}(4)
         edge [right] node [] {}(5)
         edge [right] node [] {}(6)
         
     (2)
         edge [right] node [] {}(4)
         edge [right] node [] {}(5)
         edge [right] node [] {}(6)
         
     (3) 
         edge [right] node [] {}(4)
         edge [right] node [] {}(5)
         edge [right] node [] {}(6)
         
 (4) 
         edge [right] node [] {}(7)
         
 (5) 
         edge [right] node [] {}(7)
         
 (6) 
         edge [right] node [] {}(7); 
            
\end{tikzpicture}
\end{center}
\caption{A scalar unstable plant is monitored by a network of 7 nodes as depicted by the figure on the left. Nodes 1, 2 and 3 are the source nodes for this system. The figure on the right represents a subgraph of the original graph satisfying the properties of a MEDAG in Definition \ref{defn:MEDAG} for all $1$-local sets (i.e., with $f=1$). For example, when $\mathcal{A} = \{1\}$ (as shown in the right figure), every non-source node has at least $2f+1 = 3$ neighbors. The levels that partition $\mathcal{R}=\mathcal{V}\setminus\mathcal{A}$ are level 0 with nodes 2 and 3, level 1 with nodes 4, 5 and 6, and level 2 with node 7. Each regular node has all its regular neighbors in levels that are numbered lower than its own.}
\label{fig:sim_DAG}
\end{figure}

\begin{definition}\label{defn:MEDAG}
(\textbf{Mode Estimation Directed Acyclic Graph (MEDAG)}) Consider an eigenvalue $\lambda_j \in \Omega_{U}(\mathbf{A})$. Suppose there exists a spanning subgraph $\mathcal{G}_j = (\mathcal{V},\mathcal{E}_j)$ of $\mathcal{G}$  with the following properties for all $f$-local sets $\mathcal{A}$ and $\mathcal{R}=\mathcal{V}\setminus\mathcal{A}$.
\begin{itemize}
\item[(i)] If $i \in \{\mathcal{V}\setminus\mathcal{S}_j\} \cap \mathcal{R}$, then $|\mathcal{N}^{(j)}_i| \geq 2f+1$, where $\mathcal{N}^{(j)}_i=\{l|(l,i) \in \mathcal{E}_j\}$ represents the neighborhood of node $i$ in $\mathcal{G}_j$.
\item[(ii)] There exists a partition of $\mathcal{R}$ into the sets $\{\mathcal{L}^{(j)}_{0}, \cdots , \mathcal{L}^{(j)}_{T_j}\}$, where $T_j \in \mathbb{N}_{+}$, $\mathcal{L}^{(j)}_{0} = \mathcal{S}_j \cap \mathcal{R}$, and if $i \in \mathcal{L}^{(j)}_q$ (where $1 \leq q \leq T_j)$, then $\mathcal{N}^{(j)}_i \cap \mathcal{R} \subseteq \bigcup^{q-1}_{r=0} \mathcal{L}^{(j)}_r$. Furthermore, $\mathcal{N}^{(j)}_i=\emptyset, \forall i \in \mathcal{L}^{(j)}_0$.
\end{itemize}
Then, we call $\mathcal{G}_j$ a \textit{Mode Estimation Directed Acyclic Graph (MEDAG)} for $\lambda_j \in \Omega_{U}(\mathbf{A})$.
\end{definition}
An example of a MEDAG is shown in Figure \ref{fig:sim_DAG}.
The ``for all $\mathcal{A}$" in the definition accounts for the fact that the set of adversarial nodes during the process of state estimation is unknown, and hence can be any $f$-local set of $\mathcal{V}$. Note that $T_j$ and the levels $\mathcal{L}^{(j)}_0$ to $\mathcal{L}^{(j)}_{T_j}$ can vary across different $f$-local sets. For a given $f$-local set $\mathcal{A}$, we say a regular node $i\in\mathcal{L}^{(j)}_m$ ``belongs to level $m$", where the levels are indicative of the distances of the regular nodes from the source set $\mathcal{S}_j$. The first property indicates that every regular node $i \in \mathcal{V}\setminus{\mathcal{S}_j}$ has at least $(2f+1)$ neighbors in the subgraph $\mathcal{G}_j$, while the second property indicates that all its regular neighbors in such a subgraph belong to levels strictly preceding its own level. In essence, the edges of the MEDAG $\mathcal{G}_j$ represent a medium for transmitting information securely from the source nodes $\mathcal{S}_j$ to the non-source nodes, by preventing the adversaries from forming a bottleneck between such nodes. Intuitively, this requires redundant nodes and edges, and such a requirement is met by the first property of the MEDAG. In particular, as regards measurement redundancy, it follows from the definition that for each $\lambda_j \in \Omega_{U}(\mathbf{A})$, a MEDAG $\mathcal{G}_j$ contains at least $(2f+1)$ source nodes that can detect $\lambda_j$.\footnote{Recall from the discussion immediately following Proposition \ref{prop:fund} that such a condition is in fact necessary for systems with distinct eigenvalues.} The LFRE algorithm described in the previous section relies on a special \textit{uni-directional} information flow pattern that requires a node $i$ to listen to \textit{only} its neighbors in $\mathcal{N}^{(j)}_i$ for estimating $\mathbf{z}^{(j)}[k]$. The second property of a MEDAG then indicates that nodes in level $m$ only use the estimates of regular nodes in levels $0$ to $m-1$ for recovering $\mathbf{z}^{(j)}[k]$. The implications of the above properties will become apparent in the proof of the following result which provides a sufficient condition for solving Problem 1 based on our approach.

\begin{thm} Suppose that $\mathcal{G}$ contains a MEDAG $\mathcal{G}_j$ for each $\lambda_j \in \Omega_{U}(\mathbf{A})$. Then, based on the LFRE dynamics described by equations \eqref{eqn:update rule1} and \eqref{eq:update2}, each regular node $i\in\mathcal{R}$ can asymptotically estimate the state of the plant, despite the actions of any $f$-locally bounded set of Byzantine adversaries.
\label{thm:consensus}
\end{thm}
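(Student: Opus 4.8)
The plan is to work in the transformed coordinates $\mathbf{z}[k]=\mathbf{T}^{-1}\mathbf{x}[k]$ and show that every regular node drives the error in each Jordan block $\mathbf{z}^{(j)}[k]$ to zero; since $\hat{\mathbf{x}}_i[k]=\mathbf{T}\hat{\mathbf{z}}_i[k]$, convergence of all blocks gives the claim. For stable modes and for unstable modes in $\Lambda_{U}(\mathbf{A})\setminus\Omega_{U}(\mathbf{A})$ every node is a source node, so Lemma \ref{lemma:luen} already provides exponentially fast convergence via the locally built Luenberger observers. The only modes needing the LFRE dynamics are the $\lambda_j\in\Omega_{U}(\mathbf{A})$, and for these I would fix the filtering set $\mathcal{N}^{(j)}_i$ appearing in \eqref{eqn:update rule1} to be the in-neighborhood of node $i$ in the MEDAG $\mathcal{G}_j$. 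The entire argument then reduces to proving that the mode-$j$ error of every non-source regular node vanishes, which I would establish by induction on the MEDAG levels $\mathcal{L}^{(j)}_0,\dots,\mathcal{L}^{(j)}_{T_j}$.

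The first ingredient is a per-step safety bound for the filtering step. Writing $\mathbf{e}^{(j)}_l[k]=\hat{\mathbf{z}}^{(j)}_l[k]-\mathbf{z}^{(j)}[k]$ and $\bar{\mathbf{e}}^{(j)}_i[k]=\bar{\mathbf{z}}^{(j)}_i[k]-\mathbf{z}^{(j)}[k]$, I would argue as follows. Property (i) of the MEDAG gives $|\mathcal{N}^{(j)}_i|\geq 2f+1$; since $\mathcal{N}^{(j)}_i\subseteq\mathcal{N}_i$ and $\mathcal{A}$ is $f$-local, at most $f$ of these neighbors are adversarial, so at least one estimate survives the removal of the top and bottom $f$ values in Step 2. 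A standard monotone-filtering argument then shows that, for each scalar component $m$, every surviving estimate $\hat{z}^{(jm)}_l[k]$ lies between the smallest and largest values held by the regular neighbors of $i$ in $\mathcal{N}^{(j)}_i$: a surviving value exceeding all regular values would be one of the at most $f$ adversarial values and hence be among the removed top $f$, a contradiction. Because $\bar{z}^{(jm)}_i[k]$ is a convex combination of surviving values (weights summing to one), subtracting the common $z^{(jm)}[k]$ yields the component-wise bound $|\bar{e}^{(jm)}_i[k]|\le\max_{l\in\mathcal{R}\cap\mathcal{N}^{(j)}_i}|e^{(jm)}_l[k]|$, and therefore $\|\bar{\mathbf{e}}^{(j)}_i[k]\|_{\infty}\le\max_{l\in\mathcal{R}\cap\mathcal{N}^{(j)}_i}\|\mathbf{e}^{(j)}_l[k]\|_{\infty}$.

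The second ingredient is the error recursion. Since $\mathbf{z}^{(j)}[k]$ evolves under the same block ($\mathbf{V}(\lambda_j)$ for real, $\mathbf{W}(\lambda_j)$ for non-real modes) that drives the update \eqref{eq:update2}, the errors satisfy $\mathbf{e}^{(j)}_i[k+1]=\mathbf{V}(\lambda_j)\bar{\mathbf{e}}^{(j)}_i[k]$, so letting $\mu_j$ be the induced $\infty$-norm of that block we obtain $\|\mathbf{e}^{(j)}_i[k+1]\|_{\infty}\le\mu_j\max_{l\in\mathcal{R}\cap\mathcal{N}^{(j)}_i}\|\mathbf{e}^{(j)}_l[k]\|_{\infty}$. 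For the base level $\mathcal{L}^{(j)}_0=\mathcal{S}_j\cap\mathcal{R}$, Lemma \ref{lemma:luen} (placing the error pole of the observable unstable mode) gives $\|\mathbf{e}^{(j)}_i[k]\|_{\infty}\le c_0\beta^k$ for some $\beta\in(0,1)$. I would then run an induction with the claim that nodes in level $q$ obey $\|\mathbf{e}^{(j)}_i[k]\|_{\infty}\le c_0(\mu_j/\beta)^q\beta^k$; property (ii) of the MEDAG guarantees that every regular neighbor of a level-$q$ node lies in a strictly lower level, so the inductive hypothesis applies to the right-hand side of the recursion and the bound propagates.

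The essential point, and the main obstacle relative to classical resilient consensus, is that the ``target'' $\mathbf{z}^{(j)}[k]$ is itself unstable, so each update amplifies the error by the factor $\mu_j\ge 1$ instead of contracting it, and a naive contraction argument fails. Finiteness of the MEDAG depth $T_j$ rescues the proof: the accumulated amplification $(\mu_j/\beta)^q$ is a constant independent of $k$, so for every fixed level $q\le T_j$ the bound $c_0(\mu_j/\beta)^q\beta^k\to 0$ as $k\to\infty$, driven by the geometric decay $\beta^k$ inherited from the source nodes. Assembling the convergence of all modes (Luenberger-estimated and LFRE-estimated) gives $\|\hat{\mathbf{z}}_i[k]-\mathbf{z}[k]\|\to 0$, hence $\|\hat{\mathbf{x}}_i[k]-\mathbf{x}[k]\|\to 0$ for every $i\in\mathcal{R}$, as asserted. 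The remaining care lies in the book-keeping for non-real modes and Jordan blocks of size larger than one, where the coupling induced by $\mathbf{W}(\lambda_j)$ or $\mathbf{V}(\lambda_j)$ must be absorbed into $\mu_j$ while the component-wise safety bound is preserved.
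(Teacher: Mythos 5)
Your proof is correct and takes essentially the same route as the paper's: fix $\mathcal{N}^{(j)}_i$ as the MEDAG in-neighborhood, handle detectable modes by Lemma \ref{lemma:luen}, derive the error recursion $\mathbf{e}^{(j)}_i[k+1]=\mathbf{W}(\lambda_j)\bar{\mathbf{e}}^{(j)}_i[k]$ (resp.\ $\mathbf{V}(\lambda_j)$), establish the trimming/sandwich property component-wise, and induct on the MEDAG levels, with the finite depth $T_j$ neutralizing the per-step amplification by the unstable block. The only difference is cosmetic: where the paper argues qualitatively (each component of $\bar{\mathbf{e}}^{(j)}_i[k]$ lies in the convex hull of the corresponding components of lower-level regular errors, which vanish asymptotically), you run the same induction quantitatively with $\infty$-norm bounds $c_0(\mu_j/\beta)^q\beta^k$, which as a small bonus makes the exponential convergence rate explicit.
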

The proof of the above theorem is given in Appendix \ref{app:proofthm4}. Notice that Theorem \ref{thm:consensus} hinges on the existence of a MEDAG $\mathcal{G}_j$, for each $\lambda_j \in \Omega_{U}(\mathbf{A})$; in the following section we describe an approach for checking whether a given graph $\mathcal{G}$ contains such MEDAGs.
\section{Checking the Existence of a MEDAG}
\label{sec:MEDAG} 
From the foregoing discussion, it is apparent that the MEDAGs described in Definition \ref{defn:MEDAG} play a key role in solving Problem 1 based on our proposed technique. In particular, recall that for each $\lambda_j\in\Omega_{U}(\mathbf{A})$, the LFRE algorithm described in Section \ref{sec:LFSE} requires a regular node $i \in \mathcal{V}\setminus\mathcal{S}_j$ to accept estimates from only its neighbor set $\mathcal{N}^{(j)}_i$ in the MEDAG $\mathcal{G}_j$ for estimating $\mathbf{z}^{(j)}[k]$. With these points in mind, our immediate goal in this section will be to develop a distributed algorithm, namely Algorithm 1, that constructs a MEDAG $\mathcal{G}_j$ for each $\lambda_j \in \Omega_{U}(\mathbf{A})$, and in the process enables each regular node $i$ to determine the set $\mathcal{N}^{(j)}_i$ for each $\lambda_j \in \overline{\mathcal{O}}_i$. The construction of these MEDAGs constitutes the initialization phase of our design, which can then be followed up by the LFRE algorithm described earlier. We briefly describe the implementation of Algorithm 1 as follows.

Algorithm 1 requires each node $i$ to maintain a counter $c_i(j)$ and a list of indices $\mathcal{N}^{(j)}_i$ for each $\lambda_j \in \Omega_{U}(\mathbf{A})$. The nodes in $\mathcal{N}^{(j)}_i \subseteq \mathcal{N}_i$ will be the parents of node $i$ in the DAG constructed for the estimation of $\mathbf{z}^{(j)}[k]$. Algorithm 1 is initialized with $c_i(j)=0$ and $\mathcal{N}^{(j)}_i = \emptyset$, for each $i\in\mathcal{V}$. Subsequently, the algorithm proceeds in rounds where in the first round each node in $\mathcal{S}_j$ broadcasts the message $``1"$ to its {out-neighbors}, sets $c_i(j)=1$, maintains $\mathcal{N}^{(j)}_i = \emptyset$ for all future rounds, and goes to sleep. Each node $i \in \mathcal{V}\setminus\mathcal{S}_j$ waits until it has received $``1"$ from at least $(2f+1)$ distinct neighbors, at which point it sets $c_i(j)=1$, appends the labels of each of the neighbors from which it received $``1"$ to $\mathcal{N}^{(j)}_i$, broadcasts the message $``1"$ to its {out-neighbors}, and goes to sleep. Let $\mathcal{R}'\subseteq\mathcal{V}$ denote the set of nodes that behave regularly during the execution of Algorithm 1. We say that the MEDAG construction algorithm ``\textit{terminates for $\lambda_j$}" if there exists $T_j \in \mathbb{N}_{+}$ such that $c_i(j) =1$ $\forall i \in \mathcal{R}'$,  for all rounds following round $T_j$. The \textbf{objective} of the algorithm is to return a set of sets $\{\mathcal{N}^{(j)}_i\}$, where $\lambda_j \in \Omega_{U}(\mathbf{A})$, and $i \in \mathcal{V}$.

\begin{algorithm}[t]
 \caption{MEDAG Construction Algorithm}
 \label{algo:DAG}
 \begin{algorithmic}
 \vspace{2mm}
 \STATE  For each eigenvalue $\lambda_j \in \Omega_{U}(\mathbf{A})$ \textbf{do}:
 \STATE \textbf{Initialization}: Initialize $c_i(j)=0, \, \mathcal{N}^{(j)}_i = \emptyset, \, \forall i \in \mathcal{{V}}$. Each node determines whether it belongs to $\mathcal{S}_j$.
 \STATE \textbf{Actions of the source nodes}: Each node in  $\mathcal{S}_j$ updates its counter value $c_i(j)=1$, and transmits the message $``1"$ to its {out-neighbors}. Following this step, it does not listen to any other node, i.e., $\mathcal{N}^{(j)}_i = \emptyset$ and $c_i(j)=1$, $\forall i \in \mathcal{S}_j$ for the remainder of the algorithm.
\STATE \textbf{Actions of the non-source nodes}: Each node $i \in \mathcal{V} \setminus \mathcal{S}_j$ does the following: 
 \begin{itemize}
 \item If $c_i(j) = 0$ \textit{and} node $i$  has received $``1"$ from at least $(2f+1)$ distinct neighbors (not necessarily all in the same round), it updates $c_i(j)$ to $1$, appends the labels of the neighbors from which it received $``1"$ to $\mathcal{N}^{(j)}_i$, and transmits $``1"$ to its {out-neighbors}. 
 \item If $c_i(j) = 1$, it discards all messages received from its neighbors, i.e., it does not update $c_i(j)$ or $\mathcal{N}^{(j)}_i$.
 \end{itemize}
\STATE \textbf{Return} : A set of sets $\{\mathcal{N}^{(j)}_i\}, \lambda_j \in \Omega_{U}(\mathbf{A})$, $i \in \mathcal{V}$.
 \end{algorithmic}
 \end{algorithm}
 
We emphasize that in addition to misbehavior during the state estimation phase (run-time), an adversarial node is allowed to misbehave during the implementation of Algorithm 1 (design-time) as well. For example, it can transmit the message out of turn, i.e., before receiving $``1"$ from at least $(2f+1)$ neighbors. It can also choose not to transmit the message at all. Note however that we must have $\mathcal{V}\setminus\mathcal{R}' \subseteq \mathcal{A}$, i.e., the $f$-local set of adversaries during the estimation phase must contain the set of adversaries during the design phase. In the next section, we shall detail graph conditions that guarantee the termination of the MEDAG construction algorithm under arbitrary adversarial behavior. For the following discussion, we characterize the properties of the output of Algorithm 1 if it terminates. To this end, consider the spanning subgraph $\mathcal{G}_j=(\mathcal{V},\mathcal{E}_j)$  induced by the sets $\{\mathcal{N}^{(j)}_i\}$ returned by Algorithm 1. Keeping in mind that $\mathcal{R}'\supseteq \mathcal{R}$ represents the set of nodes that behave regularly during the execution of Algorithm 1, we have the following results; proofs of these results can be found in \cite{mitraCDC}.
\begin{proposition}
Suppose Algorithm 1 terminates for some $\lambda_j \in \Omega_{U}(\mathbf{A})$, and returns the sets $\{\mathcal{N}^{(j)}_i\}$. Then, the spanning subgraph $\mathcal{G}_j$ induced by these sets contains no directed cycles where every node belongs to $\mathcal{R}'$.
\label{prop:MEDAG}
\end{proposition}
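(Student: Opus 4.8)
The plan is to exploit the synchronous, round-based structure of Algorithm~\ref{algo:DAG} to equip the regular nodes with a time-stamp that strictly increases along every edge of $\mathcal{G}_j$, and then to observe that such a time-stamp precludes directed cycles. Concretely, for each node $i \in \mathcal{R}'$ whose counter eventually satisfies $c_i(j)=1$, let $t_i$ denote the round in which $c_i(j)$ is first set to $1$. Since Algorithm~\ref{algo:DAG} is assumed to terminate for $\lambda_j$, every $i \in \mathcal{R}'$ has a well-defined, finite $t_i$; in particular, $t_i = 1$ for the source nodes in $\mathcal{S}_j$.

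First I would establish the central monotonicity claim: if $l, i \in \mathcal{R}'$ and $l \in \mathcal{N}^{(j)}_i$ (equivalently, $(l,i)\in\mathcal{E}_j$), then $t_l < t_i$. The argument rests on two facts about regular behavior. A regular node $l$ transmits the message ``$1$'' to its out-neighbors only once, namely in the round $t_l$ in which it sets $c_l(j)=1$ (in round $1$ if $l\in\mathcal{S}_j$, and otherwise upon first receiving ``$1$'' from $2f+1$ distinct neighbors). Moreover, a regular node $i$ appends $l$ to $\mathcal{N}^{(j)}_i$ only if it has actually received ``$1$'' from $l$ before finalizing its own counter at round $t_i$. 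Because transmissions incur the usual one-round communication delay, a message sent by $l$ in round $t_l$ becomes available to $i$ no earlier than round $t_l + 1$; hence $t_l + 1 \le t_i$, which yields the strict inequality $t_l < t_i$.

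With the monotonicity claim in hand, the conclusion is immediate. Suppose, for contradiction, that $\mathcal{G}_j$ contained a directed cycle $i_1 \to i_2 \to \cdots \to i_K \to i_1$ with $i_m \in \mathcal{R}'$ for every $m$. Each edge $(i_m, i_{m+1})$ (indices taken modulo $K$) means $i_m \in \mathcal{N}^{(j)}_{i_{m+1}}$, so the claim gives $t_{i_1} < t_{i_2} < \cdots < t_{i_K} < t_{i_1}$, which is absurd. Hence no directed cycle lying entirely in $\mathcal{R}'$ can exist.

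The step I expect to require the most care is the monotonicity claim, and specifically justifying the strict inequality $t_l < t_i$ rather than merely $t_l \le t_i$. This hinges on (i) the synchronous round model with a one-round delay between transmission and reception, so that a parent's message can be counted by a child only in a strictly later round, and (ii) the fact that a regular node emits ``$1$'' exactly at the round it sets its counter and neither retransmits nor listens thereafter. Adversarial behavior is harmless for this statement, since the claim is invoked only for edges whose parent $l$ lies in $\mathcal{R}'$; the conduct of nodes in $\mathcal{V}\setminus\mathcal{R}'$ is irrelevant to a cycle contained entirely within $\mathcal{R}'$.
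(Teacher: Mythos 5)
Your proof is correct and takes essentially the same route as the paper's: the paper defers the formal argument to \cite{mitraCDC}, but its accompanying level construction (Proposition \ref{prop:partition}, where $\mathcal{L}^{(j)}_{m-1}$ collects the nodes of $\mathcal{R}'$ that set $c_i(j)=1$ in round $m$) is exactly your time-stamp $t_i$, and acyclicity follows from the same strict monotonicity of the round index along edges whose parent lies in $\mathcal{R}'$ (a regular node transmits ``1'' only once, in the round it sets its counter, so any regular parent in $\mathcal{N}^{(j)}_i$ must have updated in a strictly earlier round). Your handling of the two subtleties---strictness of $t_l < t_i$ under the synchronous receive-then-send model, and the irrelevance of adversarial nodes to cycles contained entirely in $\mathcal{R}'$---is precisely what the paper's argument requires.
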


Let $\mathcal{L}^{(j)}_{m-1}$ denote the set of all nodes in $\mathcal{R}'$ that update their counter value from $0$ to $1$ in round $m$ of Algorithm 1, i.e., $\mathcal{L}^{(j)}_{0} = \mathcal{S}_j \cap \mathcal{R}'$, and so on.\footnote{Note that the method developed in this paper allows even some of the source nodes in $\mathcal{S}_j$ to be adversarial.}
\begin{proposition}
Suppose Algorithm 1 terminates for some $\lambda_j \in \Omega_{U}(\mathbf{A})$. Let $T_j$ denote the smallest integer such that in round $T_j$, $c_i(j)=1$ $\forall i \in \mathcal{R}'$. Then, the sets $\{\mathcal{L}^{(j)}_{0}, \cdots, \mathcal{L}^{(j)}_{T_j}\}$ form a partition of the set $\mathcal{R}'$ in $\mathcal{G}_j$. 
\label{prop:partition}
\end{proposition}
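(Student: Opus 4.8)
The plan is to verify directly the two defining properties of a partition of $\mathcal{R}'$: that the blocks $\mathcal{L}^{(j)}_{0}, \cdots, \mathcal{L}^{(j)}_{T_j}$ are pairwise disjoint, and that their union is exactly $\mathcal{R}'$. Both will follow from tracking the monotone, once-only evolution of the counters $c_i(j)$ under Algorithm 1; no appeal to the detectability structure of $(\mathbf{A},\mathbf{C})$ is needed, so the argument is purely a bookkeeping exercise combined with the termination hypothesis.

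For disjointness I would exploit the central feature of Algorithm 1 that a regular node acts on its counter only once. As soon as a node $i \in \mathcal{R}'$ sets $c_i(j)$ from $0$ to $1$ -- either as a source node in the first round, or as a non-source node upon having received ``$1$'' from at least $2f+1$ distinct neighbors -- the rule that a node with $c_i(j)=1$ discards all subsequent messages freezes both $c_i(j)$ and $\mathcal{N}^{(j)}_i$ for the remainder of the execution. Hence there is a unique round $m_i$ in which $i$ makes the transition $0 \to 1$, and by the definition of the levels $i$ belongs to $\mathcal{L}^{(j)}_{m_i-1}$ and to no other level, giving $\mathcal{L}^{(j)}_{p} \cap \mathcal{L}^{(j)}_{q} = \emptyset$ for $p \neq q$.

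For coverage, each level is by construction a subset of $\mathcal{R}'$, so the union is contained in $\mathcal{R}'$. For the reverse inclusion I would invoke the termination hypothesis: by the definition of ``terminates for $\lambda_j$'', every $i \in \mathcal{R}'$ has $c_i(j)=1$ from round $T_j$ onward, and since a regular counter can switch to $1$ only through an update step, each such $i$ must have completed its unique transition in some round $m_i \le T_j$. Thus every $i \in \mathcal{R}'$ lies in one of the stated levels, which yields $\mathcal{R}' \subseteq \bigcup_{m} \mathcal{L}^{(j)}_{m}$ and hence equality.

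The part requiring the most care is the top index together with the minimality of $T_j$. I would confirm that no regular node updates after round $T_j$ (immediate from the termination definition), so the union genuinely closes off at the final level, and I would note that $T_j$ being the smallest round with all regular counters equal to $1$ forces at least one regular node to complete its transition in the last active round, so that level is non-empty. The only genuinely delicate point, inherited from the Byzantine model, is that an adversary may postpone sending ``$1$'' and thereby stagger the rounds in which regular nodes cross the $2f+1$ threshold; consequently an intermediate level could in principle be empty, and I would make explicit that the assertion is to be read as disjoint blocks exhausting $\mathcal{R}'$ (with the usual convention on empty blocks), the essential content being the disjointness and exhaustiveness established above.
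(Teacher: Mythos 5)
Your proof is correct and follows essentially the same route as the paper's (which is deferred to the conference version \cite{mitraCDC}): the once-only $0 \to 1$ counter transition—frozen thereafter by the rule that a node with $c_i(j)=1$ discards all further messages—gives a unique transition round $m_i$ for each $i \in \mathcal{R}'$ and hence disjointness of the levels, while the termination hypothesis forces every node of $\mathcal{R}'$ to have transitioned by round $T_j$, giving exhaustiveness. Your closing caveat is apt but purely notational rather than a gap: since $\mathcal{L}^{(j)}_{m-1}$ collects the round-$m$ updaters, the last non-empty level is in fact $\mathcal{L}^{(j)}_{T_j-1}$, and adversarial withholding of ``$1$'' messages can indeed leave intermediate levels empty, so the proposition's list $\{\mathcal{L}^{(j)}_{0}, \cdots, \mathcal{L}^{(j)}_{T_j}\}$ must be read with empty blocks permitted—exactly the convention under which Definition \ref{defn:MEDAG} is applied in Theorem \ref{theo:MEDAG}.
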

\begin{thm}
Suppose the MEDAG construction algorithm terminates for $\lambda_j \in \Omega_{U}(\mathbf{A})$. Then, there exists a subgraph $\mathcal{G}_j$ satisfying the properties of a MEDAG for all $f$-local sets $\mathcal{A}$ that contain $\mathcal{V}\setminus\mathcal{R}'$ as a subset.
\label{theo:MEDAG}
\end{thm}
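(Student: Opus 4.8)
The plan is to show that the spanning subgraph $\mathcal{G}_j = (\mathcal{V}, \mathcal{E}_j)$ induced by the sets $\{\mathcal{N}^{(j)}_i\}$ returned by Algorithm \ref{algo:DAG} satisfies both properties of Definition \ref{defn:MEDAG} for every $f$-local set $\mathcal{A}$ with $\mathcal{V}\setminus\mathcal{R}' \subseteq \mathcal{A}$. The crucial structural fact I would exploit is that any such $\mathcal{A}$ yields a regular set $\mathcal{R} = \mathcal{V}\setminus\mathcal{A}$ satisfying $\mathcal{R}\subseteq\mathcal{R}'$; that is, every node that is regular during the estimation phase was necessarily regular during the execution of Algorithm \ref{algo:DAG} as well. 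Fixing one such $\mathcal{A}$ (and its $\mathcal{R}$), I would first extract the global level structure from Propositions \ref{prop:MEDAG} and \ref{prop:partition} applied to $\mathcal{R}'$, and then obtain the MEDAG levels for $\mathcal{R}$ by restriction.

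For property (i), the argument is short. Let $i \in \{\mathcal{V}\setminus\mathcal{S}_j\}\cap\mathcal{R}$. Since $\mathcal{R}\subseteq\mathcal{R}'$, node $i$ executes Algorithm \ref{algo:DAG} faithfully, and because the algorithm is assumed to terminate for $\lambda_j$, we have $c_i(j)=1$ at termination. By the update rule for non-source nodes, a counter is set to $1$ only after receiving the message ``$1$'' from at least $(2f+1)$ distinct neighbors, all of whose labels are appended to $\mathcal{N}^{(j)}_i$. Hence $|\mathcal{N}^{(j)}_i|\geq 2f+1$, which is exactly property (i). No assumption on the regularity of the nodes in $\mathcal{N}^{(j)}_i$ is needed here, since the property only constrains the cardinality.

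For property (ii), I would take the partition $\{\mathcal{L}^{(j)}_0, \cdots, \mathcal{L}^{(j)}_{T_j}\}$ of $\mathcal{R}'$ guaranteed by Proposition \ref{prop:partition}, where $\mathcal{L}^{(j)}_{m-1}$ collects the nodes of $\mathcal{R}'$ whose counter flips to $1$ in round $m$, and define the candidate MEDAG levels by $\tilde{\mathcal{L}}^{(j)}_m \triangleq \mathcal{L}^{(j)}_m \cap \mathcal{R}$ (discarding any levels that become empty and relabeling, which preserves the ordering). Since $\mathcal{R}\subseteq\mathcal{R}'$, these sets partition $\mathcal{R}$, and $\tilde{\mathcal{L}}^{(j)}_0 = \mathcal{S}_j\cap\mathcal{R}'\cap\mathcal{R} = \mathcal{S}_j\cap\mathcal{R}$, as required; moreover $\mathcal{N}^{(j)}_i=\emptyset$ for $i\in\mathcal{S}_j$ follows directly from the source-node rule of Algorithm \ref{algo:DAG}. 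The remaining ordering constraint is where the synchronous timing of the rounds enters: if a regular node $i\in\tilde{\mathcal{L}}^{(j)}_q$ flips its counter in round $q+1$, then every neighbor $l\in\mathcal{N}^{(j)}_i\cap\mathcal{R}'$ must have broadcast ``$1$'' (and hence flipped its own counter) in some strictly earlier round, placing $l$ in a level $\mathcal{L}^{(j)}_r$ with $r\leq q-1$. Intersecting with $\mathcal{R}\subseteq\mathcal{R}'$ then gives $\mathcal{N}^{(j)}_i\cap\mathcal{R}\subseteq\bigcup_{r=0}^{q-1}\tilde{\mathcal{L}}^{(j)}_r$, completing the verification.

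The main obstacle, and the step requiring the most care, is the interplay between the two distinct regular sets $\mathcal{R}'$ (design time) and $\mathcal{R}$ (run time): all of the structural guarantees (Propositions \ref{prop:MEDAG} and \ref{prop:partition}) are stated for $\mathcal{R}'$, yet Definition \ref{defn:MEDAG} demands the level decomposition for $\mathcal{R}$. The containment $\mathcal{R}\subseteq\mathcal{R}'$ is what makes the argument go through, and the causal, synchronous round structure of Algorithm \ref{algo:DAG} is precisely what certifies that restricting the $\mathcal{R}'$-levels to $\mathcal{R}$ cannot violate the ``neighbors in strictly earlier levels'' condition. Proposition \ref{prop:MEDAG} (absence of directed cycles among $\mathcal{R}'$-nodes) serves as a consistency check ensuring the levels are well-defined, so that the only residual bookkeeping is the harmless handling of possibly empty restricted levels after relabeling.
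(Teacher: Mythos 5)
Your proof is correct and follows essentially the same route as the paper, whose proof of Theorem \ref{theo:MEDAG} simply invokes Propositions \ref{prop:MEDAG} and \ref{prop:partition} (with details deferred to \cite{mitraCDC}): you explicitly carry out the implicit step of restricting the level partition of $\mathcal{R}'$ from Proposition \ref{prop:partition} to any run-time regular set $\mathcal{R}\subseteq\mathcal{R}'$, verifying both MEDAG properties via the counter and round structure of Algorithm \ref{algo:DAG}. Your handling of the $\mathcal{R}$ versus $\mathcal{R}'$ distinction and of the strictly-earlier-round argument for property (ii) matches what the paper's citation-based proof relies on, so there is no substantive difference to report.
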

\vspace{-0.5cm}
\begin{pf}
The result follows immediately from Propositions \ref{prop:MEDAG} and \ref{prop:partition}. \qed
\end{pf}
\begin{remark}
Based on the above theorem, we make the following observations. If Algorithm 1 terminates for each $\lambda_j \in \Omega_{U}(\mathbf{A})$, and $\mathcal{V}\setminus\mathcal{R}'=\emptyset$, then the $\mathcal{G}_j$ subgraphs satisfy all the properties of a MEDAG and we can directly invoke Theorem \ref{thm:consensus}. If Algorithm 1 terminates for each $\lambda_j \in \Omega_{U}(\mathbf{A})$ and $\mathcal{V}\setminus\mathcal{R}'\neq\emptyset$, (i.e., there is some adversarial activity during the MEDAG construction phase), then we do not need to provide any guarantees of state estimation for the set of misbehaving nodes $\mathcal{V}\setminus\mathcal{R}'$, since $\mathcal{V}\setminus\mathcal{R}' \subseteq\mathcal{A}$. In this case too, the subgraphs returned by Algorithm 1 have enough redundancy to ensure that Problem 1 can be solved based on our proposed approach; this fact can be established using arguments identical to those used for proving Theorem \ref{thm:consensus}. In what follows, we summarize our overall approach.
\label{rem:clarify}
\end{remark}
\subsection{Summary of the Resilient Distributed State Estimation Scheme}
\label{sec:summary}
\begin{itemize}
\item[1)] Each regular node $i\in\mathcal{R}$ performs the coordinate transformation $\mathbf{z}[k]= \mathbf{T}^{-1} \mathbf{x}[k]$ described in Section \ref{sec:prelim}; accordingly, it identifies its detectable and undetectable eigenvalues ($\mathcal{O}_{i}$ and {$\overline{\mathcal{O}}_i$} respectively). 
\item[2)] The MEDAG construction algorithm described by Algorithm 1 is implemented for each $\lambda_j\in\Omega_{U}(\mathbf{A})$; graph conditions for termination of this algorithm are provided in the next section. At the end of this algorithm, each regular node $i$ knows the subset $\mathcal{N}^{(j)}_i$ of neighbors it should use in the LFRE algorithm.
\item[3)] Each regular node $i$ employs a locally constructed Luenberger observer (refer to Lemma \ref{lemma:luen} and the discussion preceding it) for estimating $\mathbf{z}_{\mathcal{O}_i}[k]$, namely the portion of the state $\mathbf{z}[k]$ corresponding to its detectable eigenvalues.
\item[4)] Each regular node $i$ employs the LFRE algorithm governed by equations \eqref{eqn:update rule1} and \eqref{eq:update2} for estimating $\mathbf{z}_{\overline{\mathcal{O}}_i}[k]$, namely the portion of the state $\mathbf{z}[k]$ corresponding to its undetectable eigenvalues.
\end{itemize}

\begin{remark}
Whereas steps 1 and 2 correspond to the initial design phase of our scheme, steps 3 and 4 constitute the estimation phase. A key benefit of the proposed method is that if certain graph-theoretic conditions (to be discussed in the following section) are met, then our overall scheme provably admits a \textbf{fully distributed implementation} even under worst-case adversarial behavior.
\end{remark}

\section{Feasible Graph Topologies}
\label{sec:feasible}
In this section, we characterize  feasible graph topologies that guarantee the termination of the MEDAG construction algorithm described in the previous section. In other words, based on Remark \ref{rem:clarify}, feasible graph topologies guarantee that Problem 1 can be solved based on our proposed approach (summarized in Section \ref{sec:summary}). We first recall the following definition from \cite{WMSR,rescons}.
\begin{definition}($r$-\textbf{reachable set}) For a graph $\mathcal{G}=(\mathcal{V,E})$, a set $\mathcal{S} \subset \mathcal{V}$, and an integer $r \in \mathbb{N}_{+}$, $\mathcal{S}$ is an \textit{$r$-reachable set} if there exists an $i \in \mathcal{S}$ such that $|\mathcal{N}_i \setminus \mathcal{S}| \geq r$, 
\end{definition}

Thus, if a set $\mathcal{S}$ is $r$-reachable, then it contains a node which has at least $r$ neighbors outside $\mathcal{S}$. We modify the notion of a \textit{strongly-r robust graph} from \cite{WMSR} as follows.

\begin{definition}(\textbf{strongly} $r$-\textbf{robust graph} \textit{w.r.t.} $\mathcal{S}_j$) For $r \in \mathbb{N}_{+}$ and $\lambda_j \in \Omega_{U}(\mathbf{A})$, a graph $\mathcal{G}=(\mathcal{V,E})$ is \textit{strongly $r$-robust w.r.t. to the set of source nodes $\mathcal{S}_j$}, if for any non-empty subset $\mathcal{C} \subseteq \mathcal{V}\setminus\mathcal{S}_j$, $\mathcal{C}$ is $r$-reachable.
\label{defn:strongrobust}
\end{definition}

{For an illustration of the above definitions, the reader is referred back to Figure  \ref{fig:sim_DAG}. Figure \ref{fig:sim_DAG}(a) is an example of a network that is strongly $3$-robust w.r.t. the set of source nodes, namely nodes $\{1,2,3\}$. Specifically, all subsets of $\{4,5,6,7\}$ are $3$-reachable (i.e., each such subset has a node that has at least 3 neighbors outside that subset).}

\begin{lemma} 
\label{lemma:type1}
The MEDAG construction algorithm terminates for $\lambda_j \in \Omega_U(\mathbf{A})$ if $\mathcal{G}$ is strongly $(3f+1)$-robust w.r.t. $\mathcal{S}_j$.
\end{lemma}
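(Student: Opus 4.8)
The plan is to argue by contradiction, using the strong-$(3f+1)$-robustness of $\mathcal{G}$ w.r.t. $\mathcal{S}_j$ to manufacture a regular node that is forced to activate (i.e., set its counter to $1$), in violation of the assumption that the algorithm fails to terminate. The two ingredients I must combine are (i) the \emph{graph} guarantee supplied by reachability, and (ii) the \emph{adversary} bound supplied by $f$-locality; the whole argument turns on how these two numbers interact with the algorithm's acceptance threshold of $2f+1$.

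First I would set up the contradiction. Suppose Algorithm 1 does not terminate for $\lambda_j$, and let $\mathcal{C}$ denote the set of nodes in $\mathcal{R}'$ whose counter $c_i(j)$ never becomes $1$ during the entire execution. Non-termination means $\mathcal{C} \neq \emptyset$. Since every regular source node sets its counter to $1$ in the first round, no node of $\mathcal{S}_j \cap \mathcal{R}'$ lies in $\mathcal{C}$; hence $\mathcal{C}$ is a non-empty subset of $\mathcal{V}\setminus\mathcal{S}_j$, which is precisely the kind of set to which Definition \ref{defn:strongrobust} applies.

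Next I invoke strong-$(3f+1)$-robustness on $\mathcal{C}$: the set $\mathcal{C}$ is $(3f+1)$-reachable, so there exists a node $i \in \mathcal{C}$ with $|\mathcal{N}_i\setminus\mathcal{C}| \geq 3f+1$. I then classify these neighbors. The set $\mathcal{V}\setminus\mathcal{R}'$ of nodes that misbehave during the design phase is $f$-local (being contained in the $f$-local adversary set $\mathcal{A}$), so the regular node $i$ has at most $f$ neighbors in $\mathcal{V}\setminus\mathcal{R}'$. Therefore at least $(3f+1)-f = 2f+1$ of the neighbors counted in $\mathcal{N}_i\setminus\mathcal{C}$ belong to $\mathcal{R}'\setminus\mathcal{C}$. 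By the very definition of $\mathcal{C}$, each such neighbor eventually updates its counter to $1$ and, by the rules of the algorithm, broadcasts the message ``$1$'' to its out-neighbors, which include $i$ (since a neighbor $l \in \mathcal{N}_i$ satisfies $(l,i)\in\mathcal{E}$, making $i$ an out-neighbor of $l$).

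Finally I close the loop. As there are only finitely many such neighbors, there is a finite round by which all of them have transmitted ``$1$'' to $i$; at that point $i$ has received ``$1$'' from at least $2f+1$ distinct neighbors, and being regular it must set $c_i(j)=1$ — contradicting $i \in \mathcal{C}$. Hence $\mathcal{C}=\emptyset$, every node of $\mathcal{R}'$ activates, and since $\mathcal{V}$ is finite there is a round $T_j$ after which $c_i(j)=1$ for all $i \in \mathcal{R}'$, which is exactly termination. The crux of the argument, and the point requiring the most care, is the interplay that fixes the constant $3f+1$: one must discard up to $f$ neighbors that may be adversarial (and in the worst case stay silent), yet still retain at least the acceptance threshold of $2f+1$ genuine ``$1$'' messages from regular, eventually-activating neighbors. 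A secondary subtlety is upgrading the ``eventually activates'' reasoning to genuine finite-time termination rather than mere asymptotic activation, which is handled by the finiteness of the node set.
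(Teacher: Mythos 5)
Your proof is correct and follows essentially the same route as the paper's: a contradiction via the never-activated set $\mathcal{C}\subseteq\mathcal{V}\setminus\mathcal{S}_j$, applying $(3f+1)$-reachability to extract a node $i\in\mathcal{C}$ with at least $3f+1$ neighbors outside $\mathcal{C}$, discarding at most $f$ possibly-misbehaving neighbors via $f$-locality, and concluding that the remaining $2f+1$ regular, eventually-activated neighbors force $i$ to set $c_i(j)=1$. Your added care in working with $\mathcal{R}'$ (design-phase regular nodes) rather than $\mathcal{R}$, and in making the finite-round termination step explicit, only tightens details the paper leaves implicit.
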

\vspace{-3mm}
\begin{proof}
We prove by contradiction. Consider any $\lambda_j \in \Omega_U(\mathbf{A})$ and let $\mathcal{G}$ be strongly $(3f+1)$-robust w.r.t. the set of source nodes $\mathcal{S}_j$. Suppose that the MEDAG construction algorithm for $\lambda_j$ does not terminate. Since the possibility of the counter $c_i(j)$ oscillating between $0$ and $1$ (where $i \in \mathcal{R})$ is ruled out based on our MEDAG construction algorithm, there must then exist a non-empty set $\mathcal{C} \subseteq \mathcal{V}\setminus\mathcal{S}_j$ of regular nodes that never update their counter $c_i(j)$ from $0$ to $1$, where $i\in\mathcal{C}$. As $\mathcal{G}$ is strongly $(3f+1)$-robust w.r.t. $\mathcal{S}_j$, it follows that $\mathcal{C}$ is $(3f+1)$-reachable, i.e., there exists a node $i \in \mathcal{C}$ which has at least $(3f+1)$  neighbors outside $\mathcal{C}$. Under the $f$-local adversarial model, at least $(2f+1)$ of them are regular nodes with $c_i(j) = 1$. Thus, at least $(2f+1)$ regular nodes must have transmitted the message $``1"$ to node $i$. Thus, based on the rules of Algorithm 1, node $i$ must have updated $c_i(j)$ from $0$ to $1$ at some point of time, leading to a contradiction.
\end{proof}
\vspace{-3mm}
Whereas the $(2f+1)$ term appears in various contexts when dealing with security problems on networks (such as distributed consensus \cite{rescons,vaidyacons}, broadcasting \cite{flocal1,flocal2} and optimization \cite{su,Sundaramopt}), the $(3f+1)$ term featuring in our analysis accounts for misbehavior that involves transmission of no messages by the adversarial nodes during execution of the MEDAG construction algorithm described in Section \ref{sec:MEDAG}.
We now present the main result of this paper which ties together the previous results presented in this paper, and in turn provides a connection between feasible graph topologies and the solution to Problem 1 based on our proposed approach.

\begin{thm}
\label{theo:Feasiblegraph} Consider an LTI system \eqref{eqn:plant} and a measurement model \eqref{eqn:Obsmodel}. Let the communication graph $\mathcal{G}$ be strongly $(3f+1)$-robust w.r.t. $\mathcal{S}_j, \forall \lambda_j\in\Omega_{U}(\mathbf{A})$. Then, the proposed algorithm summarized in Section \ref{sec:summary} provides a solution to Problem 1.
\label{thm:main}
\end{thm}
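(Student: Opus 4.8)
The plan is to chain together the results established earlier in the paper, so that the proof reduces to verifying that each link in the argument applies under the stated strong-robustness hypothesis. I would organize the state-recovery argument block by block, according to the real Jordan decomposition $\mathbf{z}[k]=\mathbf{T}^{-1}\mathbf{x}[k]$ of Theorem \ref{thm:RealJordan}, since each regular node recovers each modal block $\mathbf{z}^{(j)}[k]$ by exactly one of two mechanisms, depending on whether $\lambda_j$ is locally detectable.

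First I would dispatch the blocks handled by local observers. For every $\lambda_j \in \mathcal{O}_i$ — which includes every stable eigenvalue (detectable at all nodes by convention), every $\lambda_j\in\Lambda_U(\mathbf{A})\setminus\Omega_U(\mathbf{A})$ (for which, by definition of $\Omega_U(\mathbf{A})$, \emph{every} node is a source node), and any unstable mode that node $i$ happens to detect — Lemma \ref{lemma:luen} guarantees that node $i$'s locally constructed Luenberger observer satisfies $\lim_{k\to\infty}\|\hat{\mathbf{z}}^{(j)}_i[k]-\mathbf{z}^{(j)}[k]\|=0$ with no communication required. These blocks therefore need no further attention.

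The substantive part concerns the blocks $\mathbf{z}^{(j)}[k]$ with $\lambda_j\in\overline{\mathcal{O}}_i$ and $\lambda_j\in\Omega_U(\mathbf{A})$, recovered via the LFRE dynamics. Here I would proceed in three steps. (i) Since $\mathcal{G}$ is strongly $(3f+1)$-robust w.r.t. $\mathcal{S}_j$ for every $\lambda_j\in\Omega_U(\mathbf{A})$, Lemma \ref{lemma:type1} guarantees that the MEDAG construction algorithm terminates for each such $\lambda_j$, despite arbitrary design-time adversarial behavior. (ii) Termination lets me invoke Theorem \ref{theo:MEDAG}, producing a spanning subgraph $\mathcal{G}_j$ satisfying the two MEDAG properties of Definition \ref{defn:MEDAG}, valid for every $f$-local set containing $\mathcal{V}\setminus\mathcal{R}'$ as a subset. (iii) I would then observe that the actual run-time adversarial set $\mathcal{A}$ is an admissible such set: by construction $\mathcal{V}\setminus\mathcal{R}'\subseteq\mathcal{A}$, and $\mathcal{A}$ is $f$-local by hypothesis, so the MEDAG guarantee holds for the true adversarial configuration. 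With valid MEDAGs in hand, Theorem \ref{thm:consensus} yields $\lim_{k\to\infty}\|\hat{\mathbf{z}}^{(j)}_i[k]-\mathbf{z}^{(j)}[k]\|=0$ for every regular $i$ and every $\lambda_j\in\Omega_U(\mathbf{A})$. Finally, since every modal block is recovered by one of the two mechanisms, each regular node assembles $\hat{\mathbf{z}}_i[k]$ and forms $\hat{\mathbf{x}}_i[k]=\mathbf{T}\hat{\mathbf{z}}_i[k]$; continuity of the fixed linear map $\mathbf{T}$ gives $\lim_{k\to\infty}\|\hat{\mathbf{x}}_i[k]-\mathbf{x}[k]\|=0$ for all $i\in\mathcal{R}$, which is precisely the requirement of Problem 1.

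The main obstacle I anticipate is the case of adversarial activity during the design phase, i.e. $\mathcal{V}\setminus\mathcal{R}'\neq\emptyset$. In this situation the subgraphs returned by Algorithm 1 are only guaranteed to be MEDAGs relative to $\mathcal{R}'$ rather than $\mathcal{R}$, so Theorem \ref{thm:consensus} cannot be cited verbatim. As indicated in Remark \ref{rem:clarify}, the resolution is that we owe no estimation guarantee to the nodes in $\mathcal{V}\setminus\mathcal{R}'$ (they lie in $\mathcal{A}$), and the redundancy certified by property (i) of the MEDAG — at least $2f+1$ accepted neighbors, of which at most $f$ are adversarial under the $f$-local model — remains sufficient to rerun the level-by-level convergence induction of Theorem \ref{thm:consensus} over the partition $\{\mathcal{L}^{(j)}_0,\dots,\mathcal{L}^{(j)}_{T_j}\}$ of $\mathcal{R}'$. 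Making this ``identical argument'' precise, and checking that the median-type rejection step still confines the resulting convex-combination estimate to the hull of the regular neighbors' estimates, is the one place where genuine care is required.
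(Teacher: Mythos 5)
Your proposal is correct and follows essentially the same route as the paper's own proof, which chains Lemma \ref{lemma:type1} (termination of Algorithm 1 under strong $(3f+1)$-robustness), Theorem \ref{theo:MEDAG}, Remark \ref{rem:clarify}, and Theorem \ref{thm:consensus}. Your additional care about the design-time case $\mathcal{V}\setminus\mathcal{R}'\neq\emptyset$ — rerunning the level-by-level induction over the partition of $\mathcal{R}'$ — is exactly the point the paper delegates to Remark \ref{rem:clarify}, and your explicit treatment of the locally detectable blocks via Lemma \ref{lemma:luen} and the reassembly $\hat{\mathbf{x}}_i[k]=\mathbf{T}\hat{\mathbf{z}}_i[k]$ merely spells out what the paper leaves implicit.
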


\begin{proof}
From Lemma \ref{lemma:type1}, it follows that if $\mathcal{G}$ is strongly $(3f+1)$-robust w.r.t. $\mathcal{S}_j$ for every $\lambda_j \in \Omega_{U}(\mathbf{A})$, then the MEDAG construction algorithm terminates for each such eigenvalue. Combining Theorem \ref{theo:MEDAG}, Remark \ref{rem:clarify} and Theorem \ref{thm:consensus} then leads to the desired result.
\end{proof}
If the adversarial attacks are restricted to the estimation phase only, i.e., if there are no attacks during the initial MEDAG construction phase, then the following result provides a tight graph condition for our algorithm.

\begin{thm} Consider an LTI system \eqref{eqn:plant} and a measurement model \eqref{eqn:Obsmodel}. Suppose adversarial behavior is restricted to the estimation phase (steps $3$ and $4$) of the proposed algorithm summarized in Section \ref{sec:summary}. Then, this algorithm solves Problem $1$  if and only if $\mathcal{G}$ is strongly $(2f+1)$-robust w.r.t. $\mathcal{S}_j$, $\forall \lambda_j \in \Omega_{U}(\mathbf{A})$.
\label{thm:tight}
\end{thm}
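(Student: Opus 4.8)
The claim is an iff statement under the restriction that adversaries act only during the estimation phase; consequently $\mathcal{R}' = \mathcal{V}$ and every node executes Algorithm 1 honestly. The plan is to prove the two directions separately, leveraging the fact that with no design-phase misbehavior the $(3f+1)$ buffer in Lemma \ref{lemma:type1} is no longer needed—an honest non-source node now only needs $(2f+1)$ honest neighbors to have already raised their counters, because there are no adversaries suppressing their own messages during construction. For the sufficiency direction I would adapt the contradiction argument of Lemma \ref{lemma:type1} almost verbatim, but with the threshold lowered to $(2f+1)$. For the necessity direction I would show that if strong $(2f+1)$-robustness fails w.r.t. some $\mathcal{S}_j$, then one can exhibit an $f$-local adversarial set and an indistinguishability argument (in the spirit of the proof of Theorem \ref{thm:flocal}) that defeats the estimation phase.

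\textbf{Sufficiency.} Assume $\mathcal{G}$ is strongly $(2f+1)$-robust w.r.t.\ $\mathcal{S}_j$ for every $\lambda_j \in \Omega_U(\mathbf{A})$. Since construction is honest, I first argue that Algorithm 1 terminates for each such $\lambda_j$: suppose not, so there is a nonempty set $\mathcal{C} \subseteq \mathcal{V}\setminus\mathcal{S}_j$ of (now genuinely regular) nodes whose counters never reach $1$. By strong $(2f+1)$-robustness $\mathcal{C}$ is $(2f+1)$-reachable, so some $i \in \mathcal{C}$ has at least $(2f+1)$ neighbors outside $\mathcal{C}$; all of these are honest nodes that eventually set their counter to $1$ and broadcast ``$1$'' during the design phase. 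Then node $i$ receives ``$1$'' from at least $(2f+1)$ distinct neighbors and must itself raise its counter—contradicting $i \in \mathcal{C}$. With termination in hand, Theorem \ref{theo:MEDAG} (applied with $\mathcal{V}\setminus\mathcal{R}'=\emptyset$) furnishes a genuine MEDAG $\mathcal{G}_j$ for each $\lambda_j\in\Omega_U(\mathbf{A})$, and Theorem \ref{thm:consensus} then guarantees that the LFRE dynamics drive every regular node's estimate to the true state despite any $f$-local Byzantine adversary active in the estimation phase.

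\textbf{Necessity.} Suppose $\mathcal{G}$ is \emph{not} strongly $(2f+1)$-robust w.r.t.\ $\mathcal{S}_j$ for some $\lambda_j\in\Omega_U(\mathbf{A})$. Then there is a nonempty $\mathcal{C}\subseteq\mathcal{V}\setminus\mathcal{S}_j$ that is not $(2f+1)$-reachable, i.e.\ every node $i\in\mathcal{C}$ has at most $2f$ neighbors outside $\mathcal{C}$. I would show such a $\mathcal{C}$ lets an $f$-local adversary trap the nodes of $\mathcal{C}$ at a wrong estimate of $\mathbf{z}^{(j)}[k]$. The idea is to split each node's at-most-$2f$ outside-neighbors into two groups of size at most $f$ and let the adversary, occupying one such group per node (an $f$-local set), feed values that—together with the filtering rule that discards the $f$ largest and $f$ smallest—allow the nodes in $\mathcal{C}$ to be steered. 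Concretely, if a node in $\mathcal{C}$ sees at most $f$ honest estimates from outside $\mathcal{C}$ after the adversary contributes $f$ extreme values on each side, its surviving convex combination can be pinned to the value shared by $\mathcal{C}$, so an initial offset in $\hat{\mathbf{z}}^{(j)}$ on $\mathcal{C}$ is never corrected; because $\lambda_j$ is unstable this offset does not decay. I would formalize this as an indistinguishability/invariance argument showing the estimation error on $\mathcal{C}$ cannot be forced to zero, contradicting that the algorithm solves Problem 1.

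\textbf{Anticipated main obstacle.} The sufficiency direction is essentially a streamlined re-run of Lemma \ref{lemma:type1} plus already-proved theorems, so it should be routine. The crux is the necessity direction, and specifically making the steering/invariance argument rigorous: I must confirm that a failure of $(2f+1)$-reachability for the \emph{worst-case} choice of $\mathcal{C}$ really does hand the adversary enough control—after the $2f$-value trim—to keep the filtered update of every node in $\mathcal{C}$ inside an interval that excludes the true trajectory of the unstable mode. The delicate point is that $\mathcal{C}$ may be traversed by multiple levels and that honest outside-neighbors may themselves be influenced, so I would likely argue by choosing a minimal non-reachable $\mathcal{C}$ and constructing the adversarial inputs so that the set of values on $\mathcal{C}$ remains a trapped ``median-protected'' block for all time, invoking the scalar safety/consensus properties of the filtering rule used in the proof of Theorem \ref{thm:consensus}.
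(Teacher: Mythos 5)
Your sufficiency direction is correct and coincides with the paper's: honest design-phase behavior lowers the termination threshold of Lemma \ref{lemma:type1} from $(3f+1)$ to $(2f+1)$, and then Theorem \ref{theo:MEDAG} (with $\mathcal{V}\setminus\mathcal{R}'=\emptyset$) and Theorem \ref{thm:consensus} finish the argument. The necessity direction, however, has a genuine gap: you are attacking the wrong object. The ``only if'' in Theorem \ref{thm:tight} concerns the success of \emph{this particular algorithm} (steps 1--4 of Section \ref{sec:summary}), and the paper explicitly remarks after the theorem that strong $(2f+1)$-robustness ``may not be necessary for solving Problem 1 in general.'' Consequently, no Theorem-\ref{thm:flocal}-style indistinguishability or adversarial steering argument against the estimation dynamics is needed --- and one should not expect such an argument to go through, since the condition is not claimed to be an algorithm-independent obstruction. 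The failure mode when strong $(2f+1)$-robustness is violated is \emph{structural and adversary-free}: your steering construction presupposes that step 2 has completed and each node of $\mathcal{C}$ has a well-defined set $\mathcal{N}^{(j)}_i$ of cardinality at least $2f+1$ over which the LFRE trimming operates, but precisely when some $\mathcal{C}\subseteq\mathcal{V}\setminus\mathcal{S}_j$ is not $(2f+1)$-reachable, the (honestly executed) Algorithm 1 never terminates for $\lambda_j$, the nodes in $\mathcal{C}$ never acquire valid sets $\mathcal{N}^{(j)}_i$, and the LFRE update \eqref{eqn:update rule1} is vacuous for them. The paper makes this precise at the level of MEDAG existence: taking the trivial $f$-local set $\mathcal{A}=\emptyset$, any putative MEDAG $\mathcal{G}_j$ induces a level partition of $\mathcal{V}$; slicing $\mathcal{C}$ by levels and picking the lowest slice $\mathcal{F}_{n_1}$, property (ii) of Definition \ref{defn:MEDAG} forces every $i\in\mathcal{F}_{n_1}$ to have $\mathcal{N}^{(j)}_i\subseteq\mathcal{V}\setminus\mathcal{C}$, while non-$(2f+1)$-reachability of $\mathcal{C}$ gives $|\mathcal{N}^{(j)}_i|<2f+1$, contradicting property (i). This is a few lines of combinatorics, with no dynamics involved.

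Beyond being unnecessary, your proposed attack construction is itself flawed as stated: choosing up to $f$ adversarial outside-neighbors \emph{per node} of $\mathcal{C}$ does not in general produce a globally $f$-local adversarial set, since $f$-locality must hold in the neighborhood of \emph{every} regular node (including those outside $\mathcal{C}$), and a node selected for one member of $\mathcal{C}$ may accumulate with other selections in some third node's neighborhood. You correctly flagged the ``trapped median-protected block'' step as the crux, but the repair is not to make that step rigorous --- it is to abandon the dynamic argument and prove, as the paper does, that failure of strong $(2f+1)$-robustness precludes the existence of any MEDAG $\mathcal{G}_j$, so that the algorithm's design phase (and hence the algorithm) cannot succeed even with $\mathcal{A}=\emptyset$.
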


The proof of the above result is given in Appendix \ref{app:theotight}. Essentially, Theorem \ref{thm:tight} alludes to the fact that $\mathcal{G}$ contains a MEDAG $\mathcal{G}_j$ for each $\lambda_j \in \Omega_{U}(\mathbf{A})$ \textit{if and only if} $\mathcal{G}$ is strongly $(2f+1)$-robust w.r.t. $\mathcal{S}_j$, $\forall \lambda_j \in \Omega_{U}(\mathbf{A})$. Note that although Theorem \ref{thm:tight} provides a graph condition that is necessary and sufficient for the algorithm developed in this paper, such a condition may not be necessary for solving Problem 1 in general.

Theorems \ref{thm:main} and \ref{thm:tight} reveal that `\textit{strong $r$-robustness w.r.t. $\mathcal{S}_j,\ \forall \lambda_j\in\Omega_{U}(\mathbf{A})$}' is the key topological property required for guaranteeing success of our proposed algorithm. Accordingly, given a system model \eqref{eqn:plant} and measurement model \eqref{eqn:Obsmodel}, a network that is strongly $r$-robust w.r.t. $\mathcal{S}_j, \ \forall \lambda_j\in\Omega_{U}(\mathbf{A})$, will be called an `\textit{$r$-feasible network}' for simplicity. We summarize certain features of an $r$-feasible network in the following result.

\begin{proposition} An $r$-feasible network $\mathcal{G}$ has the following properties.
\begin{itemize}
\item[(i)] The graph $\mathcal{G}^{'}=(\mathcal{V}\cup{v}_{new},\mathcal{E}\cup\mathcal{E}_{new})$, where $v_{new}$ is a new vertex added to $\mathcal{G}$ and $\mathcal{E}_{new}$ is the edge set associated with ${v}_{new}$, is an $r$-feasible network if $|\mathcal{N}_{v_{new}}| \geq r.$
\item[(ii)] $r \leq \min_{\lambda_j \in \Omega_{U}(\mathbf{A})}|\mathcal{S}_j|$.
\item[(iii)] Let $\mathcal{S}=\bigcap_{\lambda_j \in \Omega_{U}(\mathbf{A})}\mathcal{S}_j$. Then $|\mathcal{N}_i| \geq r, \, \forall i \in \mathcal{V}\setminus\mathcal{S}.$
\item[(iv)] Removal of a $k$-local set from $\mathcal{G}$, where $0 < k < r$, results in a network $\mathcal{G}^{'}$ that is $(r-k)$-feasible.
\end{itemize}
\label{prop:properties}
\end{proposition}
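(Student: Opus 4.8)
The plan is to read off all four properties directly from Definition \ref{defn:strongrobust}, since an $r$-feasible network is by definition strongly $r$-robust w.r.t.\ $\mathcal{S}_j$ for every $\lambda_j \in \Omega_{U}(\mathbf{A})$. Each claim reduces to exhibiting, for an appropriately chosen test set $\mathcal{C}$, a node $i \in \mathcal{C}$ with enough in-neighbors outside $\mathcal{C}$; the uniform tool is thus the $r$-reachability condition applied to carefully selected subsets. Throughout, I would use that whether a node can detect a given mode is a property of its own measurement alone, so that adding or deleting communication nodes changes the source sets $\mathcal{S}_j$ only by the corresponding added/removed nodes (and can only shrink $\Omega_{U}(\mathbf{A})$ under deletion).

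Parts (ii) and (iii) are the quickest. For (ii), I would fix $\lambda_j$ and apply strong $r$-robustness to the nonempty set $\mathcal{C}=\mathcal{V}\setminus\mathcal{S}_j$; the resulting witness $i$ satisfies $\mathcal{N}_i\setminus\mathcal{C}=\mathcal{N}_i\cap\mathcal{S}_j$, so $|\mathcal{S}_j|\geq|\mathcal{N}_i\setminus\mathcal{C}|\geq r$, and minimizing over $j$ gives the bound. For (iii), given $i\in\mathcal{V}\setminus\mathcal{S}$ there is some $j$ with $i\notin\mathcal{S}_j$; applying the definition to the singleton $\mathcal{C}=\{i\}\subseteq\mathcal{V}\setminus\mathcal{S}_j$ forces $|\mathcal{N}_i|\geq r$.

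For (i) I would verify strong $r$-robustness of $\mathcal{G}^{'}$ by splitting cases on an arbitrary nonempty $\mathcal{C}\subseteq(\mathcal{V}\cup\{v_{new}\})\setminus\mathcal{S}_j$. If $v_{new}\notin\mathcal{C}$, the $\mathcal{G}$-witness for $\mathcal{C}$ still works, since the extra vertex only contributes neighbors lying outside $\mathcal{C}$. If $\mathcal{C}=\{v_{new}\}$, the hypothesis $|\mathcal{N}_{v_{new}}|\geq r$ is exactly $r$-reachability. Otherwise $\mathcal{C}\setminus\{v_{new}\}$ is a nonempty subset of $\mathcal{V}\setminus\mathcal{S}_j$; its $\mathcal{G}$-witness $i$ has $v_{new}\notin\mathcal{N}_i$, so moving $v_{new}$ into the test set does not decrease the count of outside neighbors, and $i$ remains a witness in $\mathcal{G}^{'}$. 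Part (iv) follows the same template: for a removed $k$-local set $\mathcal{B}$ and a nonempty $\mathcal{C}\subseteq(\mathcal{V}\setminus\mathcal{S}_j)\setminus\mathcal{B}$, the $\mathcal{G}$-witness $i\in\mathcal{C}$ has $|\mathcal{N}_i\setminus\mathcal{C}|\geq r$ in $\mathcal{G}$, and in $\mathcal{G}^{'}$ it loses only its neighbors in $\mathcal{B}$; since $i\notin\mathcal{B}$ and $\mathcal{B}$ is $k$-local, at most $k$ such neighbors are lost, leaving at least $r-k$ outside $\mathcal{C}$.

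The delicate point, and the step I would write most carefully, is the neighbor-counting in (iv): the $k$-local hypothesis is used precisely through the fact that the witness node $i$ lies outside $\mathcal{B}$, so $|\mathcal{N}_i\cap\mathcal{B}|\leq k$; a $k$-\emph{total} bound would not give this per-node guarantee and the clean $r-k$ conclusion would fail. The remaining mild obstacle is the bookkeeping, flagged above, that $\mathcal{S}_j$ and $\Omega_{U}(\mathbf{A})$ transform predictably under vertex addition and deletion, which I would state once at the outset and reuse in parts (i) and (iv).
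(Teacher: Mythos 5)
Your proof is correct and follows essentially the same route as the paper's: direct verification of $r$-reachability via witness nodes, with the same case split on $\mathcal{C}$ in part (i) and the same neighbor-counting via the per-node $k$-local bound $|\mathcal{N}_i \cap \mathcal{B}| \leq k$ in part (iv). The only cosmetic differences are that the paper argues (ii) and (iii) by contraposition, and in (iv) it first separately establishes that a $k$-local set cannot contain an entire source set $\mathcal{S}_j$ --- a step your uniform argument subsumes, since the identity $\mathcal{V}^{'}\setminus\mathcal{S}_j^{'}=(\mathcal{V}\setminus\mathcal{S}_j)\setminus\mathcal{B}$ together with the surviving-witness count (at least $r-k\geq 1$ neighbors outside $\mathcal{C}$ remaining in $\mathcal{G}^{'}$) rules that case out automatically.
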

\begin{proof} (i) Consider any $\lambda_j \in \Omega_{U}(\mathbf{A})$. If $v_{new}$ is a source node for $\lambda_j$, then it is easily seen that $\mathcal{G}^{'}$ is strongly $r$-robust w.r.t. $\mathcal{S}_j$. For the case when $v_{new}$ is not a source node for $\lambda_j$, consider any non-empty set $\mathcal{C}\subseteq \{\mathcal{V}\cup{v_{new}}\}\setminus\mathcal{S}_j$. If $\mathcal{C}=\{v_{new}\}$, then $r$-reachability of $\mathcal{C}$ follows from the fact that $|\mathcal{N}_{v_{new}}| \geq r$. In every other case, $\mathcal{C}$ contains some nodes of the original graph $\mathcal{G}$ and is hence $r$-reachable as $\mathcal{G}$ is $r$-feasible. Thus $\mathcal{G}^{'}$ is strongly $r$-robust w.r.t. $\mathcal{S}_j$. A similar analysis holds for each $\lambda_j \in {\Omega}_{U}(\mathbf{A})$, leading to the desired result.

(ii) Suppose $r > |\mathcal{S}_{\rho}|$  where $\rho=\arg \min_{\lambda_j \in \Omega_{U}(\mathbf{A})}|\mathcal{S}_j|$. Since the set $\mathcal{C}=\mathcal{V}\setminus\mathcal{S}_{\rho}$ can be at most $|\mathcal{S}_{\rho}|$-reachable and $|\mathcal{S}_{\rho}| < r$, it follows that $\mathcal{G}$ is not $r$-feasible.

(iii) Suppose $i \in \mathcal{V}\setminus\mathcal{S}$ with $|\mathcal{N}_i| < r$. As $i \in \mathcal{V}\setminus\mathcal{S}$, there exists some $\lambda_j \in \Omega_{U}(\mathbf{A})$ such that $i\in \mathcal{V}\setminus\mathcal{S}_j$. Consider the set $\mathcal{C}=\{i\}$. As $|\mathcal{N}_i| < r$, the set $\mathcal{C}$ is not $r$-reachable. Thus, $\mathcal{G}$ is not strongly $r$-robust w.r.t. $\mathcal{S}_j$, implying that $\mathcal{G}$ is not $r$-feasible.

(iv) First, observe that as $\mathcal{G}$ is $r$-feasible and $k < r$, removal of a $k$-local set from $\mathcal{G}$ cannot cause the removal of an entire source node net $\mathcal{S}_j$ for any $\lambda_j\in\Omega_{U}(\mathbf{A})$. This follows from noting that any source set $\mathcal{S}_j$ where $\lambda_j \in \Omega_{U}(\mathbf{A})$ (or any set containing $\mathcal{S}_j$) will have an overlap of at least $r$ nodes with the neighborhood of some non-source node owing to the $r$-feasibility of the original network. As $r>k$, such sets are not $k$-local. Next, pick any $\lambda_j\in\Omega_{U}(\mathbf{A})$ and let $\mathcal{C}$ be a non-empty subset of $\mathcal{V}^{'}\setminus{\mathcal{S}_j}^{'}$, where $\mathcal{V}^{'}$ and ${\mathcal{S}_j}^{'}$ represent the vertex set and source node set for $\lambda_j$, respectively, in $\mathcal{G}^{'}$. Since $\mathcal{C}$ was $r$-reachable in $\mathcal{G}$, it contained some node $v$ with $r$ neighbors outside $\mathcal{C}$. While constructing $\mathcal{G}^{'}$, node $v$ can lose at most $k$ of such neighbors, and hence $\mathcal{C}$ is $(r-k)$-reachable in $\mathcal{G}^{'}$. The rest of the proof follows trivially.
\end{proof}
We remark on certain implications of the above result. The first property provides a procedure for constructing $r$-feasible networks with $N$ nodes (where $N > r$) starting from $r$-feasible networks with fewer than $N$ nodes. The second property shows that the measurement structure of the nodes provides an upper bound on the robustness of the overall network. The third property places a lower bound on the minimum in-degree of any node that cannot estimate the entire state on its own in an $r$-feasible network. Finally, a direct implication of the fourth property is that a loss of $k$ source nodes (where $k < r$) for any unstable eigenvalue of the system (possibly due to sensor failures) leaves the resulting network at least $(r-k)$-feasible if the original network is $r$-feasible to begin with.
\textbf{Applicability of the Proposed Approach}:
Building on the insights developed in this section, we make a case for the applicability of the approach developed in this paper by addressing the following question: How efficiently can one verify whether a given system and  network is $r$-feasible? To answer the above question we will exploit a connection between the `strong $r$-robustness property w.r.t. a certain set of nodes' and the dynamic process of `bootstrap percolation' on networks \cite{boot1}. Given a graph $\mathcal{G}$ and a threshold $r \geq 2$, bootstrap percolation can be viewed as a process of spread of \textit{activation} where one starts off with a set $\mathcal{I}\subseteq\mathcal{V}$ of initially active nodes. Subsequently, the process evolves over the network based on the rule that an inactive node becomes active if and only if it has at least $r$ active neighbors, with active nodes remaining active forever. The process terminates when no more nodes become active; an initial set $\mathcal{I}$ is said to \textit{percolate} if upon termination the final active set equals the entire node set $\mathcal{V}$. Consider the following simple, yet key observation.

\begin{lemma} Given a graph $\mathcal{G}$ and a threshold $r \geq 2$, an initial set $\mathcal{I}$ percolates via the process of bootstrap percolation if and only if $\mathcal{G}$ is strongly $r$-robust w.r.t. $\mathcal{I}$.
\label{lemma:bootstrap}
\end{lemma}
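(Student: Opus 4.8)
The plan is to prove both directions by contradiction, exploiting the observation that a failure of percolation and a violation of strong $r$-robustness both manifest as the \emph{same} combinatorial object: a non-empty subset $\mathcal{C}\subseteq\mathcal{V}\setminus\mathcal{I}$ in which every node has strictly fewer than $r$ in-neighbors lying outside $\mathcal{C}$. The terminal configuration of the percolation process is precisely a ``closed'' set under the activation rule, and strong $r$-robustness is exactly the statement that no such closed set (other than $\mathcal{V}$ itself) can avoid $\mathcal{I}$. Once this common object is isolated, both implications should close in a few lines.

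For the direction ``strongly $r$-robust $\Rightarrow$ percolates'', I would run the bootstrap percolation process to completion and let $\mathcal{F}$ denote the final (fixed) active set, which necessarily contains $\mathcal{I}$. Supposing toward contradiction that $\mathcal{I}$ does not percolate, the set $\mathcal{C}\triangleq\mathcal{V}\setminus\mathcal{F}$ is non-empty, and since $\mathcal{I}\subseteq\mathcal{F}$ we have $\mathcal{C}\subseteq\mathcal{V}\setminus\mathcal{I}$. Strong $r$-robustness w.r.t. $\mathcal{I}$ then guarantees that $\mathcal{C}$ is $r$-reachable, i.e. some $i\in\mathcal{C}$ satisfies $|\mathcal{N}_i\setminus\mathcal{C}|\geq r$. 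But $\mathcal{N}_i\setminus\mathcal{C}\subseteq\mathcal{F}$ consists entirely of active nodes, so $i$ has at least $r$ active in-neighbors and the percolation rule forces $i$ to be active, contradicting $i\in\mathcal{C}=\mathcal{V}\setminus\mathcal{F}$. Hence $\mathcal{F}=\mathcal{V}$ and $\mathcal{I}$ percolates.

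For the converse I would argue the contrapositive: if $\mathcal{G}$ is \emph{not} strongly $r$-robust w.r.t. $\mathcal{I}$, there is a non-empty $\mathcal{C}\subseteq\mathcal{V}\setminus\mathcal{I}$ that is not $r$-reachable, meaning $|\mathcal{N}_i\setminus\mathcal{C}|<r$ for every $i\in\mathcal{C}$. I would then show that no node of $\mathcal{C}$ can ever activate, by examining the first round at which some node of $\mathcal{C}$ would flip. If $i\in\mathcal{C}$ is the earliest such node, then at the instant of its activation no member of $\mathcal{C}$ is yet active, so all of $i$'s active in-neighbors lie in $\mathcal{N}_i\setminus\mathcal{C}$, a set of size strictly less than $r$; this violates the activation threshold, a contradiction. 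Since $\mathcal{C}$ is non-empty, disjoint from $\mathcal{I}$, and thus inactive at the start and forever after, $\mathcal{I}$ fails to percolate.

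The only genuine care is in the bookkeeping of the directed-neighborhood convention ($\mathcal{N}_i$ being the in-neighbor set, so that ``$r$ active neighbors'' and ``$r$ neighbors outside $\mathcal{C}$'' refer to the same incidences) and in the ``first activation'' step of the converse, where one must verify that \emph{all} of $i$'s active in-neighbors are outside $\mathcal{C}$; this is exactly where minimality of the activation time is used. Beyond that the argument is purely combinatorial and the hypothesis $r\geq 2$ plays no essential role, being inherited only from the standard formulation of bootstrap percolation.
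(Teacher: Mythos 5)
Your proof is correct and follows essentially the same route the paper intends: the paper omits the proof of Lemma~\ref{lemma:bootstrap} but states that it ``follows similar arguments as Lemma~\ref{lemma:type1}'', whose argument is exactly your forward direction (a non-empty never-activated set $\mathcal{C}\subseteq\mathcal{V}\setminus\mathcal{I}$ must be $r$-reachable, forcing some node of $\mathcal{C}$ to activate, a contradiction). Your converse via the first-activation-time argument on a non-$r$-reachable set is the natural complementary half and is handled correctly, including the point that activations in a round depend only on the active set of the previous round.
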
 
The proof of the above result follows similar arguments as Lemma \ref{lemma:type1}, and is hence omitted. Leveraging Lemma \ref{lemma:bootstrap}, we obtain the following result.
\begin{proposition}
Given a system matrix $\mathbf{A} \in \mathbb{R}^{n \times n}$ \eqref{eqn:plant}, a measurement model \eqref{eqn:Obsmodel}, a communication graph $\mathcal{G}=(\mathcal{V,E})$ with $|\mathcal{V}|=N$, the source set $\mathcal{S}_j$ for each $\lambda_j\in sp(\mathbf{A})$, and an integer $r\geq 2$, one can verify whether the network is $r$-feasible in O$(nN|\mathcal{E}|)$ time.
\end{proposition}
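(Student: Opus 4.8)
The plan is to reduce the verification of $r$-feasibility to running the bootstrap percolation process once per eigenvalue and then to bound the cost of each such run. Recall that $\mathcal{G}$ is $r$-feasible precisely when it is strongly $r$-robust w.r.t. $\mathcal{S}_j$ for every $\lambda_j \in \Omega_{U}(\mathbf{A})$. By Lemma \ref{lemma:bootstrap}, strong $r$-robustness w.r.t. $\mathcal{S}_j$ is equivalent to the initial set $\mathcal{S}_j$ percolating under the threshold-$r$ bootstrap percolation dynamics. Hence the verification reduces to the following procedure: for each distinct eigenvalue $\lambda_j \in sp(\mathbf{A})$, initialize the active set to the supplied source set $\mathcal{S}_j$, simulate bootstrap percolation to completion, and test whether the terminal active set equals $\mathcal{V}$; then declare $\mathcal{G}$ to be $r$-feasible if and only if every test succeeds. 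Eigenvalues outside $\Omega_{U}(\mathbf{A})$ (including all stable modes) have $\mathcal{S}_j = \mathcal{V}$, so their percolation test passes trivially and they may be skipped or retained without affecting correctness.

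Next I would bound the cost of a single percolation run. A direct round-based simulation maintains the current active set and, in each round, scans the inactive nodes to activate every node possessing at least $r$ active neighbors; a round costs $O(|\mathcal{E}|)$ time, since the total work is proportional to the sum of the node degrees. Because at least one new node must activate in every non-terminal round (otherwise the process has already stabilized) and activations are monotone, the number of rounds is at most $N$. Therefore each percolation run terminates in $O(N|\mathcal{E}|)$ time, and correctness of the single-run test follows immediately from the equivalence in Lemma \ref{lemma:bootstrap} applied to the source set $\mathcal{S}_j$.

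Finally, since $\mathbf{A} \in \mathbb{R}^{n \times n}$ has at most $n$ distinct eigenvalues and the source sets are part of the input, the overall procedure performs at most $n$ percolation runs, giving a total running time of $O(nN|\mathcal{E}|)$, as claimed. The main point to get right here is the complexity accounting rather than any deep combinatorics: one must argue carefully that each percolation round admits an $O(|\mathcal{E}|)$ implementation and that the number of rounds is bounded by $N$. (An event-driven implementation that maintains, for each node, a running count of its active neighbors and activates a node the moment its count reaches $r$ would even achieve $O(|\mathcal{E}|)$ per run, but the stated bound already follows from the naive round-based analysis.)
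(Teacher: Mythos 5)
Your proposal is correct and follows essentially the same route as the paper: both invoke Lemma \ref{lemma:bootstrap} to reduce strong $r$-robustness w.r.t.\ each $\mathcal{S}_j$ to a bootstrap percolation test, bound the number of source sets by $n$ (since $|\Omega_U(\mathbf{A})| \leq n$), and charge $O(|\mathcal{E}|)$ per round over at most $N$ rounds for each simulation. Your added observation that an event-driven counter-based implementation achieves $O(|\mathcal{E}|)$ per run is a nice refinement beyond the paper's stated bound, but the core argument is identical.
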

\vspace{-3mm}
\begin{proof}
Notice that $|\Omega_{U}(\mathbf{A})| \leq n$, i.e., there are at most $n$ source sets $\mathcal{S}_j$ for which we need to verify the strong $r$-robustness property in Definition \ref{defn:strongrobust}. Based on Lemma \ref{lemma:bootstrap}, for each $\mathcal{S}_j$ corresponding to some $\lambda_j \in \Omega_{U}(\mathbf{A})$, verifying whether $\mathcal{G}$ is strongly $r$-robust w.r.t. $\mathcal{S}_j$ is equivalent to verifying whether $\mathcal{S}_j$ percolates via the process of bootstrap percolation with threshold $r$. Thus, we analyze the complexity of simulating a bootstrap percolation process on a given network.\footnote{ Algorithm 1 essentially simulates the evolution of a bootstrap percolation process with threshold $r=(2f+1)$, provided there is no adversarial activity during the distributed implementation of such an algorithm.} First, notice that it takes at most $N$ iterations/rounds for a bootstrap percolation process to terminate on a network of $N$ nodes. In each round, every inactive node checks whether it has at least $r$ active neighbors; the entire process of checking is thus completed in \textit{O}$(\sum^{N}_{i=1}d_i)$ = \textit{O}$(|\mathcal{E}|)$ time, where $d_i$ represents the in-degree of node $i$. Thus, for a given initial set, it takes \textit{O}$(N|\mathcal{E}|)$ time to simulate the bootstrap percolation process. The result then follows readily.\end{proof}
\begin{remark} Based on the above result, one can check whether the approach developed in this paper is applicable for a given system and network in polynomial time.\footnote{This result is in stark contrast with analogous results existing in the resilient distributed consensus \cite{rescons,vaidyacons} and optimization \cite{Sundaramopt,su} literature, since checking the `robustness' condition needed for solving such problems is coNP-complete.}Interestingly, leveraging the equivalence described in Lemma \ref{lemma:bootstrap}, it is possible to show that the strong $r$-robustness property described in Definition \ref{defn:strongrobust} is exhibited by various large-scale complex network models such as 
 the Barab\' asi-Albert (BA) preferential attachment model, the Erd\H os-R\'enyi random graph model, and the $2$-dimensional random geometric graph model. A detailed discussion on this topic can be found in Appendix \ref{app:random}.
\end{remark}
\section{Simulations}
\begin{figure}[t]
\hspace*{-6mm} 
\begin{tabular}{c c}
\includegraphics[scale=0.09]{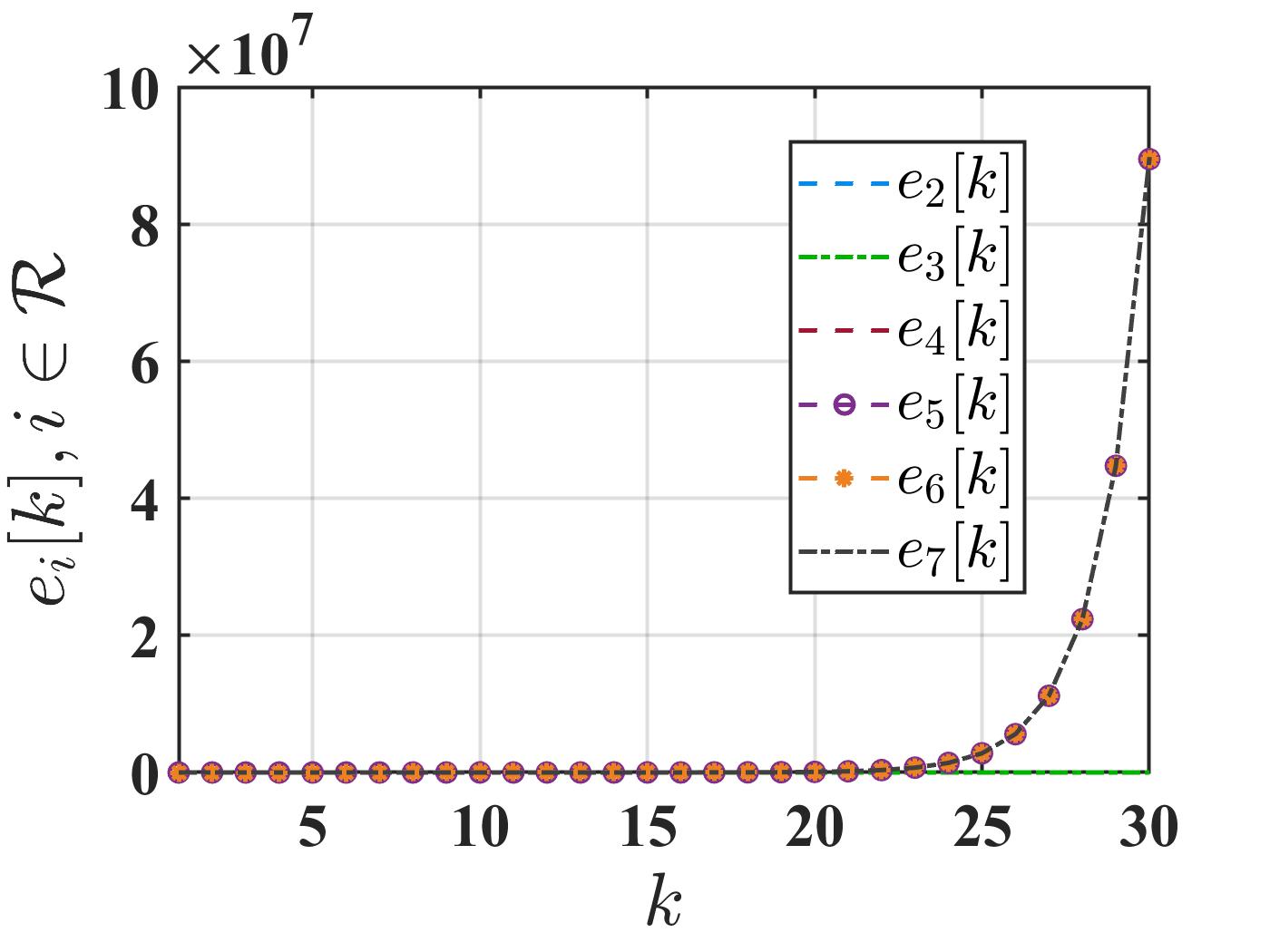}&\hspace{-3mm}\includegraphics[scale=0.09]{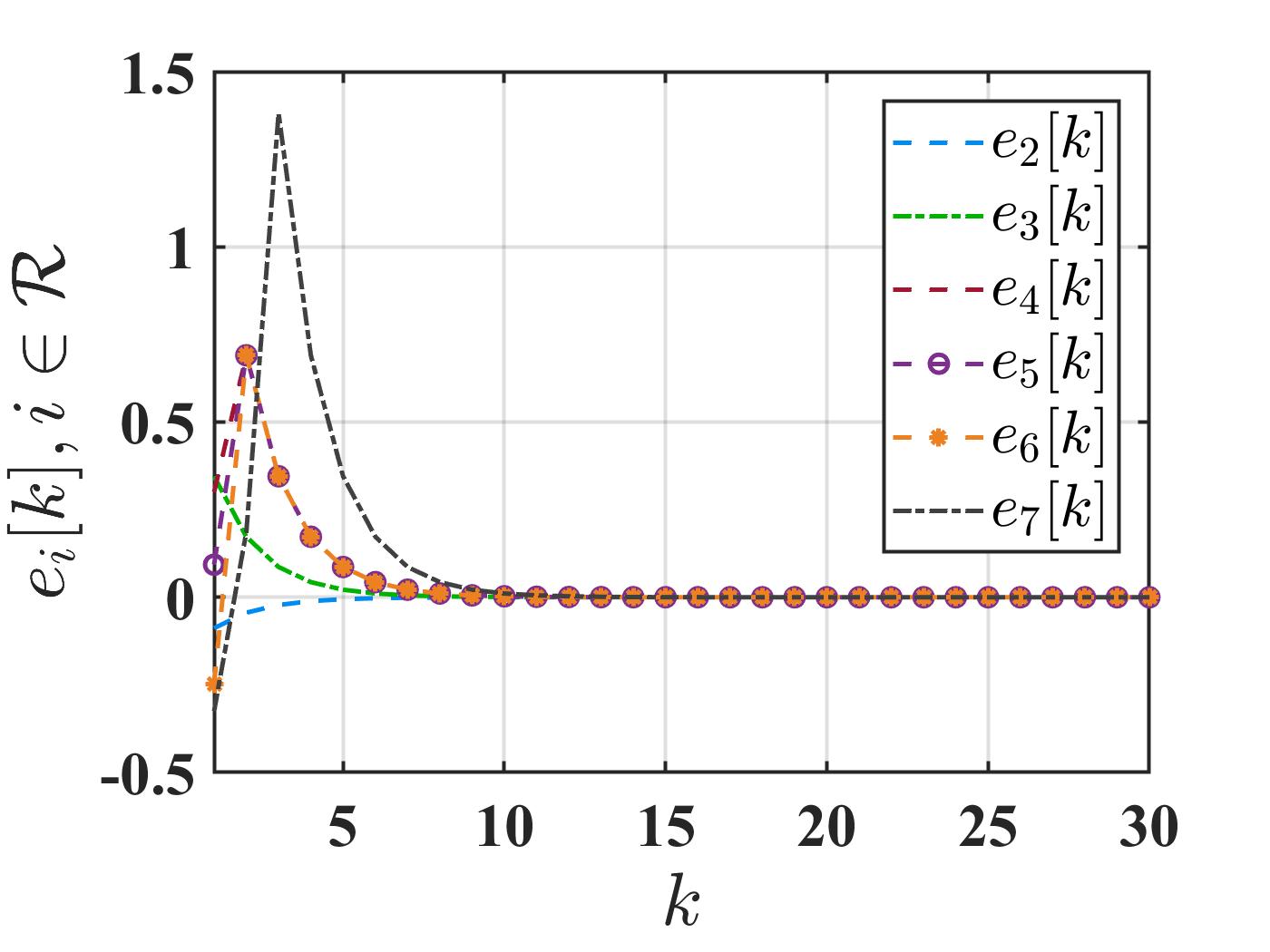}\\
(a)&(b)
\end{tabular}
\caption{{Consider the system and network in Fig. \ref{fig:sim_DAG}. Fig. (a) depicts how a single adversary, namely node 1, can cause the estimation errors of all the non-source regular nodes (namely, nodes 4-7) to diverge when a non-resilient distributed observer is employed. Fig. (b) shows hows the proposed LFRE algorithm counteracts the effect of the adversary.}}
\label{fig:sim}
\end{figure}
\vspace{-3mm}
{
Consider the system and network given by Figure \ref{fig:sim_DAG}. The state evolves as $x[k+1]=ax[k]$, with $a=2$. Nodes 1, 2 and 3 are the source nodes and directly estimate the state, i.e., $y_i[k]=x[k], \forall i\in \{1,2,3\}$. The rest of the nodes have zero measurements. Node $1$ is the only adversarial node in the network, and it simply transmits a constant signal of magnitude $\epsilon=0.001$ to each of its neighbors at every time-step. Each regular source node updates its state estimate based on a standard Luenberger observer as follows:
\begin{equation}
\hat{x}_i[k+1]=a\hat{x}_i[k]+l_i(y_i[k]-\hat{x}_i[k]), i\in\{2,3\},
\label{eqn:sim1}
\end{equation}
with the observer gain $l_i$ set to $1.5$ (this gain is simply chosen to ensure stability). We first consider a scenario where each non-source node updates its estimate as follows: $\hat{x}_i[k+1]=a\sum_{l\in\mathcal{N}^{(j)}_i}w^{(j)}_{il}\hat{x}_l[k]$, where the weights form a convex combination, and $\mathcal{N}^{(j)}_i$ represents the neighbors of node $i$ in the MEDAG shown in Figure \ref{fig:sim_DAG}(b).\footnote{For this scalar system, there is only one mode, i.e., $j=1$.} If node $1$ were to update its estimate as per \eqref{eqn:sim1}, then it can be easily verified analytically that all nodes would be able to track the state asymptotically (see \cite{mitraarchive} for details). However, as seen from Figure \ref{fig:sim}(a), a single adversarial node (node 1 in this case) transmitting a small constant signal can cause the estimates of all the non-source nodes to diverge. This example demonstrates that although the underlying network is strongly $3$-robust (i.e., has enough built-in redundancy to deal with a single adversarial node\footnote{For this example, we assume that node 1 misbehaves only during the estimation phase. Hence, Theorem \ref{thm:tight} is applicable.}), the non-resilient distributed observer employed above proves to be inadequate in the face of attacks. However, as seen from Figure \ref{fig:sim}(b), the LFRE algorithm  complements the robust network structure and succeeds in counteracting the adversarial attack. For all simulations, $x[0]=0.5$, and $\hat{x}_i[0], i\in\mathcal{V}$ is a random number between $0$ and $1$.}
\section{Conclusions}
\vspace{-4mm}
We studied the problem of collaboratively estimating the state of an LTI system subject to worst-case adversarial behavior. For the attack models under consideration, we identified certain necessary conditions that need to be satisfied by any system and network for the problem posed in this paper to have a feasible solution. We then developed a local-filtering algorithm to enable each non-compromised node to recover the entire state. Finally, using a topological property called strong $r$-robustness, we characterized networks that guarantee success of our proposed strategy. Two notable features of our approach are as follows: (i) each step of our approach admits an attack-resilient, completely distributed implementation provided certain graph-theoretic conditions are met; and (ii) these graph-theoretic conditions can be checked in polynomial time as discussed in the previous section.

There are various interesting directions for future research, some of which were pointed out in Remark \ref{rem:directions}. Finding an algorithm-independent necessary and sufficient condition for the problem posed in this paper will likely be a challenging proposition. Whereas the focus of this paper has been on obtaining sufficient graph-theoretic conditions that account for worst-case adversarial behavior, it might be of interest to see if such conditions can be relaxed when confronted with less sophisticated adversarial attacks. Extensions of our framework to account for network-induced issues such as packet-drops, delays and asynchronicity also merit attention; see \cite{mitraACC18} for preliminary results on this topic.
\bibliographystyle{unsrt}
\bibliography{refs} 
\appendix
\section{Proof of Theorem \ref{thm:ftotal}}
\label{app:proofthmftotal}
\begin{proof}
``(i)$\Longrightarrow$(ii)" We prove by contraposition. Suppose statement (ii) is violated for some node $i \in \mathcal{V}$, i.e., there exists a set $\mathcal{D}_i$ such that its removal from $\mathcal{G}$ causes the pair $(\mathbf{A},\mathbf{C}_{i\cup\mathcal{P}_i})$ to become undetectable (where $\mathcal{D}_i$ and $\mathcal{P}_i$ have the same meaning as in the statement of Theorem \ref{thm:ftotal}). It then follows that $\mathcal{F}=\mathcal{V}\setminus\{i\cup\mathcal{P}_i\}$ is a critical set. Suppose it is also a minimal critical set. We construct $\mathcal{G}^{'}$ by adding directed edges from a virtual node $s$ to each node in $\mathcal{F}$.\footnote{Throughout this proof, we drop the subscript on $\mathcal{G}^{'}, \mathcal{F}$ and $s$, unlike the notation in Section \ref{sec:fundamental}. This is done since the subscript $i$ is used to denote sets defined w.r.t. a node $i$ in this proof.} Observe that $\mathcal{H}=\mathcal{D}_i$  satisfies all the properties of an $f$-total pair cut w.r.t. $s$. In particular, $\mathcal{Y}=\{i\cup\mathcal{P}_i\}$ and $\mathcal{X}=\{\mathcal{V}\setminus{\{\mathcal{D}_i\cup\mathcal{Y}\}}\}\cup\{s\}.$ Thus, statement (i) is violated. A similar argument holds when $\mathcal{F}$ contains a minimal critical set.

``(i)$\Longleftarrow$(ii)" We again prove by contraposition. Suppose statement (i) is violated, i.e., there exists an $f$-total pair cut $\mathcal{H}$ w.r.t. a virtual node $s$ corresponding to some minimal critical set $\mathcal{F}$. Consider a node $i$ in $\mathcal{Y}$ (recall that $\mathcal{Y}$ is non-empty based on Definition \ref{defnt:cut}). First consider the case when node $i$ is not reachable from any node in $\mathcal{H}$ in the graph $\mathcal{G}$. It then follows that in the graph $\mathcal{G}$, directed paths to node $i$ can only exist from the set $\mathcal{Y}$. But since $i\in\mathcal{Y}$ and $(\mathbf{A},\mathbf{C}_{\mathcal{Y}})$ is not detectable, it is trivially impossible for node $i$ to estimate the state. We thus focus on the case where node $i$ is reachable from a certain set of nodes, say $\mathcal{D}_i$, within the set $\mathcal{H}$. Since $|\mathcal{H}|\leq 2f$ and $\mathcal{D}_i\subseteq\mathcal{H}$, we have that $|\mathcal{D}_i|\leq 2f$. It can be easily argued that the removal of $\mathcal{D}_i$ from $\mathcal{G}$ results in an induced subgraph where node $i$ can only be reached from the set $\mathcal{Y}$. In other words, the set $\mathcal{P}_i$, as defined in the statement of Theorem \ref{thm:ftotal}, is a subset of $\mathcal{Y}$. As $(\mathbf{A},\mathbf{C}_{\mathcal{Y}})$ is not detectable, it follows that $(\mathbf{A},\mathbf{C}_{i\cup\mathcal{P}_i})$ is not detectable either, and thus statement (ii) is violated. 

\end{proof}
\section{Proof of Theorem \ref{thm:consensus}}
\label{app:proofthm4}
\begin{proof} Let $\mathcal{A}$ be the (unknown) set of $f$-local adversaries, and consider $\mathcal{R}=\mathcal{V}\setminus\mathcal{A}$. Given a node $i\in\mathcal{R}$, the state vector $\mathbf{z}[k]$ can be partitioned into the components $\mathbf{z}_{\mathcal{O}_i}[k]$ and $\mathbf{z}_{\overline{\mathcal{O}}_i}[k]$ that correspond to the detectable and undetectable eigenvalues, respectively, of node $i$. Based on Lemma \ref{lemma:luen}, we know that node $i$ can estimate $\mathbf{z}_{\mathcal{O}_i}[k]$ asymptotically via a locally constructed Luenberger observer. It remains to show that node $i$ can recover $\mathbf{z}_{\overline{\mathcal{O}}_i}[k]$, or in other words, for each $\lambda_j \in \overline{\mathcal{O}}_i$, we need to prove that $\lim_{k\to\infty}||\hat{\mathbf{z}}^{(j)}_i[k]-\mathbf{z}^{(j)}[k]||=0$. To this end, consider a non-real eigenvalue $\lambda_j \in \Omega_{U}(\mathbf{A})$. As $\mathcal{G}$ contains a MEDAG for each $\lambda_j \in \Omega_{U}(\mathbf{A})$, the sets $\{\mathcal{L}^{(j)}_{0}, \mathcal{L}^{(j)}_{1}, \cdots, \mathcal{L}^{(j)}_{q}, \cdots \mathcal{L}^{(j)}_{T_j}\}$ form a partition of the set $\mathcal{R}$. We prove that each node in $\mathcal{R}$ can asymptotically estimate $\mathbf{z}^{(j)}[k]$ by inducting on the level number $q$.

For $q=0$, by definition of the set $\mathcal{L}^{(j)}_{0}$, all nodes in $\mathcal{L}^{(j)}_{0}$ are regular and belong to the set $\mathcal{S}_j$, i.e., $\lambda_j \in \mathcal{O}_{i}$ for each node $i$ in $\mathcal{L}^{(j)}_{0}$. Thus, by  Lemma \ref{lemma:luen}, each node in level $0$ can estimate $\mathbf{z}^{(j)}[k]$ asymptotically. Notice that for any node $i$ belonging to a level $q$, where $1 \leq  q \leq T_j$, we have $\lambda_j \in \overline{\mathcal{O}}_i$. Consider a node $i$ in $\mathcal{L}^{(j)}_{1}$ and let its error in estimation of the component $z^{(jm)}[k]$ be denoted by $e^{(jm)}_i[k] \triangleq\hat{z}^{(jm)}_i[k]-z^{(jm)}[k]$. The estimation errors of the individual components are aggregated in the vector $\mathbf{e}^{(j)}_i[k]= \mathbf{\hat{z}}^{(j)}_i[k]-\mathbf{z}^{(j)}[k]$. Subtracting $\mathbf{z}^{(j)}[k+1]$ from both sides of equation \eqref{eq:update2}, noting that $\mathbf{z}^{(j)}[k+1]=\mathbf{W}(\lambda_j)\mathbf{z}^{(j)}[k]$ (based on the dynamics given by (\ref{eqn:plant_tr})), and using \eqref{eqn:update rule1}, we obtain
\begin{equation}
\resizebox{0.8\hsize}{!}{$
\mathbf{e}^{(j)}_i[k+1]=\mathbf{W}(\lambda_j)\underbrace{\begin{bmatrix}\sum_{l \in \mathcal{M}^{(j1)}_i[k]}w^{(j1)}_{il}[k]{e}^{(j1)}_{l}[k] \\ \vdots \\ \sum_{l \in \mathcal{M}^{(j\sigma_j)}_i[k]}w^{(j\sigma_j)}_{il}[k]{e}^{(j\sigma_j)}_{l}[k]\end{bmatrix}}_{\bar{\mathbf{e}}^{(j)}_i[k]},$}
\label{eqn:errconsensus}
\end{equation}
where $\sigma_j=2a_{\mathbf{A}}(\lambda_j)$ (since $\lambda_j$ is non-real). For arriving at \eqref{eqn:errconsensus}, we used the fact that $\sum_{l \in \mathcal{M}^{(jm)}_i[k]}w^{(jm)}_{il}[k]=1$ for every component $m$ of $\mathbf{z}^{(j)}[k]$. We now analyze the error dynamics \eqref{eqn:errconsensus}. To this end, for each component $m$ of the vector $\mathbf{z}^{(j)}[k]$, we partition the set $\mathcal{N}^{(j)}_i$ into the sets $\mathcal{U}^{(jm)}_i[k]$, $\mathcal{J}^{(jm)}_i[k]$, and $\mathcal{M}^{(jm)}_i[k]$, such that the sets $\mathcal{U}^{(jm)}_i[k]$ and $\mathcal{J}^{(jm)}_i[k]$ contain $f$ nodes each, with the highest and lowest estimate values for $z^{(jm)}[k]$ respectively, transmitted to node $i$ at time-step $k$, and $\mathcal{M}^{(jm)}_i[k]$ contains the rest of the nodes in $\mathcal{N}^{(j)}_i.$ According to the update rule (\ref{eqn:update rule1}), node $i$ only uses estimates from the set $\mathcal{M}^{(jm)}_i[k]$ (which is non-empty at all time-steps based on the properties of a MEDAG) to compute the quantity $\bar{z}^{(jm)}_i[k]$. Now, for any component $m$ of $\mathbf{z}^{(j)}[k]$, consider the following two cases.
(i) $\mathcal{M}^{(jm)}_i[k] \cap \mathcal{A} = \emptyset$, \textit{i.e., there are no adversarial nodes in the set} $\mathcal{M}^{(jm)}_i[k]$: in this case, all the nodes in the set $\mathcal{M}^{(jm)}_i[k]$ are regular and belong to $\mathcal{L}^{(j)}_0$ (as $\mathcal{N}^{(j)}_i \cap \mathcal{R} \subseteq \mathcal{L}^{(j)}_{0}$). (ii) $\mathcal{M}^{(jm)}_i[k] \cap \mathcal{A}$ \textit{is non-empty, i.e., there are some adversarial nodes in the set} $\mathcal{M}^{(jm)}_i[k]$: based on the $f$-local adversarial model, it is apparent that each of the sets $\mathcal{U}^{(jm)}_i[k]$ and $\mathcal{J}^{(jm)}_i[k]$ contain at least one regular node belonging to $\mathcal{L}^{(j)}_{0}$. Let $u$ and $v$ be two such regular nodes belonging to $\mathcal{U}^{(jm)}_i[k]$ and $\mathcal{J}^{(jm)}_i[k]$, respectively. Based on the definitions of the sets $\mathcal{U}^{(jm)}_i[k]$, $\mathcal{J}^{(jm)}_i[k]$, and $\mathcal{M}^{(jm)}_i[k]$, we have $\hat{z}^{(jm)}_v[k] \leq \hat{z}^{(jm)}_l[k] \leq \hat{z}^{(jm)}_u[k]$, and hence $e^{(jm)}_v[k] \leq e^{(jm)}_l[k] \leq e^{(jm)}_u[k]$,  for every node $l \in  \mathcal{M}^{(jm)}_i[k]$. In particular, since $u,v \in \mathcal{L}^{(j)}_{0}$, it follows that for any node $l \in  \mathcal{M}^{(jm)}_i[k]$, $e^{(jm)}_{min}[k] \leq e^{(jm)}_l[k] \leq e^{(jm)}_{max}[k]$, where $e^{(jm)}_{min}[k]= \min_{u\in\mathcal{L}^{(j)}_{0}} e^{(jm)}_u[k]$ and   $e^{(jm)}_{max}[k]= \max_{u\in\mathcal{L}^{(j)}_{0}} e^{(jm)}_u[k]$. This property holds for every component $m$ of $\mathbf{z}^{(j)}[k]$. Analyzing each of the two cases, we infer that at every time-step $k$, each component of the vector $\bar{\mathbf{e}}^{(j)}_i[k]$ in \eqref{eqn:errconsensus} lies in the convex hull of the corresponding components of the error vectors $\mathbf{e}^{(j)}_{u}[k], u \in \mathcal{L}^{(j)}_0 = \mathcal{S}_j \cap \mathcal{R}$. Based on Lemma \ref{lemma:luen}, we have that $\lim_{k\to\infty} \mathbf{e}^{(j)}_{u}[k] = \mathbf{0}$, $\forall u \in \mathcal{S}_j \cap \mathcal{R}$, and hence it follows that $\hat{\mathbf{z}}^{(j)}_i[k]$ converges asymptotically to $\mathbf{z}^{(j)}[k]$ for every regular node $i$ in  $\mathcal{L}^{(j)}_{1}$. 

Suppose the result holds for all levels from $0$ to $q$ (where $ 1 \leq q \leq T_j-1 $). It is easy to see that the result holds for all regular nodes in $\mathcal{L}^{(j)}_{q+1}$ as well, by noting the following. (i) A regular node $i \in \mathcal{L}^{(j)}_{q+1}$ has $\mathcal{N}^{(j)}_i \cap \mathcal{R} \subseteq \bigcup^{q}_{r=0}\mathcal{L}^{(j)}_r$. (ii) For each $i\in\mathcal{L}^{(j)}_{q+1}$, a similar analysis reveals that at every every time-step $k$, each component of the vector $\bar{\mathbf{e}}^{(j)}_i[k]$ lies in the convex hull of the corresponding components of the error vectors $\mathbf{e}^{(j)}_u[k], u \in \bigcup^{q}_{r=0} \mathcal{L}^{(j)}_r$. The desired result then follows from the induction hypothesis. An identical argument can be sketched for a real eigenvalue $\lambda_j \in \Omega_{U}(\mathbf{A})$, and thus the result holds for any $\lambda_j \in \Omega_U(\mathbf{A})$. We arrive at the conclusion that every node $i \in \mathcal{R}$ can asymptotically estimate $\mathbf{z}^{(j)}[k]$ for every eigenvalue $\lambda_j \in \overline{\mathcal{O}}_i$. Thus, each node $i \in \mathcal{R}$ can asymptotically estimate $\mathbf{z}[k]$, and hence $\mathbf{x}[k]=\mathbf{Tz}[k]$. \end{proof}

\section{Proof of Theorem \ref{thm:tight}}
\label{app:theotight}
\begin{proof}
For sufficiency, it is easily noted that the conditions stated in the theorem guarantee termination of the MEDAG construction algorithm for every $\lambda_j\in\Omega_{U}(\mathbf{A})$. The rest of the proof for sufficiency follows identical arguments as the proof of Theorem \ref{thm:main}.

For proving necessity, we first note that the proposed algorithm summarized in Section \ref{sec:summary} is applicable only if the MEDAG construction algorithm (Algorithm 1) terminates for each $\lambda_j\in\Omega_{U}(\mathbf{A})$ and returns a subgraph $\mathcal{G}_j$ satisfying the properties of a MEDAG for all $f$-local sets containing $\mathcal{V}\setminus\mathcal{R}'$. Here $\mathcal{R}'$ denotes the set of nodes that behave regularly during the execution of Algorithm 1. Based on the hypothesis of the theorem, since $\mathcal{R}'=\mathcal{V}$, the existence of a MEDAG $\mathcal{G}_j$ $\forall \lambda_j\in\Omega_{U}(\mathbf{A})$ is necessary in this case for running the LFRE algorithm. The rest of the proof proceeds via contradiction. Suppose $\mathcal{G}$ is not strongly $(2f+1)$-robust w.r.t. $\mathcal{S}_j$ for some $\lambda_j \in \Omega_{U}(\mathbf{A})$ and yet there exists a MEDAG $\mathcal{G}_j$ for $\lambda_j$. Since $\mathcal{G}$ is not strongly $(2f+1)$-robust w.r.t. $\lambda_j$, there exists a non-empty set $\mathcal{C} \subseteq \mathcal{V}\setminus\mathcal{S}_j$ that is not $(2f+1)$-reachable. Consider the trivial $f$-local set $\mathcal{A}=\emptyset$. The subgraph $\mathcal{G}_j$ must contain a partition of $\mathcal{R}=\mathcal{V}\setminus\mathcal{A}=\mathcal{V}$ into levels that satisfy the second property of a MEDAG in Definition \ref{defn:MEDAG}. With this point in mind, let $\mathcal{C}$ be partitioned as $\mathcal{C}=\bigcup_{r=1}^{q} \mathcal{F}_r$, where $\mathcal{F}_r=\mathcal{C}\cap\mathcal{L}^{(j)}_{n_r}$ for some set of integers $\{n_1, \cdots, n_q| 1 \leq n_i \leq T_j \, \forall i\in \{1,\cdots,q\}\}$. Here, $\{\mathcal{L}^{(j)}_{n_r}\}^{q}_{r=1}$ represents a subset of the levels that partition $\mathcal{R}$ in the MEDAG $\mathcal{G}_j$ (that exists based on the hypothesis). Without loss of generality, let $n_1 < n_2 < \cdots < n_q$. Then, from the definition of a MEDAG, it follows that for any $i \in \mathcal{F}_{n_1}$, $N^{(j)}_i$ contains elements from only $\mathcal{V}\setminus\mathcal{C}$. As $\mathcal{C}$ is not $(2f+1)$-reachable, $|\mathcal{N}^{(j)}_i| < (2f+1)$, thereby violating the first property of a MEDAG in Definition \ref{defn:MEDAG}. We thus arrive at a contradiction, and the proof is complete.
\end{proof}
\section{The Strong r-robustness Property of Random Graphs}
\label{app:random}
The focus of this section is to address the following question. Given a dynamical system and an associated large-scale complex sensor network monitoring the system, under what conditions is the system and network pair $r$-feasible?  To provide an answer to this question, we study the `strong $r$-robustness' property in three relatively common random graph models for large-scale complex networks, namely the Barab\' asi-Albert (BA) preferential attachment model, the Erd\H os-R\'enyi random graph model, and the $2$-dimensional random geometric graph model.

Consider a scenario where we are given a dynamical system and an associated wireless sensor network such that the system and network pair is $r$-feasible, and hence, resilient to adversarial attacks. We wish to expand the network via addition of more sensors without disrupting the $r$-feasibility property, i.e., we intend to tolerate the same number of adversaries as earlier. As the first property in Proposition \ref{prop:properties} suggests, this can be achieved by continually adding new nodes with incoming edges from at least $r$ nodes in the existing network. The specific construction where the neighbors of a new node are selected with a probability proportional to the number of edges they already have leads to the BA preferential attachment model. Such a model is thought of as a plausible mechanism for the formation of many real-world complex networks \cite{albert}. Based on our foregoing discussion, it then follows that such real-world networks would facilitate the LFRE dynamics introduced in Section \ref{sec:LFSE}, and would hence be resilient to the worst-case attack model considered in this paper.

Next, we turn our attention to one of the most common mathematical models for large-scale networks, namely Erd\H os-R\'enyi random graphs\cite{erdos}. We denote an Erd\H os-R\'enyi random graph on $N$ nodes by $\mathcal{G}_{N,p}$, where all possible edges between pairs of different nodes are present independently and with the same probability $p$. We further note that $p$ is in general a function of the network size $N$. From the perspective of a network designer, we will be interested in answering the following questions. (i) How should the size of the source sets $\mathcal{S}_j$ (for each $\lambda_j \in \Omega_{U}(\mathbf{A})$) scale with the size of the network to maintain $r$-feasibility in an Erd\H os-R\'enyi random graph?  (ii) Which nodes should be chosen as the source nodes? Prior to answering these questions, we briefly remark on the notation to be used for the remainder of this section. The term w.h.p. (with high probability) will be used for events with probability tending to $1$ as $N \rightarrow \infty$. Given two non-negative sequences $a_{N}$ and $b_{N}$, the notation $a_{N} \ll b_{N}$ will convey the same meaning as $a_{N}=o(b_{N})$. To make use of Lemma \ref{lemma:bootstrap}, we first recall a few definitions from \cite{boot1}. Given an integer $r \geq 2$, define
\begin{equation}
T_c(N,p) \triangleq {\left(\frac{(r-1)!}{Np^r}\right)}^{\frac{1}{(r-1)}}, \, A_c(N) \triangleq \left(1-\frac{1}{r}\right)T_c(N,p).
\end{equation}

We have the following result for an Erd\H os-R\'enyi random graph model. 

\begin{proposition} 
\label{prop:random1}
Given an LTI system \eqref{eqn:plant}, a measurement model \eqref{eqn:Obsmodel}, and a network modeled by an Erd\H os-R\'enyi random graph $\mathcal{G}_{N,p}$, suppose that for each $\lambda_j \in \Omega_{U}(\mathbf{A})$, the source set $\mathcal{S}_j(N)$ is chosen randomly.\footnote{By choosing $\mathcal{S}_j(N)$ randomly, we imply that the measurement set needed to detect $\lambda_j$ is allocated to $|\mathcal{S}_j(N)|$ nodes picked uniformly at random. The notation $\mathcal{S}_j(N)$ is used to explicitly point out that the size of the source sets scales with the size of the network.} Then, the following are true.
\begin{enumerate}
\item[(i)] Let $p=p(N)$ be such that $N^{-1} \ll p \ll N^{-\frac{1}{r}}$. If for each $\lambda_j \in \Omega_{U}(\mathbf{A})$, $\frac{|\mathcal{S}_j(N)|}{A_c(N)} \geq 1+\delta$, for some $\delta >0$, and $|\mathcal{S}_j(N)| \leq \frac{N}{2}$, then $\mathcal{G}_{N,p}$ is $r$-feasible w.h.p. if and only if $Np-(ln N+(r-1)ln\, lnN) \rightarrow \infty$ as $N \rightarrow \infty.$
\item[(ii)] Let $p=p(N)$ be such that $p \gg N^{-\frac{1}{r}}$. If for each $\lambda_j \in \Omega_{U}(\mathbf{A})$, $|\mathcal{S}_j(N)| \geq r$, then $\mathcal{G}_{N,p}$ is $r$-feasible w.h.p. 
\end{enumerate}
\end{proposition}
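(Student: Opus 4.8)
The plan is to convert the $r$-feasibility question into a statement about bootstrap percolation and then invoke the sharp threshold results of \cite{boot1}. Combining the definition of an $r$-feasible network (strong $r$-robustness w.r.t.\ every source set, Definition \ref{defn:strongrobust}) with Lemma \ref{lemma:bootstrap}, the network $\mathcal{G}_{N,p}$ is $r$-feasible if and only if, for every $\lambda_j\in\Omega_{U}(\mathbf{A})$, the source set $\mathcal{S}_j(N)$ percolates under threshold-$r$ bootstrap percolation. Since $|\Omega_{U}(\mathbf{A})|\leq n$ is fixed and independent of $N$, a union bound over these finitely many seeds reduces both parts of the claim to analyzing whether a single randomly chosen seed of the prescribed size percolates w.h.p.\ (and, for the ``only if'' direction, fails to percolate w.h.p.).

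For part (ii) I would observe that $p\gg N^{-1/r}$ forces $Np^r\to\infty$, whence $T_c(N,p)\to 0$ and $A_c(N)\to 0$. A seed of size at least $r$ is therefore far above the critical size, and the dense-regime percolation result in \cite{boot1} yields complete percolation w.h.p. In this regime the minimum degree is $\Theta(Np)\gg r$ w.h.p., so no vertex can get stuck for want of neighbors; combining this with the union bound over the at most $n$ source sets gives $r$-feasibility w.h.p.

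For part (i), the hypothesis $|\mathcal{S}_j(N)|/A_c(N)\geq 1+\delta$ places each seed strictly in the supercritical regime, so the fluid-limit analysis of \cite{boot1} guarantees that all but $o(N)$ vertices become active w.h.p.\ (\emph{almost-complete} percolation). The equivalence then hinges on upgrading this to \emph{complete} percolation. The only obstruction to a non-source vertex ever activating is that its degree falls below $r$; the classical minimum-degree threshold for $\mathcal{G}_{N,p}$ states that $\delta_{\min}(\mathcal{G}_{N,p})\geq r$ w.h.p.\ exactly when $Np-(\ln N+(r-1)\ln\ln N)\to\infty$. For the forward implication I would argue contrapositively: if this divergence fails, then w.h.p.\ some vertex has degree strictly less than $r$, and because $|\mathcal{S}_j(N)|\leq N/2$ one can ensure such a vertex lies outside the seed, so percolation is impossible and the network is not $r$-feasible. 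For the converse, the minimum-degree bound combined with almost-complete percolation yields complete percolation via the complete-percolation refinement in \cite{boot1}.

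I expect the main obstacle to be precisely this passage from almost-complete to complete percolation. The supercritical seed condition only certifies that a $1-o(1)$ fraction of vertices activate; ruling out small residual ``stuck'' sets — collections of mutually inactive vertices, each failing to accumulate $r$ active neighbors even though the bulk is active — is the delicate step, and it is exactly this phenomenon that the minimum-degree threshold controls. A secondary point requiring care is the union-bound reduction: the seeds $\mathcal{S}_j(N)$ share the common random edge set of $\mathcal{G}_{N,p}$, so the per-seed percolation events are dependent, but since their number is bounded by the constant $n$ this dependence is harmless for the w.h.p.\ conclusion.
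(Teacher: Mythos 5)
Your proposal is correct and follows essentially the same route as the paper: reduce $r$-feasibility to threshold-$r$ bootstrap percolation of each source set via Lemma~\ref{lemma:bootstrap}, take a union bound over the at most $n$ eigenvalues, and invoke the sharp-threshold results of \cite{boot1} (Theorem~3.2 for part~(i), Theorem~5.8 for part~(ii)). The extra structure you describe --- supercritical seeds giving almost-complete percolation, the minimum-degree threshold $Np-(\ln N+(r-1)\ln\ln N)\to\infty$ governing the upgrade to complete percolation, and the contrapositive for the ``only if'' direction --- is precisely the internal content of the cited Theorem~3.2, which the paper uses as a black box.
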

\begin{proof} (i) If the conditions in part (i) are met, then for each $\lambda_j \in \Omega_{U}(\mathbf{A})$, $\mathcal{S}_j$ percolates via bootstrap percolation with threshold $r$ on $\mathcal{G}_{N,p}$ w.h.p. based on \cite[Theorem 3.2]{boot1}. Lemma \ref{lemma:bootstrap} then implies that $\mathcal{G}_{N,p}$ is strongly $r$-robust w.r.t. each such source set $\mathcal{S}_j$ w.h.p., i.e., $\mathcal{G}_{N,p}$ is $r$-feasible w.h.p. . The proof for part (ii) follows similarly by leveraging \cite[Theorem 5.8]{boot1} and Lemma \ref{lemma:bootstrap}. 
\end{proof}

\begin{remark}
We glean the following insights from the above result. First, we observe that if either condition (i) or condition (ii) is met, then our proposed algorithm will enable each regular node to asymptotically estimate the state of the system w.h.p. in the presence of any $\lfloor \frac{r-1}{3} \rfloor$ locally-bounded set of Byzantine adversaries. This is a direct consequence of Theorem \ref{thm:main}. The first part of Proposition \ref{prop:random1} indicates that although the source sets can be chosen randomly, their size needs to scale appropriately with the size of the network to maintain $r$-feasibility. The second part states that if the probability of edge formation is large enough, then it suffices to pick source sets of constant size equal to the bare minimum required for achieving $r$-feasibility (which equals $r$ based on part (ii) of Proposition \ref{prop:properties}).
\end{remark}

Among the three random graph models mentioned earlier, the one most relevant to our cause is the two-dimensional random geometric graph (RGG) model \cite{boot2}. RGGs are typically used to model networks where a notion of spatial proximity governs the interaction between the nodes. A wireless sensor network where randomly deployed nodes communicate with nodes only in a geographical vicinity, constitutes an ideal setup for an RGG model \cite{pottie}. We will consider a two-dimensional RGG model generated by first placing $N$ nodes randomly within the unit square $[0,1]^{2}$. Undirected edges are placed between two nodes if and only if the Euclidean distance between such nodes is at most $d(N)$, where $d(N)$ is a positive number that may depend on the network size $N$. We will denote such a RGG by $\mathcal{G}_{N,d(N)}$.

Like the Erd\H os-R\'enyi case, our focus will be on understanding how the source sets should be chosen to ensure $r$-feasibility of $\mathcal{G}_{N,d(N)}$ with high probability. To provide such a characterization, we first recall a few functions from \cite{boot2}. Let $H(x) \triangleq x \, lnx-x+1$ be defined on $[0,\infty)$ and $J(x) \triangleq lnx-1+x^{-1}$ be defined on $(0,\infty)$. Furthermore, let $J^{-1}_R: [0,\infty) \rightarrow [1,\infty)$ denote the inverse of $J(x)$ when the domain of $J(x)$ is $[1,\infty)$. We then have the following result. 

\begin{proposition} 
\label{prop:random2}
Given an LTI system \eqref{eqn:plant} and a measurement model \eqref{eqn:Obsmodel}, let the communication graph be modeled by the RGG $\mathcal{G}_{N,d(N)}$, where $d(N)=\sqrt{\frac{a\, ln N}{\pi N}}$ and $a > 1$. For each $\lambda_j \in \Omega_{U}(\mathbf{A})$, let a  node be chosen as a source node for $\lambda_j$ with a probability $p$ independently of the other nodes in the network. Let $r=\gamma a \, ln N$, where $\gamma \in (0, \frac{1}{5\pi})$. Suppose $a \geq \frac{5\pi}{H(5\pi\gamma)}$ and 
\begin{equation}
p \geq \min\left\lbrace\gamma, \frac{5\pi\gamma}{J^{-1}_{R}(\frac{1}{a\gamma})} \right \rbrace.
\end{equation}
Then, $\mathcal{G}_{N,d(N)}$ is $r$-feasible.\footnote{A RGG $\mathcal{G}_{N,d(N)}$ is connected w.h.p. for $d(N) > \sqrt{\frac{ln N}{\pi N}}$. The choice of $a > 1$ thus allows one to deal with an asymptotically connected $\mathcal{G}_{N,d(N)}$.}
\end{proposition}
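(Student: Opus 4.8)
The plan is to reduce the $r$-feasibility of $\mathcal{G}_{N,d(N)}$ to a percolation statement and then invoke the appropriate threshold result for bootstrap percolation on random geometric graphs from \cite{boot2}. By the definition of an $r$-feasible network, $\mathcal{G}_{N,d(N)}$ is $r$-feasible precisely when it is strongly $r$-robust with respect to $\mathcal{S}_j$ for every $\lambda_j \in \Omega_{U}(\mathbf{A})$. By Lemma \ref{lemma:bootstrap}, strong $r$-robustness with respect to $\mathcal{S}_j$ is equivalent to $\mathcal{S}_j$ percolating under bootstrap percolation with threshold $r$. Thus the entire claim reduces to showing that, under the stated hypotheses, each source set $\mathcal{S}_j$ percolates on the RGG with high probability.

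First I would fix a single eigenvalue $\lambda_j \in \Omega_{U}(\mathbf{A})$ and treat $\mathcal{S}_j$ as a set of initially active nodes, each node being active independently with probability $p$. The choices $d(N)=\sqrt{a\,\ln N/(\pi N)}$ and $r=\gamma a\,\ln N$ place us exactly in the parameter regime analyzed in \cite{boot2}: the expected number of neighbors of an interior node scales like $\pi N d(N)^2 = a\,\ln N$, so the threshold $r$ is a constant fraction $\gamma$ of the expected degree. I would then verify that the two hypotheses, namely $a \geq 5\pi/H(5\pi\gamma)$ and $p \geq \min\{\gamma,\ 5\pi\gamma/J^{-1}_R(1/(a\gamma))\}$, are precisely the sufficient conditions for percolation given in the relevant theorem of \cite{boot2}; matching the rate functions $H$ and $J$ (and the branch $J^{-1}_R$) to the large-deviation estimates used there is where the technical content resides. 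Applying that theorem yields that $\mathcal{S}_j$ percolates with high probability.

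Finally, I would remove the restriction to a single eigenvalue. Since $|\Omega_{U}(\mathbf{A})| \leq n$ is finite and independent of $N$, a union bound over the at-most-$n$ source sets shows that all of them percolate simultaneously with high probability. Combining this with Lemma \ref{lemma:bootstrap} gives that $\mathcal{G}_{N,d(N)}$ is strongly $r$-robust with respect to $\mathcal{S}_j$ for every $\lambda_j \in \Omega_{U}(\mathbf{A})$, and hence $r$-feasible with high probability, as desired.

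The hard part will be the faithful translation of the hypotheses into the framework of \cite{boot2}: one must check that the geometric scaling of $d(N)$, the linear-in-$\ln N$ threshold $r$, and the seeding probability $p$ satisfy the precise inequalities under which the cited percolation theorem guarantees full activation, and in particular that $J^{-1}_R(1/(a\gamma))$ is well-defined on the relevant domain (which is why the branch of $J$ on $[1,\infty)$ is singled out). Additional care is needed because activation spreads through \emph{spatial} neighborhoods rather than through an Erd\H{o}s--R\'enyi neighborhood, so the relevant bounds controlling the probability that a sufficiently dense cluster of seeds fails to ignite its surrounding region are genuinely geometric; these are exactly the large-deviation estimates packaged into the functions $H$ and $J$, and the constant $5\pi$ reflects the cell-covering argument underlying those estimates.
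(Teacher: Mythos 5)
Your proposal is correct and follows essentially the same route as the paper's own proof: for each $\lambda_j \in \Omega_{U}(\mathbf{A})$, the stated hypotheses are exactly the sufficient conditions of the percolation theorem for random geometric graphs in \cite[Theorem 4]{boot2}, so each $\mathcal{S}_j$ percolates w.h.p., and Lemma \ref{lemma:bootstrap} converts percolation into strong $r$-robustness w.r.t.\ $\mathcal{S}_j$ (your explicit union bound over the at-most-$n$ eigenvalues is implicit in the paper's argument). The paper's proof is just these two steps, so your additional remarks about matching the rate functions $H$, $J$ and the branch $J^{-1}_R$ are bookkeeping internal to the cited theorem rather than extra content.
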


\begin{proof} For each $\lambda_j \in \Omega_{U}(\mathbf{A})$, if the source set $\mathcal{S}_j$ is chosen as described above, then it percolates $\mathcal{G}_{N,d(N)}$ w.h.p. if the conditions of the proposition are met \cite[Theorem 4]{boot2}. The result then follows from Lemma \ref{lemma:bootstrap}.
\end{proof}

\begin{remark}
In an attack-prone wireless sensor network, one might be interested in tolerating $f$-local adversarial sets where the paramater $f$ scales with the size of the network. Such a possibility is captured by  Proposition \ref{prop:random2}, based on which, $\lfloor \frac{\gamma a \, ln N-1}{3} \rfloor$-local Byzantine adversarial sets can be accounted for by our proposed algorithm.
\end{remark}
\end{document}